\algrenewcommand\textproc{\textsf}
\newtheorem{theorem}{Theorem}[section]
\newtheorem{lemma}[theorem]{Lemma}
\newtheorem{corollary}[theorem]{Corollary}
\newtheorem{claim}{Claim}
\newtheorem{observation}{Observation}
\theoremstyle{plain}
\newcommand{\kn}{K}
\newcommand{\opt}{\textup{\rm OPT}}
\newcommand{\bbN}{\mathbb{N}}
\newcommand{\bbR}{\mathbb{R}}
\newcommand{\set}[1]{\{#1\}}
\newcommand{\I}{\mathcal{I}}
\newcommand{\optre}{\textup{\rm OPT} - o_1 -o_2}
\newcommand{\ratioep}{(0.5-O(\varepsilon))}
\newcommand{\uc}{\underline{c}}
\newcommand{\oc}{\overline{c}}
\newcommand{\utau}{\underline{\tau}}
\newcommand{\otau}{\overline{\tau}}
\newcommand{\cs}{c_{\rm s}}
\newcommand{\ocs}{\overline{\cs}}
\newcommand{\ucs}{\underline{\cs}}
\newcommand{\cmax}{1-\varepsilon / \delta}
\newcommand{\deltagap}{1-\delta}
\DeclareMathOperator*{\argmax}{arg\,max}
\title{Multi-Pass Streaming Algorithms for\\ Monotone Submodular Function Maximization}
\author{
  Chien-Chung Huang  \\ CNRS, \'{E}cole Normale Sup\'{e}rieure \\  \texttt{villars@gmail.com}
  \and
  Naonori Kakimura\thanks{Supported by JST ERATO Grant Number JPMJER1201, Japan, and by JSPS KAKENHI Grant Number JP17K00028.}\\ Keio University\\  \texttt{kakimura@math.keio.ac.jp}
}
\begin{document}
\maketitle

\begin{abstract}
We consider maximizing a monotone submodular function under a cardinality constraint or a knapsack constraint in the streaming setting. In particular, the elements arrive sequentially and at any point of time, the algorithm has access to only a small fraction of the data stored in primary memory. We propose the following streaming algorithms taking $O(\varepsilon^{-1})$ passes:
\begin{enumerate}
\item a $(1-e^{-1}-\varepsilon)$-approximation algorithm for the cardinality-constrained problem
\item a $(0.5-\varepsilon)$-approximation algorithm for the knapsack-constrained problem.
\end{enumerate}
Both of our algorithms run in $O^\ast(n)$ time, using $O^\ast(K)$ space, where $n$ is the size of the ground set and $K$ is the size of the knapsack. 
Here the term $O^\ast$ hides a polynomial of $\log K$ and $\varepsilon^{-1}$.
Our streaming algorithms can also be used as fast approximation algorithms. In particular, 
for the cardinality-constrained problem, our algorithm takes $O(n\varepsilon^{-1} \log  (\varepsilon^{-1}\log K) )$ time, improving 
on the algorithm of Badanidiyuru and Vondr\'{a}k that takes $O(n \varepsilon^{-1}  \log (\varepsilon^{-1} K) )$ time. 
 \end{abstract}

\section{Introduction}

A set function $f:2^E \rightarrow \mathbb{R}_+$ on a ground set $E$ is \emph{submodular} if it satisfies the \emph{diminishing marginal return property}, i.e., for any subsets $S \subseteq T \subsetneq E$ and $e\in E \setminus T$, 
\[
f(S \cup \set{e}) - f(S)\geq f(T \cup \set{e}) - f(T).
\]
A function is \emph{monotone} if $f(S)\leq f(T)$ for any $S\subseteq T$.
Submodular functions play a fundamental role in combinatorial optimization, as they capture rank functions of matroids,  edge cuts of graphs, and set coverage, just to name a few examples.
Besides their theoretical interests, submodular functions have attracted much attention from the machine learning community because they can model various practical problems such as online advertising~\cite{Alon:2012em,Kempe:2003iu,Soma:2014tp}, sensor location~\cite{Krause:2008vo}, text summarization~\cite{Lin:2010wpa,Lin:2011wt}, and maximum entropy sampling~\cite{Lee:2006cm}.

Many of the aforementioned applications can be formulated as the maximization of a monotone submodular function under a knapsack constraint.
In this problem, we are given a monotone submodular function $f:2^E \to \bbR_+$, a size function $c:E \rightarrow \bbN$, and an integer $K \in \bbN$, where $\bbN$ denotes the set of positive integers.
The problem is defined as
\begin{align}
  \text{maximize\ \  }f(S) \quad \text{subject to \ } c(S)\leq K, \quad S\subseteq E,
  \label{eq:problem}
\end{align}
where we denote $c(S)=\sum_{e\in S}c(e)$ for a subset $S \subseteq E$.
Throughout this paper, we assume that every item $e \in E$ satisfies $c(e) \leq K$ as otherwise we can simply discard it.
Note that, when $c(e)=1$ for every item $e \in E$, the constraint coincides with a cardinality constraint:
\begin{align}
  \text{maximize\ \  }f(S) \quad \text{subject to \ } |S|\leq K, \quad S\subseteq E.
  \label{eq:problem_size}
\end{align}

The problem of maximizing a monotone submodular function under a knapsack or a cardinality constraint is classical and well-studied~\cite{FNS_cardinality,Wolsey:1982}. 
The problem is known to be NP-hard but can be approximated within the factor of (close to) $1-e^{-1}$;
see e.g.,~\cite{Badanidiyuru:2013jc,DBLP:journals/siamcomp/ChekuriVZ14,FisherNemhauserWolsey,Kulik:2013ix,Sviridenko:2004hq,yoshida_2016}. 
Notice that for both problems, it is standard to assume that a function oracle is given and the complexity of the algorithms is measured based on the number of oracle calls. 

In this work, we study the two problems with a focus on designing \emph{space and time efficient} approximation algorithms. In particular, 
we assume the \emph{streaming} setting: each item in the ground set $E$ arrives sequentially, and we can keep only a small number of the items in memory at any point.
This setting renders most of the techniques in the literature ineffective, as they typically require random access to the data.



\paragraph{Our contribution} 
Our contributions are summarized as follows.

\begin{theorem}[Cardinality Constraint]\label{thm:main_cardinality}
Let $ n = |E|$. 
We design streaming $(1-e^{-1}-\varepsilon)$-approximation algorithms for the problem~\eqref{eq:problem_size} requiring either 
\begin{enumerate} 
  \item $O\left(K\right)$ space, $O(\varepsilon^{-1} \log (\varepsilon^{-1}  \log K))$ passes, and $O\left( n\varepsilon^{-1} \log(\varepsilon^{-1}  \log K)  \right)$ running time, or
  \item $O\left( K \varepsilon^{-1} \log K\right)$ space, $O(\varepsilon^{-1})$ passes, and $O\left(n\varepsilon^{-1}\log K+n\varepsilon^{-2}\right)$ running time.
  
\end{enumerate}
\end{theorem}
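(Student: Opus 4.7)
The plan is to combine the classical thresholding-greedy paradigm of Badanidiyuru--Vondr\'ak with the parallel-sieve idea of Badanidiyuru--Mirzasoleiman--Karbasi--Krause, and then navigate the space/passes trade-off via binary search. The basic primitive I would use is the \emph{thresholding pass}: given a threshold $\tau$ and a current partial solution $S$, scan the stream once and append $e$ to $S$ whenever $f(S\cup\{e\})-f(S)\geq \tau$ and $|S|<K$. Iterating this over $O(\varepsilon^{-1})$ passes with thresholds decreasing geometrically by a factor of $(1-\varepsilon)$ starting from some $\tau_0$, a standard submodular potential argument shows that if $\tau_0$ is within a constant factor of $\mathrm{OPT}/K$, the output $S$ satisfies $f(S)\geq (1-e^{-1}-O(\varepsilon))\mathrm{OPT}$.

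The subtle point is that $\mathrm{OPT}$ is unknown and must be guessed. In one preliminary pass I would record $m=\max_{e\in E}f(e)$, which yields $m\leq \mathrm{OPT}\leq Km$, so a valid $\tau_0$ lies on a geometric grid $G=\{(1-\varepsilon)^i m/K : i=0,\ldots,O(\varepsilon^{-1}\log K)\}$. For the second algorithm I would run the $O(\varepsilon^{-1})$-pass thresholding routine for every $\tau_0\in G$ \emph{in parallel}: all copies share the same stream passes, each copy stores $O(K)$ elements, giving total space $O(K\varepsilon^{-1}\log K)$ and total running time $O(n\varepsilon^{-1}\log K + n\varepsilon^{-2})$. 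Outputting the best of the $O(\varepsilon^{-1}\log K)$ copies matches item~2.

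For the first algorithm I would replace the parallel enumeration over $G$ by a \emph{binary search}. At each outer round we select the midpoint $\tau_0$ of the current candidate interval, run a full multi-pass thresholding ($O(\varepsilon^{-1})$ passes) with that $\tau_0$, and use the pair $(|S|,f(S))$ to decide whether the true $\mathrm{OPT}/K$ is above or below the current guess, halving the interval. After $O(\log|G|)=O(\log(\varepsilon^{-1}\log K))$ outer rounds the correct grid point is located, for a total of $O(\varepsilon^{-1}\log(\varepsilon^{-1}\log K))$ passes; only $O(K)$ space is needed, because at any moment we store only the solution of the currently probed copy.

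The main obstacle I anticipate is the decision rule inside the binary search: I need the outcome of a single multi-pass run to monotonically separate $\tau_0\leq \mathrm{OPT}/K$ from $\tau_0>\mathrm{OPT}/K$ up to the $O(\varepsilon)$ slack, so that the search is well-defined and does not discard the correct value. This requires carefully exploiting that under-estimating $\tau_0$ forces $|S|=K$ with every added marginal at least $\tau_0$ (so $f(S)\geq K\tau_0$ certifies a lower bound on $\mathrm{OPT}$), while over-estimating leaves $|S|<K$ with a controlled deficit bounded via submodularity. Ensuring that the cumulative error after $O(\log(\varepsilon^{-1}\log K))$ outer rounds remains $O(\varepsilon)\cdot \mathrm{OPT}$, and that the per-pass running time stays $O(n)$ (so that no extraneous $\log K$ appears in the oracle-query count), is the delicate bookkeeping that I expect to drive the proof.
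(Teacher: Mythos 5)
Your overall architecture is the same as the paper's (multi-pass thresholding, a geometric grid of guesses for the unknown optimum, run all guesses in parallel for the second algorithm, binary-search over the grid for the first), and your core primitive is sound: with a starting threshold that is at least $\mathrm{OPT}/K$ and within a constant factor of it, $O(\varepsilon^{-1})$ passes of $(1-\varepsilon)$-decreasing thresholds do give $1-e^{-1}-O(\varepsilon)$ (the paper instead resets the threshold dynamically to $\alpha=((1-\varepsilon)v-f(S_0))/W$, but your fixed schedule is a legitimate variant; note only that your grid should increase from $m/K$ toward $m$, not decrease). However, there are two genuine gaps. First, the running time claimed for item 2 does not follow from your construction: with $O(\varepsilon^{-1}\log K)$ parallel copies, every element must be tested against every copy in each of the $O(\varepsilon^{-1})$ shared passes, which costs $O(n\varepsilon^{-2}\log K)$ oracle calls, not $O(n\varepsilon^{-1}\log K+n\varepsilon^{-2})$. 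The paper gets the stated bound by first running a single-pass constant-factor streaming algorithm, which confines $f(\mathrm{OPT})$ to a constant-factor window so that only $O(\varepsilon^{-1})$ guesses remain to be run in parallel; this preprocessing idea is missing from your proposal and is exactly what removes the extra $\log K$ factor.

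Second, the decision rule you sketch for the binary search is not merely ``delicate bookkeeping''; as stated it is false. It is not true that under-estimating $\tau_0$ forces $|S|=K$: if the optimal value is concentrated on a few elements (say one element of value $\mathrm{OPT}$ and the rest nearly worthless), then even a correct or under-estimated guess terminates every pass with $|S|<K$, so the pair $(|S|,f(S))$ does not monotonically separate under- from over-estimates and a naive halving rule can discard the correct grid point. The paper's fix relies on its dynamic threshold: when the guess $v$ satisfies $v\le f(\mathrm{OPT})$, any round ending with $|S|<K$ must increase $f$ by at least $\varepsilon v$ (Lemma~\ref{lem:round_size}), so stopping after $\varepsilon^{-1}+1$ rounds with $|S|<K$ is a one-sided certificate that $v>f(\mathrm{OPT})$, while $|S|=K$ certifies $f(S)\ge(1-e^{-1}-2\varepsilon)v$ unconditionally (Lemma~\ref{lem:c_size}); the search then moves the lower endpoint up in the latter case, records the $f$-values of all probes, and returns the best. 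Your fixed geometric schedule has no analogous progress-per-pass statement tied to the guess (a pass may legitimately add nothing), so you would either need to import the paper's dynamic threshold for the probes, or design and prove a different comparison rule (e.g., comparing $f(S)$ against $K\tau_0$ or against the guessed value of $f(\mathrm{OPT})$ itself) together with the max-over-probes bookkeeping; as written, this step of your argument does not go through.
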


\begin{theorem}[Knapsack Constraint]\label{thm:main_knapsack}
Let $ n = |E|$. 
We design streaming $(0.5-\varepsilon)$-approximation algorithms for the problem~\eqref{eq:problem} requiring 
$O\left(K\varepsilon^{-7}\log^2 K\right)$ space, $O(\varepsilon^{-1})$ passes, and $O\left(n\varepsilon^{-8}\log^2 K\right)$ running time.
\end{theorem}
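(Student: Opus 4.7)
The plan is to adapt the density-thresholded greedy of Badanidiyuru--Vondr\'ak to the multi-pass streaming model, with separate treatment for items whose cost is large relative to $K$. First I would guess $\opt$ up to a factor of $(1+\varepsilon)$: since $\max_e f(\{e\}) \leq \opt \leq K\max_e f(\{e\})$, I can maintain $O(\varepsilon^{-1}\log K)$ candidate values $v$ in parallel threads, each of which will output a candidate solution. Call an item \emph{large} if $c(e) > \varepsilon K$ and \emph{small} otherwise; any feasible set contains at most $\varepsilon^{-1}$ large items, so $\opt$'s large items can be enumerated in a bounded way. Items whose singleton density $f(\{e\})/c(e)$ falls below the threshold $\tau = v/(2K)$ can be safely discarded during the stream, since their total contribution to any feasible solution is at most an $O(\varepsilon)$ fraction of $v$.

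For each guess $v$, I would run a thresholded greedy on small items across $O(\varepsilon^{-1})$ passes: maintain a current set $S$, and in each pass add any small $e$ satisfying $(f(S\cup\{e\})-f(S))/c(e)\geq\tau$ and $c(S)+c(e)\leq K$. Multiple passes are essential because an item whose marginal density was too low in one pass may become admissible once $S$ has grown enough to lift its relative marginal; conversely, iterating lets the algorithm exhaust every admissible item before stopping. At termination, either $c(S)\geq(1-O(\varepsilon))K$, in which case $f(S)\geq\tau\,c(S)\geq(1/2-O(\varepsilon))v$, or every $e\in\opt\setminus S$ still has marginal density below $\tau$, and submodularity gives
\[
\opt \;\leq\; f(S\cup\opt) \;\leq\; f(S) + \tau\cdot c(\opt\setminus S) \;\leq\; f(S) + v/2,
\]
again yielding $f(S)\geq(1/2-O(\varepsilon))v$. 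This mirrors the classical $1/2$ bound for the knapsack-constrained offline greedy.

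Large items are the main obstacle, because committing to an ill-chosen large item wastes knapsack capacity while omitting all of them can lose up to $\opt/2$. I would handle them by a seeding layer: in a preparatory pass collect a pool of $\mathrm{poly}(\varepsilon^{-1})$ large candidates, selected by (an approximation to) their marginal density against the small-item solutions; then in parallel, for each seed $g$ initialize $S_g=\{g\}$ and run the small-item thresholded greedy on top of it with residual budget $K-c(g)$. Returning the best of the $S_g$ (together with the pure small-item $S$) ensures that at least one branch captures the large items of $\opt$ up to an $O(\varepsilon)$ loss. The delicate part will be certifying this seeding step: one must enumerate enough large-item configurations that some run faithfully represents $\opt$'s large items, while keeping the number of seeds polynomial in $\varepsilon^{-1}$. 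This is where the $\varepsilon^{-7}\log^2 K$ factor in the space bound most plausibly arises, as the product of $O(\varepsilon^{-1}\log K)$ guesses of $v$, $O(\varepsilon^{-c})$ seeds per guess, and $O(\varepsilon^{-c'}\log K)$ storage per seed for appropriate constants $c,c'$. Amortising the $O(\varepsilon^{-1})$ passes over all parallel threads and paying one oracle call per item per thread yields the stated running time $O(n\varepsilon^{-8}\log^2 K)$.
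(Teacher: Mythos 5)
There is a genuine gap, and it sits exactly where the real difficulty of this theorem lies. Your dichotomy argument (``either $c(S)\geq(1-O(\varepsilon))K$ or every $e\in\opt\setminus S$ was rejected by density'') is only valid for the run restricted to \emph{small} items: large items are never offered to that greedy, so the inequality $f(\opt)\leq f(S)+\tau\,c(\opt\setminus S)$ fails --- the terms $f(e\mid S)$ for the large items of $\opt$ are unbounded by $\tau c(e)$ and can carry essentially all of $f(\opt)$. Everything therefore hinges on the ``seeding layer,'' which you yourself flag as uncertified. As described it does not work: $\opt$ may contain up to $\varepsilon^{-1}$ large items, so a branch seeded with a single large candidate cannot ``faithfully represent $\opt$'s large items,'' and enumerating multi-item configurations of large items is not $\mathrm{poly}(\varepsilon^{-1})$. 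Even in the single-large-item case the bound you claim does not follow: a seed $g$ of the right size and value can overlap in $f$-value with $\opt-o_1$, so $f(\opt-o_1\mid g)$ may be far below $f(\opt-o_1)$; controlling this loss is precisely what the paper's Lemmas~\ref{lem:good_e_1}--\ref{lem:good_e_2} (via \Call{PickNiceItem}{}) are for, and that route by itself only yields $0.46-\varepsilon$ (Theorem~\ref{thm:046}). Reaching $0.5$ requires the additional two-phase machinery of Section~\ref{sec:0.5}: choosing a good \emph{set} $Y$ (either a dense set of small items when $c(o_1)+c(o_2)$ is large, or the output of \Call{ModifiedSimple}{} otherwise), bounding $f(\opt'\mid Y)$, and the case analysis of Lemma~\ref{lem:largeW} with carefully tuned constants --- none of which is present or replaced by an equivalent argument in your proposal.

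Two smaller but concrete errors: the pruning claim is false --- items with $f(e)/c(e)<v/(2K)$ can jointly contribute up to $v/2$ (not $O(\varepsilon)v$) to a feasible solution, so discarding them at that threshold already forfeits the target ratio; and the paper's algorithm does not use a fixed density threshold $v/(2K)$ at all, but a threshold $\alpha=\frac{(1-\varepsilon)v-f(S_0)}{W}$ updated dynamically between passes (Algorithm~\ref{alg:simple}), which is what yields the $1-e^{-c(S)/W}$ guarantees (Lemma~\ref{lem:fact_knapsack}(3)) that the final case analysis needs. So while your small/large split and multi-pass thresholding are in the right spirit, the proof of the $(0.5-\varepsilon)$ ratio is missing its essential content.
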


%

To put our results in a better context, we list related work in Tables~\ref{tab:Summary} and~\ref{tab:Summary2}.
For the cardinality-constrained problem, our first algorithm achieves the same ratio $1-e^{-1}-\varepsilon$ as Badanidiyuru and Vondr\'{a}k~\cite{Badanidiyuru:2013jc}, 
using the same space, while strictly improving on the running time and the number of passes. The second algorithm 
improves further the number of passes to $O(\varepsilon^{-1})$, which is independent of $K$ and $n$, but slightly loses out in the running time and the space requirement. 


For the knapsack-constrained problem,
our algorithm gives the best ratio so far using only small space (though at the cost of using more passes than~\cite{APPROX17,Yu:2016}). 
In the non-streaming setting, Sviridenko~\cite{Sviridenko:2004hq} gave a $(1-e^{-1})$-approximation algorithm, which takes $O(Kn^4 )$ time. 
Very recently, Ene and Nguy{\fontencoding{T5}\selectfont \~\ecircumflex{}}n~\cite{Ene2017} gave $(1-e^{-1}-\varepsilon)$-approximation algorithm, which takes 
$O((1/\varepsilon)^{O(1/\varepsilon^{4})} \log n)$.\footnote{In~\cite{Badanidiyuru:2013jc}, a $(1-e^{-1}-\varepsilon)$-approximation algorithm 
of running time $O(n^2 (\varepsilon^{-1} \log \frac{n}{\varepsilon})^{\varepsilon^{-8}})$ was claimed. However, this algorithm 
seems to require some assumption on the curvature of the submodular function. See~\cite{Ene2017,yoshida_2016} for 
details on this issue.}


\begin{table}
\begin{center}
\caption{\small The cardinality-constrained problem}
\label{tab:Summary}
\scalebox{0.85}{
\begin{tabular}{lllll}
\hline
 & approx.~ratio & \# passes & space & running time  \\ \hline
Badanidiyuru~\emph{et~al.}~\cite{Badanidiyuru:2014ib} & $0.5-\varepsilon$ & 1 & $O\left(K \varepsilon^{-1}\log K\right)$ & $O\left(n \varepsilon^{-1}\log K\right)$\\
\hline
{\bf Ours} & $1-e^{-1}-\varepsilon$ & $O\left(\varepsilon^{-1}\right)$& $O\left( K\varepsilon^{-1}\log K \right)$ & $O\left(n\varepsilon^{-1}\log K+n\varepsilon^{-2}\right)$\\
{\bf Ours} & $1-e^{-1}-\varepsilon$ & $O\left(\varepsilon^{-1}\log \left(\varepsilon^{-1}\log K\right)\right)$& $O( K )$ & $O\left( n \varepsilon^{-1} \log \left(\varepsilon^{-1}  \log K\right) \right)$\\
\hline
Badanidiyuru--Vondrak~\cite{Badanidiyuru:2013jc} & $1-e^{-1}-\varepsilon$ & $O\left(\varepsilon^{-1}\log (\varepsilon^{-1}K)\right)$ & $O(K)$ & $O\left(n \varepsilon^{-1}\log (\varepsilon^{-1}K)\right)$ \\ 
Mirzasoleiman~\emph{et~al.}~\cite{Mirzasoleiman:2015} & \parbox{7em}{$1-e^{-1}-\varepsilon$\\ {\footnotesize (in expectation)}} & $K$ & $O\left(\frac{n}{K} \log \varepsilon^{-1}\right)$ & $O\left(n \log \varepsilon^{-1}\right)$ \\
Greedy~\cite{FisherNemhauserWolsey} & $1-e^{-1}$ & $K$ & $O(K)$ & $O(nK)$ \\
\hline
\end{tabular}
}
\end{center}
\end{table}

\begin{table}
\begin{center}
\caption{\small 
The knapsack-constrained problem. The algorithms~\cite{Ene2017,Sviridenko:2004hq}
are not for the streaming setting. }

\label{tab:Summary2}
\scalebox{0.9}{
\begin{tabular}{lllll}
\hline
 & approx.~ratio & \# passes & space & running time  \\ \hline
Yu~\emph{et~al.}~\cite{Yu:2016} & $1/3-\varepsilon$ & 1 & $O\left(K \varepsilon^{-1}\log K\right)$ & $O\left(n \varepsilon^{-1}\log K\right)$\\
Huang~\emph{et~al.}~\cite{APPROX17} & $0.363-\varepsilon$ & 1 & $O\left(K\varepsilon^{-4}\log^4 K\right)$ & $O\left(n\varepsilon^{-4}\log^4 K\right)$\\
Huang~\emph{et~al.}~\cite{APPROX17} & $0.4-\varepsilon$ & 3 & $O\left(K\varepsilon^{-4}\log^4 K\right)$ & $O\left(n\varepsilon^{-4}\log^4 K\right)$\\
%
\hline
{\bf Ours} & $0.39-\varepsilon$ & $O\left(\varepsilon^{-1}\right)$& $O\left(K\varepsilon^{-2}\right)$ & $O\left(n\varepsilon^{-1}\log K + n\varepsilon^{-3}\right)$ \\ 
{\bf Ours} & $0.46-\varepsilon$ & $O\left(\varepsilon^{-1}\right)$& $O\left(K\varepsilon^{-4}\log K\right)$ & $O\left(n\varepsilon^{-5}\log K\right)$ \\
{\bf Ours} & $0.5-\varepsilon$ & $O\left(\varepsilon^{-1}\right)$& $O\left(K\varepsilon^{-7}\log^2 K\right)$ & $O\left(n\varepsilon^{-8}\log^2 K\right)$ \\
%
\hline
Ene and Nguy{\fontencoding{T5}\selectfont \~\ecircumflex{}}n~\cite{Ene2017} & $1-e^{-1}-\varepsilon$ & --- & --- & $O\left((1/\varepsilon)^{O(1/\varepsilon^{4})} n \log n\right)$ \\ 
Sviridenko~\cite{Sviridenko:2004hq} & $1-e^{-1}$ & --- & --- & $O\left(Kn^4\right)$\\
\hline
\end{tabular}
}
\end{center}
\end{table}

\paragraph*{Our Technique}
We first give an algorithm, called \Call{Simple}{}, for the cardinality-constrained problem \eqref{eq:problem_size}. This algorithm is later used as a subroutine for the knapsack-constrained problem \eqref{eq:problem}. 
The basic idea of \Call{Simple}{} is similar to those in~\cite{Badanidiyuru:2013jc,mcGregor2017}: in each pass, a certain threshold is set; items whose marginal value
exceeds the threshold are added into the collection; others are just ignored. 
In \cite{Badanidiyuru:2013jc,mcGregor2017}, the threshold is decreased in a conservative way~(by the factor of $1-\varepsilon$) in each pass. 
In contrast, we adjust the threshold {\it dynamically}, based on the $f$-value of the current collection.
We show that, after $O(\varepsilon^{-1})$ passes, we reach a $(1-e^{-1}-\varepsilon)$-approximation. 
To set the threshold, 
we need a prior estimate of the optimal value, which we show 
can be found by a pre-processing step requiring either $O(K\varepsilon^{-1} \log K)$ space and a single pass, or $O(K)$ space and $O(\varepsilon^{-1} \log (\varepsilon^{-1} \log K))$ passes. 
The implementation and analysis of the algorithm are very simple.
See Section~\ref{sec:size} for the details. 

For the knapsack-constrained problem \eqref{eq:problem}, let us first point out the challenges in the streaming setting. 
The techniques achieving the best ratios in the literature are in~\cite{Ene2017,Sviridenko:2004hq}. In~\cite{Sviridenko:2004hq}, 
\emph{partial enumeration} and \emph{density greedy} are used. 
In the former, small sets (each of size at most 3) of items are guessed and 
for each guess, density greedy adds items based on the decreasing order of marginal ratio (i.e., the marginal value divided by the item size).
To implement density greedy in the streaming setting, large number of passes would be required. In~\cite{Ene2017}, 
partial enumeration is replaced by a more sophisticated multi-stage guessing strategies (where 
fractional items are added based on the technique of multilinear extension) and a ``lazy'' 
version of density greedy is used so as to keep down the time complexity. 
This version of density greedy nonetheless requires a priority queue to store the density of all items, thus requiring large space. 

We present algorithms, in increasing order of sophistication, in Sections~\ref{sec:simple} to~\ref{sec:0.5}, that give 
$0.39-\varepsilon$, $0.46-\varepsilon$, and $0.5-\varepsilon$ approximations respectively. The first simpler algorithms are useful for 
illustrating the main ideas and also are used as subroutines for later, more involved algorithms. 
The first algorithm adapts the algorithm \Call{Simple}{} for the cardinality-constrained case.
We show that \Call{Simple}{} still performs well if all items in the optimal solution (henceforth denoted by $\opt$) are small in size. 
Therefore, by ignoring the largest optimal item $o_1$, we can obtain a $(0.39-\varepsilon)$-approximate solution~(See Section~\ref{sec:simple}).

The difficulty arises when $c(o_1)$ is large and the function value $f(o_1)$ is too large to be ignored.
To take care of such a large-size item,
we first aim at finding a good item $e$ whose size approximates that of $o_1$, using a single pass~\cite{APPROX17}.
This item $e$ satisfies the following properties: 
(1) $f(e)$ is large, (2) the marginal value of $\opt - o_1$ with respect to $e$ is large.
Then, after having this item $e$, we apply \Call{Simple}{} to pack items in $\opt - o_1$.
Since the largest item size in $\opt-o_1$ is smaller, the performance of \Call{Simple}{} is better than just applying \Call{Simple}{} to the original instance.
The same argument can be applied for $\opt - o_1 - o_2$, where $o_2$ is the second largest item.
These solutions, together with $e$, yield a $(0.46-\varepsilon)$-approximation~(See Section~\ref{sec:046} for the details).

The above strategy would give a $(0.5-\varepsilon)$-approximation if $f(o_1)$ is large enough. When $f(o_1)$ is small, we need 
to generalize the above ideas further. In Section~\ref{sec:0.5}, we propose a two-phase algorithm. 
In Phase 1, an initial \emph{good set} $Y \subseteq E$ is chosen (instead of a single good item);  in Phase 2, 
pack items in some subset $\opt'\subseteq \opt$ using the remaining space. 
Ideally, the good set $Y$ should satisfy the following properties: 
(1) $f(Y)$ is large, (2) the marginal value of $\opt'$ with respect to $Y$ is large, 
and (3) the remaining space, $K-c(Y)$, is sufficiently large to pack items in $\opt'$. 
To find a such a set $Y$, we design two strategies, depending on the sizes, $c(o_1)$, $c(o_2)$ of the two largest items in $\opt$. 

The first case is when $c(o_1)+c(o_2)$ is large.
As mentioned above, we may assume that $f(o_1)$ is small.
In a similar way, we can show that $f(o_2)$ is small.
Then there exists a ``dense'' set of small items in $\opt$, i.e., $\frac{ f({\rm OPT} \setminus \{o_1, o_2\})}{ c({\rm OPT} \setminus \{o_1, o_2\})}$ is large. 
The good set $Y$ thus can be small items approximating $f({\rm OPT} \setminus \{o_1, o_2\})$ while still leaving enough 
space for Phase 2.

The other case is when $c(o_1)+c(o_2)$ is small.
In this case, we apply a modified version of \Call{Simple}{} to obtain a good set $Y$. The modification allows us to lower-bound the marginal value of $\opt'$ with respect to $Y$. 
Furthermore, we can show that $Y$ is already a $(0.5-\varepsilon)$-approximation when $c(Y)$ is large. Thus we may assume that $c(Y)$ is small, 
implying that we have still enough space to pack items in $\opt'$ in Phase 2. 

\paragraph*{Related Work}
Maximizing a monotone submodular function subject to various constraints is a subject that has been extensively studied in the literature.
We do not attempt to give a complete survey here and just highlight the most relevant results.
Besides a knapsack constraint or a cardinality constraint mentioned above, the problem has also been studied under
(multiple) matroid constraint(s), $p$-system constraint, multiple knapsack constraints. See~\cite{Calinescu:2011ju,ChanSODA2017,Chan2017,DBLP:journals/siamcomp/ChekuriVZ14,Filmus:2014,Kulik:2013ix,Lee:2010} and the references therein. 
In the streaming setting, researchers have considered the same problem with matroid constraint~\cite{DBLP:journals/mp/ChakrabartiK15} and knapsack constraint~\cite{APPROX17,Yu:2016}, and the problem without monotonicity~\cite{DBLP:conf/icalp/ChekuriGQ15,mirzasoleiman18streaming}.

For the special case of set-covering function with cardinality constraint, McGregor and Vu~\cite{mcGregor2017} give a $(1-e^{-1}-\varepsilon)$-approximation algorithm 
in the streaming setting. 
They use a sampling technique to estimate the value of $f(\opt)$ and then collect items based on thresholds using $O(\varepsilon^{-1})$ passes.
Batani~\textit{et al.}~\cite{Bateni:2017} independently proposed a streaming algorithm with a sketching technique for the same problem.

\paragraph*{Notation}
For a subset $S \subseteq E$ and an element $e \in E$, we use the shorthand $S+e$ and $S-e$ to stand for $S \cup \{e\}$ and $S \setminus \{e \}$, respectively.
For a function $f:2^E \to \bbR$, we also use the shorthand $f(e)$ to stand for $f(\{e\})$.
The \emph{marginal return} of adding $e \in E$ with respect to $S \subseteq E$ is defined as $f(e \mid S) = f(S+e) - f(S)$.


\section{Cardinality Constraint}\label{sec:size}

\subsection{Simple Algorithm with Approximated Optimal Value}

In this section, we introduce a procedure \Call{Simple}{}~(see Algorithm~\ref{alg:simple_size}). This 
procedure can be used to give a $(1-e^{-1} - \varepsilon)$-approximation with the cardinality constraint; 
moreover, it will be adapted for the knapsack-constrained problem in Section~\ref{sec:simple}.

The input of \Call{Simple}{} consists of 

\begin{enumerate}
\item An instance $\mathcal{I}= ( f, K, E)$ for the problem \eqref{eq:problem_size}.
\item Approximated values $v$ and $W$ of $f(\opt)$ and $c(\opt)$, respectively, where $\opt$ is an optimal solution of $\I$. 
Specifically, we suppose $v \leq f(\opt)$ and $W \geq c(\opt)$.
\end{enumerate}
The output of \Call{Simple}{} is a set $S$ that satisfies $f(S)\geq \beta v$ for some constant $\beta$ that will be determined later.
If $f(\opt)\leq (1+\varepsilon)v$ in addition, then the output turns out to be a $(\beta-\varepsilon)$-approximation.
We will describe how to find such $v$ satisfying that $v\leq f(\opt)\leq (1+\varepsilon)v$ in the next subsection.


\begin{algorithm}[th!]
  \caption{}\label{alg:simple_size}
  \begin{algorithmic}[1]
  \Procedure{Simple}{$\I=(f, K, E); v, W$}
  \Comment{$v \leq f(\opt)$ and $W \geq c(\opt)$}
  \State{$S := \emptyset$.}
  \Repeat
    \State{$S_0:=S$ and $\alpha := \frac{(1-\varepsilon)v - f(S_0)}{W}$.}
    \For{each $e \in E$}
       \State{\textbf{if } $f(e \mid S) \geq \alpha$ and $|S| < K$ \textbf{then} $S := S+e$.}
    \EndFor  
    \State{$T:=S\setminus S_0$.}
   \Until{$|S| = K$}
  \State{\Return $S$.}
  \EndProcedure
  \end{algorithmic}
\end{algorithm}

The following observations hold for the algorithm \Call{Simple}{}.
\begin{lemma}
\label{lem:fact_size}
  During the execution of \textup{\Call{Simple}{}} in each round~~\textup{(}in Lines 3--8\textup{)}, the following hold:
  \begin{enumerate}
  \item[{\rm (1)}] The current set $S \subseteq E$ always satisfies $f(T'\mid S_0) \geq  \alpha |T'|$, where $T'=S\setminus S_0$.
  \item[{\rm (2)}] If an item $e \in E$ fails the condition $f(e\mid S_e) < \alpha$ at Line 6, where $S_e$ is the set just before $e$ arrives, then the final set $S$ in the round satisfies $f(e\mid S) < \alpha$.
  \end{enumerate}
\end{lemma}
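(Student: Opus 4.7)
The plan is to prove both statements by arguments that mirror the structure of the algorithm: part (1) is a telescoping identity combined with the acceptance rule at Line 6, and part (2) is a direct application of submodularity that converts a rejection at an earlier snapshot of $S$ into a rejection at the final snapshot of the round. Note that $\alpha$ is fixed once at Line 4 and does not change within a round, which is what makes both arguments go through cleanly.

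For part (1), I would fix an arbitrary round and let $e_1, e_2, \ldots, e_k$ be the items that are inserted into $S$ at Line 6 in the order in which they are processed by the \textbf{for}-loop, so that at any intermediate moment of the scan one has $T' = \{e_1,\ldots,e_j\}$ for some $j \leq k$. Writing $S_i := S_0 \cup \{e_1,\ldots,e_i\}$, the chain-rule identity for marginal returns gives
\[
f(T'\mid S_0) \;=\; \sum_{i=1}^{j} f(e_i \mid S_{i-1}).
\]
Since each $e_i$ was in fact inserted at Line 6, the acceptance check guarantees $f(e_i\mid S_{i-1}) \geq \alpha$, so every summand is at least $\alpha$ and the whole sum is at least $\alpha |T'|$, as required.

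For part (2), suppose $e \in E$ fails the acceptance check at Line 6, i.e., $f(e\mid S_e) < \alpha$. Since items are only added (never removed) during the \textbf{for}-loop, the snapshot $S_e$ just before $e$ is processed is a subset of the final set $S$ at the end of the round. Submodularity of $f$ implies that the marginal value of $e$ is non-increasing in the conditioning set, so
\[
f(e \mid S) \;\leq\; f(e \mid S_e) \;<\; \alpha,
\]
which is the desired inequality. I do not foresee a real obstacle; the one subtlety worth checking is the cardinality cap $|S|<K$ at Line 6, which might prevent an item from being tested at all. This is harmless: (1) only accounts for items that were actually inserted, and (2) only concerns items whose acceptance check was actually evaluated and failed, so in both cases the hypothesis pins down exactly the items to which the conclusion applies.
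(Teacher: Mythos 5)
Your proposal is correct and follows essentially the same route as the paper: part (1) is the same telescoping of marginal gains over the accepted items with each term bounded below by $\alpha$ via the acceptance check, and part (2) is the same one-line application of submodularity using $S_e \subseteq S$. Your extra remark about the cardinality cap at Line 6 is a harmless clarification that the paper leaves implicit.
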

	\begin{proof}
	(1) Every item $e\in T'$ satisfies $f(e\mid S_e)\geq \alpha$, where $S_e$ is the set just before $e$ arrives.
	Hence $f(T\mid S_0) = \sum_{e\in T}f(e\mid S_e)\geq \alpha |T|$.
	(2) follows from the definition of submodularity.
	\end{proof}

Moreover, we can bound $f(S)$ from below using the size of $S$.

\begin{lemma}\label{lem:c_size}
In the end of each round~\textup{(}in Lines 3--8\textup{)}, we have
\[
f(S)\geq \left(1-e^{-\frac{|S|}{W}}-2\varepsilon\right)v.
\]
\end{lemma}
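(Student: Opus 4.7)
The plan is to prove the bound by induction on the number of completed rounds of the \Call{Simple}{} procedure. Let $S^{(i)}$ denote the value of $S$ at the end of round $i$, with $S^{(0)} = \emptyset$, let $\alpha^{(i)} = \frac{(1-\varepsilon)v - f(S^{(i-1)})}{W}$ be the threshold used in round $i$, and let $k_i = |S^{(i)}| - |S^{(i-1)}|$ denote the number of items added during round $i$. The goal is to show $f(S^{(i)}) \geq (1-\varepsilon)v\bigl(1 - e^{-|S^{(i)}|/W}\bigr)$; since $1 - e^{-|S^{(i)}|/W} \leq 1$, this immediately implies $f(S^{(i)}) \geq \bigl(1 - e^{-|S^{(i)}|/W} - \varepsilon\bigr)v \geq \bigl(1 - e^{-|S^{(i)}|/W} - 2\varepsilon\bigr) v$, which is the statement.

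First I would handle the easy case where $f(S^{(i-1)}) \geq (1-\varepsilon)v$: monotonicity gives $f(S^{(i)}) \geq f(S^{(i-1)}) \geq (1-\varepsilon)v$, which is already at least $(1 - e^{-|S^{(i)}|/W} - \varepsilon)v$. So suppose $f(S^{(i-1)}) < (1-\varepsilon) v$, so that $\alpha^{(i)} > 0$. Applying Lemma~\ref{lem:fact_size}(1) to the new items added in round $i$,
\[
f(S^{(i)}) - f(S^{(i-1)}) \;\geq\; \alpha^{(i)} \, k_i \;=\; \frac{(1-\varepsilon)v - f(S^{(i-1)})}{W}\, k_i .
\]
Rearranging yields the key one-step recursion
\[
(1-\varepsilon)v - f(S^{(i)}) \;\leq\; \bigl((1-\varepsilon)v - f(S^{(i-1)})\bigr)\Bigl(1 - \tfrac{k_i}{W}\Bigr).
\]

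Next I would iterate this recursion. Unrolling from round $i$ back to the base case $f(S^{(0)}) = 0$ gives
\[
(1-\varepsilon)v - f(S^{(i)}) \;\leq\; (1-\varepsilon)v \prod_{j=1}^{i}\Bigl(1-\tfrac{k_j}{W}\Bigr) \;\leq\; (1-\varepsilon)v \exp\!\Bigl(-\tfrac{1}{W}\sum_{j=1}^{i} k_j\Bigr) = (1-\varepsilon)v \, e^{-|S^{(i)}|/W},
\]
where the second inequality uses $1 - x \leq e^{-x}$ for all real $x$. Rearranging gives $f(S^{(i)}) \geq (1-\varepsilon)v(1 - e^{-|S^{(i)}|/W})$, as desired.

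One subtlety I would flag is that a single factor $1 - k_j/W$ may become negative (for instance when $|S^{(i)}|$ exceeds $W$ because $W$ is merely an upper bound on $c(\opt) = |\opt|$ and not on $|S|$). This is not a real obstacle: whenever the RHS of the unrolled bound is negative, the inequality $(1-\varepsilon)v - f(S^{(i)}) \leq (\text{negative})$ gives $f(S^{(i)}) > (1-\varepsilon)v$, which is stronger than what we need; and the bound $1 - x \leq e^{-x}$ remains valid for negative $x$ as well, so the telescoping argument still goes through verbatim. The only genuinely delicate step is thus the first rearrangement requiring $\alpha^{(i)} \geq 0$, which is exactly what the preliminary case split handles.
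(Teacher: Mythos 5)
Your argument is correct and is essentially the paper's induction in multiplicative form: both rest on the per-round gain $f(S)-f(S_0)\ge \alpha\,|T|$ from Lemma~\ref{lem:fact_size}(1) together with $1-x\le e^{-x}$, and your bookkeeping via the deficit $(1-\varepsilon)v-f(S)$ even yields the marginally stronger bound $(1-\varepsilon)\bigl(1-e^{-|S|/W}\bigr)v$ in place of $\bigl(1-e^{-|S|/W}-2\varepsilon\bigr)v$. One nitpick on the subtlety you flag: if some factor $1-k_j/W$ were negative, the unrolled product inequality itself could fail (substituting an upper bound after multiplying by a negative factor flips the direction), so ``goes through verbatim'' is too quick; but your case split already repairs this, since $f(S^{(j)})<(1-\varepsilon)v$ for all $j\le i-1$ forces every factor with $j<i$ to be positive, and a negative last factor makes the one-step recursion give $f(S^{(i)})>(1-\varepsilon)v$ outright. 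In the paper's actual use of the lemma the issue is vacuous ($W=K$, so $k_j\le |S^{(j)}|\le W$ in every round), and the paper's own proof makes the same implicit assumption when it multiplies the induction hypothesis by $1-|T|/W$.
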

	\begin{proof}
	We prove the statement by induction on the number of rounds.
	Let $S$ be a set in the end of some round.
	Furthermore, let $S_0$ and $T$ be corresponding two sets in the round; thus $S = S_0 \cup T$.
	By induction hypothesis, we have
	\[
	f(S_0)\geq \left(1-e^{-\frac{|S_0|}{W}}-2\varepsilon\right)v.
	\]
	Note that $S_0=\emptyset$ in the first round, that also satisfies the above inequality.

	Due to Lemma~\ref{lem:fact_size}(1), it holds that 
	$f(S) = f(S_0) + f(T\mid S_0) \geq f(S_0) + \alpha |T|$,
	where $\alpha= \frac{(1-\varepsilon)v - f(S_0)}{W}$.
	Hence it holds that
\begin{align*}
f(S) & \geq f(S_0)\left(1 - \frac{|T|}{W}\right) + (1-\varepsilon) \frac{|T|}{W}v \\
 & \geq \left(1 - e^{-\frac{|S_0|}{W}}- 2 \varepsilon \right)\left(1 - \frac{|T|}{W}\right)v+ \frac{|T|}{W}v - \frac{|T|}{W}\varepsilon v\\
 & = \left(1 - \left(1 - \frac{|T|}{W}\right) e^{-\frac{|S_0|}{W}}\right)v - \left( 2 - \frac{|T|}{W} \right) \varepsilon v\\
 & \geq \left(1 - \left(1 - \frac{|T|}{W}\right) e^{-\frac{|S_0|}{W}}\right)v - 2 \varepsilon v,
\end{align*}
	where the second inequality uses the induction hypothesis.
	Since $\left(1 - \frac{|T|}{W}\right)\leq e^{-\frac{|T|}{W}}$, we have
	\[
	f(S) \geq \left(1 - e^{-\frac{|S_0|+|T|}{W}}-2\varepsilon\right)v = \left(1 - e^{-\frac{|S|}{W}}-2\varepsilon\right)v,
	\]
	which proves the lemma.
	\end{proof}

The next lemma says that the function value increases by at least $\varepsilon f(\opt)$ in each round.
This implies that the algorithm terminates in $O(\varepsilon^{-1})$ rounds.

\begin{lemma}\label{lem:round_size}
Suppose that we run \textup{\Call{Simple}{$\mathcal{I}; v, \kn$}} with $v \leq f(\opt)$ and $W \geq |\opt|$.
In the end of each round, if the final set $S=S_0\cup T$~\textup{(}at Line 7\textup{)} satisfies $|S| <K$, then $f(S) - f(S_0) \geq \varepsilon f(\opt)$.
\end{lemma}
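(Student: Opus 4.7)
The plan is to show that when a round ends without filling the cardinality quota, the threshold condition forces every item of $\opt$ not already in $S$ to contribute only a small marginal, so the gain in this round must have been substantial. The key hypothesis $|S|<K$ means that, during this round, \emph{no} item was rejected because the cardinality bound was reached; every item $e\in E\setminus S$ failed the test $f(e\mid S_e)\geq \alpha$ on Line 6. Combined with Lemma~\ref{lem:fact_size}(2), this yields $f(e\mid S)<\alpha$ for every $e\in E\setminus S$, and in particular for every $e\in \opt\setminus S$.

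Next I would pass to a global bound on the residual value against $\opt$. Enumerating $\opt\setminus S=\{e_1,\ldots,e_m\}$ and using submodularity of $f$,
\[
f(\opt)-f(S)\;\leq\; f(\opt\cup S)-f(S)\;=\;\sum_{i=1}^{m} f\bigl(e_i\,\big|\,S\cup\{e_1,\ldots,e_{i-1}\}\bigr)\;\leq\;\sum_{e\in \opt\setminus S} f(e\mid S)\;<\;\alpha\,|\opt\setminus S|\;\leq\;\alpha\,W,
\]
where the last inequality uses $W\geq |\opt|$.

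Finally I would plug in $\alpha=\frac{(1-\varepsilon)v-f(S_0)}{W}$ to get
\[
f(\opt)-f(S)\;<\;(1-\varepsilon)v-f(S_0),
\]
so that $f(S)-f(S_0)\;>\;f(\opt)-(1-\varepsilon)v\;\geq\;f(\opt)-(1-\varepsilon)f(\opt)\;=\;\varepsilon\,f(\opt)$, using the standing assumption $v\leq f(\opt)$. This chain gives exactly the claimed lower bound on the per-round progress.

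There is no real obstacle here: the only subtle point is the justification that every $e\in \opt\setminus S$ was seen and rejected during this round (rather than being skipped because of the cardinality cap), and this is precisely what the hypothesis $|S|<K$ guarantees — once established, the rest is a routine application of submodularity and the definition of $\alpha$.
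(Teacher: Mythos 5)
Your proposal is correct and follows essentially the same route as the paper's own proof: you use $|S|<K$ together with Lemma~\ref{lem:fact_size}(2) to conclude $f(e\mid S)<\alpha$ for all $e\in\opt\setminus S$, then bound $f(\opt)-f(S)$ by $\sum_{e\in\opt\setminus S}f(e\mid S)\leq \alpha W$ via submodularity, and substitute $\alpha=\frac{(1-\varepsilon)v-f(S_0)}{W}$ with $v\leq f(\opt)$. No substantive difference from the paper's argument.
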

\begin{proof}
Suppose that the final set $S_0\cup T$ satisfies $|S_0\cup T|<K$.
This means that, in the last round, each item $e$ in $\opt \setminus (S_0\cup T)$ is discarded because the marginal return is not large, which implies that $f(e\mid S) < \alpha$ by Lemma~\ref{lem:fact_size}(2).
As $|\opt \setminus S| \leq W$ and $\alpha = \frac{(1-\varepsilon)v - f(S_0)}{W}$, we have from submodularity that
\[
f(\opt)\leq f(S)+ \sum_{e\in \opt \setminus S}f(e\mid S) \leq f(S)+ \alpha W\leq f(S)+ (1-\varepsilon)v -f(S_0).
\]
Since $v\leq f(\opt)$, this proves the lemma.
\end{proof}

From Lemmas~\ref{lem:c_size} and \ref{lem:round_size}, we have the following.

\begin{theorem}\label{thm:ratio_size}
Let $\mathcal{I} = (f, \kn, E)$ be an instance of the cardinality-constrained problem~\eqref{eq:problem_size}.
Suppose that $ v \leq f(\opt) \leq (1+\varepsilon) v$.
Then \textup{\Call{Simple}{$\mathcal{I}; v, \kn$}} can compute a $(1- e^{-1}- O(\varepsilon))$-approximate solution in $O(\varepsilon^{-1})$ passes and $O(K)$ space.
The total running time is $O(\varepsilon^{-1} n)$.
\end{theorem}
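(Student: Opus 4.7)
The plan is to deduce the theorem directly from Lemmas~\ref{lem:c_size} and~\ref{lem:round_size}. Lemma~\ref{lem:c_size} shows that at the end of every round $f(S) \ge (1 - e^{-|S|/W} - 2\varepsilon)v$, which under the call with $W = K$ reaches the target bound $(1 - e^{-1} - O(\varepsilon))v$ precisely when $|S| = K$. So the main point is to argue that the repeat-loop terminates in $O(\varepsilon^{-1})$ rounds with $|S| = K$; the approximation ratio and the resource estimates will then follow immediately.

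To bound the number of rounds, I would argue as follows. Since $W = K \ge |\opt|$ and $v \le f(\opt)$, Lemma~\ref{lem:round_size} applies, so whenever a round ends with $|S| < K$ we have $f(S) - f(S_0) \ge \varepsilon f(\opt)$. Because $f(S) \le f(\opt)$ for every feasible $S$ (by optimality of $\opt$), telescoping this strict-increase guarantee over $\lceil 1/\varepsilon \rceil$ such rounds would push $f(S)$ past $f(\opt)$, a contradiction. Hence the loop exits with $|S| = K$ after at most $\lceil 1/\varepsilon\rceil + 1$ rounds.

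At termination we have $|S|/W = 1$, so Lemma~\ref{lem:c_size} gives $f(S) \ge (1 - e^{-1} - 2\varepsilon)\,v$. Combining this with the hypothesis $f(\opt) \le (1+\varepsilon)v$ yields
\[
f(S) \;\ge\; \frac{1 - e^{-1} - 2\varepsilon}{1+\varepsilon}\, f(\opt) \;=\; \bigl(1 - e^{-1} - O(\varepsilon)\bigr)\, f(\opt),
\]
which is the claimed approximation ratio. The resource bounds are immediate from inspection of Algorithm~\ref{alg:simple_size}: each round makes a single pass over the stream, spends $O(1)$ oracle work per arriving element, and stores only the current set $S$ (of size at most $K$) together with the snapshot $S_0$. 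With $O(\varepsilon^{-1})$ rounds this gives $O(\varepsilon^{-1})$ passes, $O(K)$ space, and $O(\varepsilon^{-1} n)$ total running time.

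I do not anticipate any real obstacle here; the only subtlety is the round-count argument, which leverages the hard cap $f(S) \le f(\opt)$ to convert Lemma~\ref{lem:round_size}'s strict-progress guarantee into termination. The degenerate case $f(\opt)=0$ is vacuous, so no separate treatment is needed.
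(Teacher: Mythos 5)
Your proposal is correct and follows essentially the same route as the paper: Lemma~\ref{lem:round_size} plus the bound $f(S)\leq f(\opt)$ gives termination with $|S|=K$ in $O(\varepsilon^{-1})$ rounds, and Lemma~\ref{lem:c_size} together with $f(\opt)\leq(1+\varepsilon)v$ yields the ratio. The resource accounting also matches the paper's argument.
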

\begin{proof}
While $|S|<\kn$, the $f$-value is increased by at least $\varepsilon f(\opt)$ in each round by Lemma~\ref{lem:round_size}.
Hence, after $p$ rounds, the current set $S$ satisfies that $f(S)\geq p \varepsilon f(\opt)$.
Since $f(S)\leq f(\opt)$, the number of rounds is at most $\varepsilon^{-1}+1$.
As each round takes $O(n)$ time, the total running time is $O(\varepsilon^{-1} n)$.
Since we only store a set $S$, the space required is clearly $O(K)$.

The algorithm terminates when $|S|=\kn$.
From Lemma~\ref{lem:c_size} and the fact that $f(\opt) \leq (1+\varepsilon) v$, we have
$$
f(S)\geq \left(1-e^{-1}-2\varepsilon\right)v \geq \left(1-e^{-1}-O(\varepsilon)\right)f(\opt).
$$
\end{proof}

\subsection{Algorithm with guessing the optimal value}\label{sec:size_guessv}

We first note that $m \leq f(\opt) \leq mK$, where $m = \max_{e \in E}f(e)$.
Hence, if we prepare $\mathcal{V}=\{(1+\varepsilon)^im\mid (1+\varepsilon)^i\leq K, i=0,1,\dots \}$, then we can guess $v$ such that $v\leq f(\opt)\leq (1+\varepsilon)v$.
As the size of $\mathcal{V}$ is equal to $O(\varepsilon^{-1}\log K)$,  if we run \Call{Simple}{} for each element in $\mathcal{V}$, 
we need $O(K \varepsilon^{-1}\log K)$ space and $O(\varepsilon^{-1})$ passes 
in the streaming setting. 
This, however, will take $O(n \varepsilon^{-2}\log K)$ running time. 
We remark that, using a $(0.5-\varepsilon)$-approximate solution $X$ by a single-pass streaming algorithm~\cite{Badanidiyuru:2013jc},
we can guess $v$ from the range between $f(X)$ and $(2+\varepsilon) f(X)$, which leads to $O(K\varepsilon^{-1}\log K)$ space and $O(n\varepsilon^{-1}\log K+n\varepsilon^{-2})$ time, taking $O(\varepsilon^{-1})$ passes.
This proves the second part in Theorem~\ref{thm:main_cardinality}.

Below we explain how to reduce the running time to $O(\varepsilon^{-1}n\log (\varepsilon^{-1}\log K))$ by the binary search.

\begin{theorem}
We can find a $(1- e^{-1}-\varepsilon)$-approximate solution in $O(\varepsilon^{-1}\log (\varepsilon^{-1} \log K))$ passes and $O(K)$ space, running in $O(n \varepsilon^{-1}\log  (\varepsilon^{-1}\log K))$ time.
\end{theorem}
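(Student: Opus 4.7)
The plan is to binary search over the candidate set $\mathcal{V} = \{(1+\varepsilon)^j m : 0 \leq j \leq \lceil \log_{1+\varepsilon} K\rceil\}$ of size $L = O(\varepsilon^{-1}\log K)$, where $m = \max_{e\in E} f(e)$ is computed in one preliminary streaming pass. Because $m \leq f(\opt) \leq mK$, the set $\mathcal{V}$ is guaranteed to contain a value $v^*$ satisfying $v^* \leq f(\opt) \leq (1+\varepsilon) v^*$.

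For the binary search I modify \Call{Simple}{} to halt after $\varepsilon^{-1}+1$ rounds, and call a candidate $v$ \emph{feasible} when this truncated run ends with $|S|=K$. Lemma~\ref{lem:round_size} directly implies that every $v \leq f(\opt)$ is feasible, so the contrapositive gives the one-sided implication I will exploit: infeasibility of $v$ witnesses $v > f(\opt)$. I then run a standard binary search on $\mathcal{V}$, maintaining indices $\ell < h$ such that $v^{(\ell)}$ is feasible and $v^{(h)}$ is either infeasible or the sentinel $h = L$. I initialize $\ell = 0$ (where $v^{(0)} = m \leq f(\opt)$ is feasible) and $h = L$, and at each iteration invoke \Call{Simple}{$\mathcal{I}; v^{(\mathrm{mid})}, K$} on the current midpoint, updating $\ell \leftarrow \mathrm{mid}$ or $h \leftarrow \mathrm{mid}$ according to the outcome; the loop terminates once $\ell + 1 = h$, at which point I output the set $S$ associated with the final $\ell$.

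The correctness argument does \emph{not} require any monotonicity of feasibility in $v$, which was the main obstacle I originally anticipated. At termination, either $h < L$ and $v^{(h)}$ is infeasible---so $v^{(h)} > f(\opt) \geq v^*$, forcing the index of $v^*$ to be strictly below $h$ and hence at most $\ell$---or $h = L$ and $\ell = L-1$ trivially exceeds the index of $v^*$. In both cases $v^{(\ell)} \geq v^*$, so Lemma~\ref{lem:c_size} gives $f(S) \geq (1-e^{-1}-2\varepsilon)\,v^{(\ell)} \geq (1-e^{-1}-2\varepsilon)\, f(\opt)/(1+\varepsilon) \geq (1-e^{-1} - O(\varepsilon))\,f(\opt)$, and rescaling $\varepsilon$ by a constant factor throughout yields the target $(1-e^{-1}-\varepsilon)$-approximation.

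The complexity is then routine: the binary search performs $O(\log L) = O(\log(\varepsilon^{-1}\log K))$ iterations, each running \Call{Simple}{} in $O(\varepsilon^{-1})$ passes and $O(n\varepsilon^{-1})$ time by Theorem~\ref{thm:ratio_size}, yielding $O(\varepsilon^{-1}\log(\varepsilon^{-1}\log K))$ passes and $O(n\varepsilon^{-1}\log(\varepsilon^{-1}\log K))$ running time in addition to the one preliminary pass for $m$. Since only the current candidate set is stored at any moment, the space remains $O(K)$.
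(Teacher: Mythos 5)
Your proposal is correct and takes essentially the same route as the paper: binary search over the geometric grid $\{m(1+\varepsilon)^j\}$ with the truncated \Call{Simple}{} as the test, using Lemma~\ref{lem:round_size} to certify that an incomplete run witnesses $v>f(\opt)$ and Lemma~\ref{lem:c_size} for the guarantee when $|S|=K$. The only difference is bookkeeping: the paper records $f(S)$ for every run and returns the best, while you note that Lemma~\ref{lem:c_size} needs no relation between $v$ and $f(\opt)$, so the run at the final feasible endpoint $v^{(\ell)}\geq v^*$ already suffices—both arguments are sound.
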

\begin{proof}
We here describe an algorithm using \Call{Simple}{} with slight modification.
Let $p$ be the minimum integer that satisfies $(1+\varepsilon)^p \geq K$.
It follows that $p = O(\varepsilon^{-1}\log K)$.

We set $s_0 = 1$ and $t_0 = p$.
Suppose that $m (1+\varepsilon)^{s_i} \leq f(\opt) \leq m (1+\varepsilon)^{t_i}$ for some $i\geq 0$.
Set $u = \lfloor(s_i+t_i)/2\rfloor$, and take the middle $v' = m (1+\varepsilon)^u$.
Perform \Call{Simple}{$\I; v', K$}, but we stop the repetition in $\varepsilon^{-1}+1$ rounds.

Suppose that the output $S$ is of size $K$.
Then, if $v'\geq f(\opt)$, we have $f(S)\geq (1- e^{-1}- O(\varepsilon)) v' \geq (1- e^{-1}- O(\varepsilon))f(\opt)$ by Lemma~\ref{lem:c_size}.
Hence we may assume that $v'\leq f(\opt) \leq m (1+\varepsilon)^{t_i}$.
So we set $s_{i+1}= u$ and $t_{i+1}=t_{i}$.

Suppose that the output $S$ is of size $<K$.
It follows from Lemma~\ref{lem:round_size} that, if $f(\opt)\geq v'$, it holds that $f(S) > p \varepsilon f(\opt)$ after $p$ rounds.
Hence, after $\varepsilon^{-1}+1$ rounds, we have $f(S)>f(\opt)$, a contradiction.
Thus we are sure that $f(\opt) < v'$.
So we see that $m (1+\varepsilon)^{s_i}\leq f(\opt) \leq v'$, and we set $s_{i+1}= s_i$ and $t_{i+1}=u$.

We repeat the above binary search until the interval is 1. 
As $t_0/s_0 = p$, the number of iterations is $O(\log p)=O\left(\log \left(\varepsilon^{-1}  \log K\right) \right)$.
Since each iteration takes $O(\varepsilon^{-1})$ passes, it takes $O(\varepsilon^{-1} \log (\varepsilon^{-1} \log K))$ passes in total.
The running time is $O(n \varepsilon^{-1} \log (\varepsilon^{-1}  \log K))$.
Notice that there is no need to store the solutions obtained in each iteration, rather, just the function values and the corresponding indices $u_i$ are enough to find out the best solution. 
Therefore, just $O\left(K+\log\left(\varepsilon^{-1} \log K\right)\right)=O(K)$ space suffices. 
The algorithm description is given in Algorithm~\ref{alg:simple2_size}.
\end{proof}

\begin{algorithm}[th!]
  \caption{Algorithm for the cardinality-constrained problem}\label{alg:simple2_size}
  \begin{algorithmic}[1]
  \Procedure{Cardinality}{$\I=(f, K, E)$}
  \State{$m := \max_{e \in E}f(e)$, and let $p$ be the minimum integer that satisfies $(1+\varepsilon)^p \geq K$.}
  \State{$i :=0$, $s_i := 1$, and $t_i := p$.}
  \While{$|t_i-s_i|>1$}
    \State{$u_i = \lfloor(s_i+t_i)/2\rfloor$ and $v' = m (1+\varepsilon)^{u_i}$.}
    \State{$S := \emptyset$.}
    \Comment{Perform \Call{Simple}{} but stop in $\varepsilon^{-1}+1$ rounds}
    \For{$j=1,\dots, \varepsilon^{-1}+1$}
      \State{$S_0:=S$ and $\alpha := \frac{(1-\varepsilon)v - f(S_0)}{W}$.}
      \For{each $e \in E$}
         \State{\textbf{if } $f(e \mid S) \geq \alpha$ and $|S| < K$ \textbf{then} $S := S+e$.}
      \EndFor
    \EndFor
    \State{$v_i:=f(S)$.}
    \If{$|S|=K$} 
       \State{$s_{i+1}:= u$ and $t_{i+1}:=t_{i}$.}
    \Else
       \State{$s_{i+1}:= s_i$ and $t_{i+1}:=u$.}
    \EndIf
  \EndWhile
  \State{$v_{i+1}:=f(\tilde{S})$ where $\tilde{S}:=$\Call{Simple}{$\I; m (1+\varepsilon)^{s_i}, K$}.}
  \State{$i^\ast:=\argmax_i v_i$ and \Return \Call{Simple}{$\I; m (1+\varepsilon)^{u_{i^\ast}}, K$}.}
  \EndProcedure
  \end{algorithmic}
\end{algorithm}

\section{Simple Algorithm for the Knapsack-Constrained Problem}\label{sec:simple}

In the rest of the paper, let $\I = (f, c, \kn, E)$ be an input instance of the problem \eqref{eq:problem}.
Let $\opt = \{o_1, \dots , o_\ell\}$ denote an optimal solution with $c(o_1) \geq c(o_2) \geq \cdots \geq c(o_\ell)$. 
We denote $c_i=c(o_i)/\kn$ for $i=1,2,\dots, \ell$.

Similarly to Section~\ref{sec:size}, we suppose that we know in advance the approximate value $v$ of $f(\opt)$, i.e., $v\leq f(\opt)\leq (1+\varepsilon)v$.
The value $v$ can be found with a single-pass streaming algorithm with constant ratio~\cite{Yu:2016} in $O(n\varepsilon^{-1}\log K)$ time and $O(K \varepsilon^{-1}\log K)$ space.
Specifically, letting $X$ be the output of a single-pass $\alpha$-approximation algorithm, we know that the optimal value is between $f(X)$ and $f(X)/\alpha$.
We can guess $v$ by a geometric series $\{(1+\varepsilon)^i\mid i\in \mathbb{Z}\}$ in this range, and then the number of guesses is $O(\varepsilon^{-1})$.
Thus, if we design an algorithm running in $O(T_1)$ time and $O(T_2)$ space provided the approximate value $v$, then the total running time is $O(n\varepsilon^{-1}\log K + \varepsilon^{-1}T_1)$ and the space required is $O(\max\{\varepsilon^{-1}\log K, \varepsilon^{-1} T_2\})$.

\subsection{Simple Algorithm}

We first claim that the algorithm \Call{Simple}{} in Section~\ref{sec:size} can be adapted for the knapsack-constrained problem \eqref{eq:problem} as below~(Algorithm~\ref{alg:simple}).
At Line 6, we pick an item when the marginal return per unit weight exceeds the threshold $\alpha$. 
We stop the repetition when $f(S) - f(S_0) < \varepsilon v$.
Clearly, the algorithm terminates. 

\begin{algorithm}[th!]
  \caption{}\label{alg:simple}
  \begin{algorithmic}[1]
  \Procedure{Simple}{$\I=(f, c, K, E); v, W$}
  \State{$S := \emptyset$.}
  \Repeat
    \State{$S_0:=S$ and $\alpha := \frac{(1-\varepsilon)v - f(S_0)}{W}$.}
    \For{each $e \in E$}
       \State{\textbf{if } $f(e \mid S) \geq \alpha c(e)$ and $c(S+e) \leq K$ \textbf{then} $S := S+e$.}
    \EndFor  
    \State{$T:=S\setminus S_0$.}
   \Until{$f(S) - f(S_0) < \varepsilon v$}
  \State{\Return $S$.}
  \EndProcedure
  \end{algorithmic}
\end{algorithm}

In a similar way to Lemmas~\ref{lem:fact_size} and \ref{lem:c_size}, we have the following observations.
We omit the proof.

\begin{lemma}
\label{lem:fact_knapsack}
  During the execution of \textup{\Call{Simple}{}} in each round~\textup{(}in Lines 3--8\textup{)}, the following hold:
  \begin{enumerate}
  \item[{\rm (1)}] The current set $S \subseteq E$ always satisfies $f(T'\mid S_0) \geq  \alpha c(T')$, where $T'=S\setminus S_0$.
  \item[{\rm (2)}] If an item $e \in E$ fails the condition $f(e\mid S_e) < \alpha c(e)$ at Line 6, where $S_e$ is the set just before $e$ arrives, then the final set $S$ in the round satisfies $f(e\mid S) < \alpha c(e)$.
\item[{\rm (3)}] In the end of each round, we have
\[
f(S)\geq \left(1-e^{-\frac{c(S)}{W}}-2\varepsilon\right)v.
\]
  \end{enumerate}
\end{lemma}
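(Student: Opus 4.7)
The plan is to imitate the proofs of Lemmas~\ref{lem:fact_size} and \ref{lem:c_size}, with the cardinality $|T'|$ replaced by the knapsack cost $c(T')$, since the only structural change in the knapsack version of \Call{Simple}{} is that the acceptance threshold is now $f(e\mid S)\geq \alpha c(e)$ (per unit size) rather than $f(e\mid S)\geq \alpha$.

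For part~(1), I would list the items of $T'$ in the order they were appended during the round, as $e_1,\dots,e_k$, letting $S_{e_j}$ denote the current set right before $e_j$ is considered. By the telescoping identity $f(T'\mid S_0)=\sum_{j=1}^{k} f(e_j\mid S_{e_j})$, and since every $e_j\in T'$ was added only because it passed the test $f(e_j\mid S_{e_j})\geq \alpha c(e_j)$, summing gives $f(T'\mid S_0)\geq \alpha\sum_j c(e_j)=\alpha c(T')$, as desired.

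For part~(2), the final set $S$ produced in the round contains $S_e$ (items are never removed inside a round), so submodularity yields $f(e\mid S)\leq f(e\mid S_e)<\alpha c(e)$.

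For part~(3), I would proceed by induction on the round number. The base case is trivial since $S_0=\emptyset$ makes the bound $f(S_0)\geq (1-1-2\varepsilon)v=-2\varepsilon v$ vacuous. For the inductive step, using part~(1) and the choice $\alpha=\frac{(1-\varepsilon)v-f(S_0)}{W}$,
\begin{align*}
f(S) &\geq f(S_0)+\alpha c(T) = f(S_0)\Bigl(1-\tfrac{c(T)}{W}\Bigr)+(1-\varepsilon)v\,\tfrac{c(T)}{W}.
\end{align*}
Plugging in the inductive hypothesis $f(S_0)\geq (1-e^{-c(S_0)/W}-2\varepsilon)v$ and simplifying (dropping the nonnegative $\varepsilon v\,c(T)/W$ term) gives
\begin{align*}
f(S) &\geq \Bigl(1-\bigl(1-\tfrac{c(T)}{W}\bigr)e^{-c(S_0)/W}\Bigr)v - 2\varepsilon v,
\end{align*}
and finally $1-c(T)/W\leq e^{-c(T)/W}$ yields $f(S)\geq (1-e^{-c(S)/W}-2\varepsilon)v$ since $c(S)=c(S_0)+c(T)$.

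There is no real obstacle here: the entire argument is a direct transcription of Lemmas~\ref{lem:fact_size}--\ref{lem:c_size}, replacing the counting measure on items with the cost measure $c$. The only minor subtlety to double-check is that the termination condition of the knapsack version (stopping when $f(S)-f(S_0)<\varepsilon v$ rather than when $|S|=K$) does not affect parts~(1)--(3), which are properties that hold \emph{within} a single round and at its end, independent of the stopping rule.
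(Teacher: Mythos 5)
Your proposal is correct and matches the paper exactly: the paper omits the proof of Lemma~\ref{lem:fact_knapsack}, stating only that it follows in a similar way to Lemmas~\ref{lem:fact_size} and~\ref{lem:c_size}, and your argument is precisely that transcription, replacing the counting measure $|T'|$ by the cost $c(T')$ and the threshold test $f(e\mid S)\geq\alpha$ by $f(e\mid S)\geq\alpha c(e)$. Your closing remark that the changed stopping rule is irrelevant to these within-round properties is also right.
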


Furthermore, similarly to the proof of Lemma~\ref{lem:round_size}, we see that the output has size more than $K-c(o_1)$.

\begin{lemma}\label{lem:round_knapsack}
Suppose that we run \textup{\Call{Simple}{$\mathcal{I}; v, \kn$}} with $v \leq f(\opt)$ and $W \geq c(\opt)$.
In the end of the algorithm, it holds that $c(S)>K-c(o_1)$.
\end{lemma}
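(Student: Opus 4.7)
The plan is to prove the claim by contradiction, mirroring the structure of Lemma~\ref{lem:round_size} but accounting for the knapsack (rather than cardinality) capacity. Suppose toward contradiction that at termination $c(S) \leq K - c(o_1)$. I will then argue that every optimal item outside $S$ must have been rejected in the final round \emph{purely on the marginal-value criterion}, and not because it would have violated the knapsack capacity; once this is established, the standard submodular accounting will force the last round's gain to be at least $\varepsilon v$, contradicting the termination condition.

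First I would establish the capacity fact. Let $o \in \opt \setminus S$ and let $S_o$ denote the state of $S$ just before $o$ was examined in the last round. Since $S$ only grows within a round, $c(S_o) \leq c(S)$, and because $c(o) \leq c(o_1)$ by the ordering of optimal items, we have $c(S_o + o) \leq c(S) + c(o_1) \leq K$. Hence the check $c(S_o + o) \leq K$ in Line~6 was satisfied, so the only reason $o$ was not added is that $f(o \mid S_o) < \alpha c(o)$. By Lemma~\ref{lem:fact_knapsack}(2), this implies $f(o \mid S) < \alpha c(o)$ for the final $S$ of the last round as well.

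Next, I would sum these inequalities. By monotonicity and submodularity,
\[
f(\opt) \leq f(S) + \sum_{o \in \opt \setminus S} f(o \mid S) \leq f(S) + \alpha \sum_{o \in \opt \setminus S} c(o) \leq f(S) + \alpha W,
\]
using $c(\opt \setminus S) \leq c(\opt) \leq W$. Substituting $\alpha = ((1-\varepsilon) v - f(S_0))/W$ gives $f(\opt) \leq f(S) + (1-\varepsilon) v - f(S_0)$, and since $v \leq f(\opt)$ we obtain $f(S) - f(S_0) \geq \varepsilon v$. This contradicts the loop's termination condition $f(S) - f(S_0) < \varepsilon v$ at Line~8, completing the proof.

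The only delicate step is the capacity argument in the first paragraph: one must be careful to compare $c(S_o)$ against the \emph{final} $c(S)$ of the round (so that $c(S_o) \leq c(S) \leq K - c(o_1)$) in order to rule out the capacity rejection. Everything after that is a direct transcription of the cardinality-case argument with $|T|$ replaced by $c(T)$ and the threshold per-item weighted by $c(e)$.
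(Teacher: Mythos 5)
Your proof is correct and follows essentially the same route as the paper's: assume $c(S)\leq K-c(o_1)$, note every item of $\opt\setminus S$ was rejected in the last round on the marginal-value test (so $f(e\mid S)<\alpha c(e)$ by Lemma~\ref{lem:fact_knapsack}(2)), sum over $\opt\setminus S$ to get $f(S)-f(S_0)\geq \varepsilon v$, contradicting the stopping rule. Your explicit verification that the capacity check could not be the cause of rejection (via $c(S_e)\leq c(S)\leq K-c(o_1)$ and $c(e)\leq c(o_1)$) is a step the paper leaves implicit, and spelling it out is a welcome clarification rather than a deviation.
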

\begin{proof}
Suppose to the contrary that $c(S)\leq K-c(o_1)$ in the end.
Then, in the last round, each item $e$ in $\opt \setminus S$ is discarded because the marginal return is not large, which implies that $f(e\mid S) < \alpha c(e)$ by Lemma~\ref{lem:fact_knapsack}(2).
As $c(\opt \setminus S) \leq W$ and $\alpha = \frac{(1-\varepsilon)v - f(S_0)}{W}$, where $S_0$ is the initial set in the last round, we have
\[
f(\opt)\leq f(S)+ \sum_{e\in \opt \setminus S}f(e\mid S) \leq f(S)+ \alpha W \leq f(S)+ (1-\varepsilon)v -f(S_0).
\]
Since $v\leq f(\opt)$, we obtain $f(S) - f(S_0) \geq \varepsilon v$, which proves the lemma.
\end{proof}

Thus, we obtain the following approximation ratio, depending on size of the largest item.

\begin{lemma}\label{lem:simpleratio}
Let $\mathcal{I}=(f, c, \kn, E)$ be an instance of the problem \eqref{eq:problem}.
Suppose that $v\leq f(\opt)\leq O(1)v$ and $W \geq c(\opt)$.
The algorithm \textup{\Call{Simple}{$\mathcal{I}; v, W$}} can find in $O(\varepsilon^{-1})$ passes and $O(K)$ space a set $S$ such that $\kn-c(o_1) < c(S)\leq \kn$ and
\begin{equation}\label{eq:simpleratio_knapsack}
f(S)\geq \left(1-e^{-\frac{K-c(o_1)}{W}}-O(\varepsilon)\right)v.
\end{equation}
The total running time is $O(\varepsilon^{-1} n)$.
\end{lemma}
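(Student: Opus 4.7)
The plan is to assemble the lemma from the three facts already established about \Call{Simple}{}: the per-round lower bound on $f(S)$ in terms of $c(S)$ (Lemma~\ref{lem:fact_knapsack}(3)), the lower bound $c(S) > K - c(o_1)$ at termination (Lemma~\ref{lem:round_knapsack}), and an upper bound on the number of rounds coming from the stopping criterion.

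First I would bound the number of rounds. The algorithm stops when either $c(S) = K$ or the most recent round produced an increment $f(S) - f(S_0) < \varepsilon v$. Hence in every non-terminal round $f(S)$ grows by at least $\varepsilon v$. Since $f(S) \le f(\opt) \le O(1)\cdot v$ throughout, the number of rounds is $O(\varepsilon^{-1})$. Each round scans the stream once, performs $O(n)$ marginal-value queries, and stores only the running set $S \subseteq E$ with $c(S) \le K$; hence $O(\varepsilon^{-1})$ passes, $O(K)$ space, and $O(\varepsilon^{-1}n)$ running time.

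Next I would establish the knapsack bounds. The upper bound $c(S) \le K$ is preserved explicitly by the condition $c(S+e) \le K$ at Line 6. The lower bound $c(S) > K - c(o_1)$ is exactly Lemma~\ref{lem:round_knapsack}, which applies under the present hypotheses $v \le f(\opt)$ and $W \ge c(\opt)$. Finally, for the approximation guarantee I would invoke Lemma~\ref{lem:fact_knapsack}(3), which gives
\[
f(S) \ge \left(1 - e^{-c(S)/W} - 2\varepsilon\right) v
\]
at the end of each round, in particular at termination. Since $x \mapsto 1 - e^{-x/W}$ is monotone increasing and $c(S) > K - c(o_1)$, substituting yields
\[
f(S) \ge \left(1 - e^{-(K - c(o_1))/W} - 2\varepsilon\right) v = \left(1 - e^{-(K - c(o_1))/W} - O(\varepsilon)\right) v,
\]
as required.

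I do not anticipate a real obstacle here: the lemma is essentially a repackaging of Lemmas~\ref{lem:fact_knapsack} and~\ref{lem:round_knapsack}. The one subtle point worth double-checking is the round-count argument: because the hypothesis only provides $f(\opt) \le O(1)\cdot v$ (rather than $(1+\varepsilon)v$ as in the cardinality case), one must make sure the implicit constant is absorbed into the $O(\varepsilon^{-1})$ bound on passes and into the $O(\varepsilon)$ slack in the approximation ratio, but this is immediate.
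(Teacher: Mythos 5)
Your proof is correct and follows the paper's own argument: it combines Lemma~\ref{lem:round_knapsack} for the size lower bound, Lemma~\ref{lem:fact_knapsack}(3) for the approximation guarantee, and the $\varepsilon v$-per-round increase (with $f(\opt)\leq O(1)v$) for the $O(\varepsilon^{-1})$ pass bound, exactly as the paper does. Your write-up is in fact slightly more explicit than the paper's (e.g., monotonicity of $1-e^{-x/W}$ and the absorption of the $O(1)$ constant), but it is the same proof.
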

\begin{proof}
Let $S$ be the final set of \Call{Simple}{$\mathcal{I}; v, \kn$}.
By Lemma~\ref{lem:round_knapsack}, the final set $S$ satisfies that $c(S) > \kn-c(o_1)$.
Hence \eqref{eq:simpleratio_knapsack} follows from Lemma~\ref{lem:round_knapsack}~(3).
The number of passes is $O(\varepsilon^{-1})$, as each round increases the $f$-value by $\varepsilon v$ and $f(\opt)\leq O(1)v$.
Hence the running time is $O(\varepsilon^{-1} n)$, and the space required is clearly $O(K)$.
\end{proof}

Lemma~\ref{lem:simpleratio} gives us a good ratio when $c(o_1)$ is small~(see Corollary~\ref{cor:c1_lb} in Section~\ref{sec:0.5overview}).
However, the ratio worsens when $c(o_1)$ becomes larger.
In the next subsection, we show that \Call{Simple}{} can be used to obtain a $(0.39-\varepsilon)$-approximation by ignoring large-size items.

\subsection{$0.39$-Approximation: Ignoring Large Items}\label{sec:039}

Let us remark that \Call{Simple}{} would work for finding a set $S$ that approximates \textit{any subset} $X$.
More precisely,
given an instance $\I=(f, c, \kn, E)$ of the problem \eqref{eq:problem}, consider finding a feasible set to $\I$ that approximates
\begin{quote}
($\ast$) a subset $X\subseteq E$ such that $v \leq f(X)\leq O(1)v$ and $W \geq c(X)$.
\end{quote}
This means that $v$ and $W$ are the approximated values of $f(X)$ and $c(X)$, respectively. 
Let $X=\{x_1, \dots, x_\ell\}$ with $f(x_1)\geq \dots \geq f(x_\ell)$.
Note that $X$ is not necessarily feasible to $\I$, i.e., $c(X)$~(and thus $W$) may be larger than $K$, but we assume that $c(x_i)\leq K$ for any $i=1,\dots, \ell$.
Then \Call{Simple}{$\I ; v, W$} can find an approximation of $X$.
\begin{corollary}\label{cor:simpleratio}
Suppose that we are given an instance $\I = (f,c, K, E)$ for the problem \eqref{eq:problem} and $v, W$ satisfying the above condition \textup{($\ast$)} for some subset $X\subseteq E$.
Then \textup{\Call{Simple}{$\I; v, W$}}
can find a set $S$ in $O(\varepsilon^{-1})$ passes and $O(K)$ space such that $K - c(x_1) < c(S)\leq K$ and 
\[
f(S)\geq \left(1-e^{-\frac{c(S)}{W}}  - O(\varepsilon)\right)v \geq \left(1-e^{-\frac{K - c(x_1)}{W}}  - O(\varepsilon)\right)v.
\]
The total running time is $O(\varepsilon^{-1} n)$.
\end{corollary}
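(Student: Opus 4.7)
The plan is to mirror the proof of Lemma~\ref{lem:simpleratio}, replacing the role of $\opt$ throughout by the generic witness set $X$. The crucial observation is that Lemma~\ref{lem:fact_knapsack} depends only on the internals of \textup{\Call{Simple}{}} (submodularity and the threshold rule) and not on any property of $\opt$, so all three parts carry over verbatim when the algorithm is run with parameters $v$ and $W$. In particular, at the end of every round the current set satisfies
\[
f(S) \geq \left(1 - e^{-c(S)/W} - 2\varepsilon\right) v,
\]
so once we establish the analogue of Lemma~\ref{lem:round_knapsack}, namely $c(S) > K - c(x_1)$ upon termination, the two bounds combine to give the stated guarantee.

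For this analogue, interpret $x_1$ as the largest-cost element of $X$, so that $c(e) \leq c(x_1)$ for every $e \in X$. Suppose, for contradiction, that the algorithm terminates with $c(S) \leq K - c(x_1)$. Then for every $e \in X \setminus S$ we have $c(S+e) \leq K$, so $e$ was rejected purely by the threshold test, and Lemma~\ref{lem:fact_knapsack}(2) yields $f(e \mid S) < \alpha\, c(e)$ at the end of the last round. Summing and using submodularity together with $c(X) \leq W$,
\[
f(X) \leq f(S) + \sum_{e \in X \setminus S} f(e \mid S) < f(S) + \alpha\, c(X \setminus S) \leq f(S) + \alpha W = f(S) + (1-\varepsilon)v - f(S_0),
\]
and combining with $v \leq f(X)$ gives $f(S) - f(S_0) > \varepsilon v$, contradicting the termination condition. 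Hence $c(S) > K - c(x_1)$, and plugging this into the first display yields the claimed approximation.

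For resources, each non-terminating round increases $f(S)$ by at least $\varepsilon v$, and since $f(S) \leq f(\opt) = O(v)$ in the intended regime where $v$ is within a constant factor of the optimum, the algorithm performs $O(\varepsilon^{-1})$ rounds. Each round is a single streaming pass using $O(n)$ oracle calls, and the only state retained is the current solution $S$, giving $O(\varepsilon^{-1} n)$ total time and $O(K)$ space. The main subtlety I anticipate is the ordering convention for $x_1$: the conclusion $c(S) > K - c(x_1)$ only has teeth when $x_1$ is the cost-maximum element of $X$, which is the interpretation I will adopt explicitly so that the step ``$c(S+e) \leq K$ for every $e \in X \setminus S$,'' used to deduce rejection-by-threshold, is unambiguous.
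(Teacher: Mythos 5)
Your proposal is correct and follows essentially the same route as the paper: Lemmas~\ref{lem:fact_knapsack} and \ref{lem:round_knapsack} (together with the round-counting argument of Lemma~\ref{lem:simpleratio}) are purely algorithmic, so rerunning them with $X$ in place of $\opt$ and the cost-maximum element $x_1$ in place of $o_1$ yields $c(S) > K - c(x_1)$ and hence the stated bound. Your reading of $x_1$ as the largest-size element of $X$ is the intended one (the paper's ordering of $X$ by $f$-values is a slip), as the subsequent applications such as Corollary~\ref{cor:IgnoreLarge} confirm.
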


In particular, Corollary~\ref{cor:simpleratio} can be applied to approximate $\opt-o_1$, with estimates of $c(o_1)$ and $f(o_1)$.


\begin{corollary} \label{cor:IgnoreLarge}
Suppose that we are given an instance $\I = (f,c, K, E)$ for the problem \eqref{eq:problem} such that $v\leq f(\opt)\leq O(1)v$ and $W \geq c(\opt)$.
We further suppose that we are given $\uc_1$ with $\uc_1K\leq c(o_1)\leq (1+\varepsilon)\uc_1K$ and $\tau$ with $f(o_1)\leq \tau v$.
Then we can find a set $S$ in $O(\varepsilon^{-1})$ passes and $O(K)$ space such that $K - c(o_2) < c(S)\leq K$ and 
\[
f(S)\geq  (1-\tau)\left(1-e^{-\frac{K - c(o_2)}{W-\uc_1}}  - O(\varepsilon)\right)v.
\]
In particular, when $W=K$, we have
\begin{equation}\label{eq:IgnoreLarge}
f(S)\geq  (1 - \tau)\left(1-e^{-1}  - O(\varepsilon)\right)v.
\end{equation}
\end{corollary}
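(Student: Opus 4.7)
The plan is to apply Corollary~\ref{cor:simpleratio} with the ``target set'' chosen as $X = \opt - o_1$. The whole point of the hypothesis on $(\uc_1,\tau)$ is that it gives us good surrogate estimates $v'$ and $W'$ for $f(\opt-o_1)$ and $c(\opt-o_1)$, and the largest element of $\opt-o_1$ is $o_2$, which is precisely what governs the exponent in the guarantee of Corollary~\ref{cor:simpleratio}.

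First I would verify the hypothesis $(\ast)$ of Corollary~\ref{cor:simpleratio} with the choices $v':=(1-\tau)v$ and $W':=W-\uc_1 K$. By monotonicity and submodularity (specifically subadditivity of $f$ applied to the disjoint sets $\{o_1\}$ and $\opt-o_1$), we have $f(\opt-o_1)\geq f(\opt)-f(o_1)\geq v-\tau v=(1-\tau)v$, so $v'\leq f(X)$. Furthermore $f(X)\leq f(\opt)\leq O(1)v = O(1)v'$, where the last step uses that $\tau$ is bounded away from $1$ in our regime (otherwise the claimed bound is vacuous). On the size side, $c(X)=c(\opt)-c(o_1)\leq W-\uc_1 K = W'$, and every element of $X$ has size at most $K$. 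The largest element of $X$ by size is $o_2$.

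Second, I invoke Corollary~\ref{cor:simpleratio} with these parameters. It produces, in $O(\varepsilon^{-1})$ passes and $O(K)$ space (and $O(\varepsilon^{-1}n)$ time), a set $S$ with $K - c(o_2) < c(S)\leq K$ and
\[
f(S)\geq \left(1-e^{-\frac{K-c(o_2)}{W-\uc_1 K}}-O(\varepsilon)\right)(1-\tau)v,
\]
which is the main bound (up to the mild typo in the denominator of the statement).

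Finally, for the ``in particular'' clause where $W=K$, I would plug in $W-\uc_1 K = K(1-\uc_1)$ and use $c(o_2)\leq c(o_1)\leq (1+\varepsilon)\uc_1 K$ to estimate
\[
\frac{K-c(o_2)}{W-\uc_1 K} \;\geq\; \frac{K-(1+\varepsilon)\uc_1 K}{K(1-\uc_1)} \;=\; 1-\frac{\varepsilon\,\uc_1}{1-\uc_1} \;=\; 1-O(\varepsilon),
\]
valid whenever $\uc_1$ is bounded away from $1$; exponentiating and using $e^{-1+O(\varepsilon)}\leq e^{-1}+O(\varepsilon)$ yields the promised $(1-\tau)(1-e^{-1}-O(\varepsilon))v$ bound. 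The only mildly delicate point is tracking the $O(1)$ slack in the hypothesis of Corollary~\ref{cor:simpleratio}: we need $f(X)/v' = f(\opt-o_1)/((1-\tau)v)=O(1)$, which holds as long as $\tau$ is bounded away from $1$ — exactly the regime in which the stated bound is non-trivial. No genuinely new computation is required beyond this bookkeeping.
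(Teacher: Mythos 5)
Your overall route is exactly the paper's: apply Corollary~\ref{cor:simpleratio} to $X=\opt-o_1$ with the surrogate values $v'=(1-\tau)v$ and $W'=W-\uc_1K$, noting that the largest item of $X$ is $o_2$. However, two of the points you dismiss as bookkeeping are places where your argument, as written, would fail for parameter values the corollary must cover. First, condition $(\ast)$ needs $f(X)\leq O(1)v'$, and you justify it by assuming ``$\tau$ bounded away from $1$, otherwise the bound is vacuous.'' The bound is not vacuous for any $\tau<1$, and the statement places no restriction on $\tau$; without the $O(1)$ upper bound, \Call{Simple}{} only terminates in $O\bigl(\varepsilon^{-1}f(\opt-o_1)/((1-\tau)v)\bigr)$ passes, which blows up as $\tau\to 1$. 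The paper closes this by an explicit reduction: if $\tau\geq 0.5$ the best singleton already satisfies the claimed inequality, so one may assume $\tau\leq 0.5$, and then $f(\opt-o_1)\leq v\leq 2(1-\tau)v$ supplies the needed constant. Some such reduction (or an explicit hypothesis on $\tau$) is required.

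The more substantive gap is in the ``in particular'' clause. You bound $c(o_2)\leq c(o_1)\leq(1+\varepsilon)\uc_1K$ and conclude $\frac{K-c(o_2)}{K(1-\uc_1)}\geq 1-\frac{\varepsilon\,\uc_1}{1-\uc_1}$, which is $1-O(\varepsilon)$ only when $\uc_1$ is bounded away from $1$. But the corollary is applied (e.g., in the $(0.39-\varepsilon)$-approximation proof) with $\uc_1$ guessed in $[0.505,1]$, i.e., precisely in the regime where $c(o_1)$ may be arbitrarily close to $K$; there your estimate degenerates (already $\uc_1=1-\sqrt{\varepsilon}$ only gives an exponent $1-O(\sqrt{\varepsilon})$, and $\uc_1$ closer to $1$ is worse), so the claimed $(1-\tau)(1-e^{-1}-O(\varepsilon))v$ does not follow. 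The missing idea is the paper's case split, which uses the feasibility constraint $c(o_1)+c(o_2)\leq c(\opt)\leq K$: if $c_2\leq\uc_1$ then $\frac{1-c_2}{1-\uc_1}\geq 1$ trivially; if $c_2\geq\uc_1$ then $c_2\leq 1-c_1\leq 1-\uc_1$ forces $\uc_1\leq 0.5$, and only in that case is the inequality $1-\varepsilon\,\uc_1/(1-\uc_1)\geq 1-\varepsilon$ invoked. In short, when $\uc_1$ is large you must bound $c(o_2)$ by $K-c(o_1)$ rather than by $c(o_1)$; with that fix (and the $\tau\leq 0.5$ reduction) your argument coincides with the paper's proof.
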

\begin{proof}
We may assume that $\tau\leq 0.5$, as otherwise by taking a singleton $e$ with maximum return $f(e)$, we have $f(e)\geq \tau v$, implying that $S=\{e\}$ satisfies the inequality as $\tau\geq 0.5$.
Moreover, it holds that $c(\opt-o_1)\leq W - \underline{c}_1 K$ and $f(\opt - o_1) \geq f(\opt)-f(o_1)\geq (1-\tau)v$, and thus $f(\opt - o_1)\leq v \leq 2(1-\tau)v$.
Using the fact, we perform \Call{Simple}{$\I; (1-\tau)v, W - \underline{c}_1 K$} to approximate $\opt-o_1$.
Since the largest size in $\opt-o_1$ is $c(o_2)$, by Corollary~\ref{cor:simpleratio}, we can find a set $S$ such that $K - c(o_2) < c(S)\leq K$ and 
\[
f(S)\geq  (1 - \tau)\left(1-e^{-\frac{K - c(o_2)}{W-\uc_1K}}  - O(\varepsilon)\right)v.
\]
Thus the first part of the lemma holds.

When $W=K$, the above bound is equal to
\begin{equation}\label{eq:039proof}
f(S)\geq  (1 - \tau)\left(1-e^{-\frac{1 - c_2}{1-\uc_1}}  - O(\varepsilon)\right)v.
\end{equation}
We note that 
\[
\frac{1-c_2}{1-\uc_1} \geq 1-\varepsilon.
\]
Indeed, the inequality clearly holds when $c_2\leq \uc_1$.
Consider the case when $c_2\geq \uc_1$.
Then, since $c_2\leq 1-\uc_1$, we see that $\uc_1\leq 0.5$.
Hence, since $c_2\leq c_1\leq (1+\varepsilon)\uc_1$, we obtain
\[
\frac{1-c_2}{1-\uc_1} \geq 1 - \varepsilon \frac{\uc_1}{1-\uc_1}\geq 1 - \varepsilon,
\]
where the last inequality holds since $\uc_1\leq 0.5$.
Thus we have \eqref{eq:IgnoreLarge} from \eqref{eq:039proof}.
\end{proof}

The above corollary, together with Lemma~\ref{lem:simpleratio}, delivers a $(0.39-\varepsilon)$-approximation.

\begin{corollary} 
Suppose that we are given an instance $\I = (f,c, K, E)$ for the problem \eqref{eq:problem} with $v\leq f(\opt)\leq (1+\varepsilon)v$.
Then we can find a $(0.39-O(\varepsilon))$-approximate solution in $O(\varepsilon^{-1})$ passes and $O(\varepsilon^{-1}K)$ space.
The total running time is $O(\varepsilon^{-2} n)$.
\end{corollary}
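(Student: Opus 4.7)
I will run three families of candidate solutions in parallel and return the best. The first is a direct call to \Call{Simple}{$\I; v, K$}, which by Lemma~\ref{lem:simpleratio} produces a set $S_0$ with $f(S_0) \ge (1 - e^{-(1 - c_1)} - O(\varepsilon))v$ where $c_1 := c(o_1)/K$; this comfortably beats $0.39v$ when $c_1$ is small. The second family applies Corollary~\ref{cor:IgnoreLarge} on a grid of guesses of $c_1$, which handles the case in which $o_1$ is large in size. The third candidate is the best singleton $e^{\dagger} := \argmax_{e \in E} f(e)$, which handles the case in which $o_1$ carries a large fraction of the optimal value.

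For the second family, I iterate $\uc_1$ over the arithmetic progression $\{\varepsilon, 2\varepsilon, \ldots, 1-\varepsilon\}$ of $O(\varepsilon^{-1})$ values. In one preliminary streaming pass I compute, for each $\uc_1$, the element $e^{\ast}_{\uc_1} := \argmax\{f(e) : \uc_1 K \le c(e) \le (\uc_1 + \varepsilon)K\}$ and set $\tau_{\uc_1} := f(e^{\ast}_{\uc_1})/v$; then, in parallel, I run the algorithm of Corollary~\ref{cor:IgnoreLarge} with parameters $\uc_1$ and $\tau_{\uc_1}$, obtaining a candidate $S_{\uc_1}$. The proof of Corollary~\ref{cor:IgnoreLarge} goes through for the additive guess $\uc_1 \le c_1 \le \uc_1 + \varepsilon$ with only an $O(\varepsilon)$ loss: either $\uc_1 \le 1/2$, in which case $\varepsilon/(1-\uc_1) \le 2\varepsilon$, or $\uc_1 > 1/2$, in which case $c_2 \le 1 - c_1 < \uc_1$ and so $(1 - c_2)/(1 - \uc_1) > 1$ directly. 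Overall the algorithm performs $O(\varepsilon^{-1})$ parallel \Call{Simple}{} instances plus one preliminary pass, costing $O(\varepsilon^{-1})$ passes, $O(\varepsilon^{-1}K)$ space, and $O(\varepsilon^{-2}n)$ time; reading $\tau_{\uc_1}$ off the precomputed $e^{\ast}_{\uc_1}$ rather than maintaining a second independent grid of $\tau$-guesses is what keeps the parallel count at $O(\varepsilon^{-1})$, and is the one point that needs care.

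For correctness, let $c_1^{\ast} := c(o_1)/K$ and $\tau_{\ast} := f(o_1)/v$. If $c_1^{\ast} < \varepsilon$, the first candidate gives $f(S_0) \ge (1 - e^{-(1-\varepsilon)} - O(\varepsilon))v \ge (1 - e^{-1} - O(\varepsilon))v$. Otherwise, set $\uc_1^{\ast} := \varepsilon \lfloor c_1^{\ast}/\varepsilon \rfloor$; then $o_1$ lies in the $\uc_1^{\ast}$-bin, so $\tau := \tau_{\uc_1^{\ast}} \ge f(o_1)/v = \tau_{\ast}$. Corollary~\ref{cor:IgnoreLarge} yields $f(S_{\uc_1^{\ast}}) \ge (1 - \tau)(1 - e^{-1} - O(\varepsilon))v$, whereas the singleton obeys $f(\{e^{\dagger}\}) \ge f(e^{\ast}_{\uc_1^{\ast}}) = \tau v$. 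The minimum over $\tau \in [0,1]$ of $\max\bigl((1-\tau)(1-e^{-1}),\,\tau\bigr)$ is attained at $\tau = (1-e^{-1})/(2 - e^{-1}) \approx 0.462$ with value $\approx 0.462$, so the better of these two candidates has value at least $(0.39 - O(\varepsilon))v \ge (0.39 - O(\varepsilon))\,f(\opt)$.
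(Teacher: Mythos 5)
There is a genuine numerical gap at the very last step. The crossing point of $\max\bigl((1-\tau)(1-e^{-1}),\,\tau\bigr)$ is indeed $\tau^\ast=\frac{1-e^{-1}}{2-e^{-1}}$, but this evaluates to $\approx 0.387$, not $\approx 0.462$ as you claim ($1-e^{-1}\approx 0.632$, $2-e^{-1}\approx 1.632$). Consequently your trade-off between the bin-singleton candidate ($\geq \tau v$) and the \Call{IgnoreLarge}{}-type candidate with exponent $\geq 1-O(\varepsilon)$ only certifies a $(0.387-O(\varepsilon))$-approximation, which falls short of the stated $0.39$. The rest of the proposal is sound: the additive grid for $\uc_1$ with $\tau_{\uc_1}$ read off the bin's maximum element is a legitimate variant of the paper's geometric guessing (and your re-derivation of Corollary~\ref{cor:IgnoreLarge} for the additive guess, splitting on $\uc_1\lessgtr 1/2$, is correct; note also that your choice of $\tau_{\uc_1}$ as the value of an actual element makes the corollary's internal ``$\tau\leq 0.5$ or return the best singleton'' reduction valid, so the $O(\varepsilon^{-1})$ pass count survives), and the resource accounting matches the statement. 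But the approximation factor itself is not established.

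To reach $0.39$ you must also exploit the \emph{size} of $o_1$, which is exactly what the paper does and what your candidate $S_0$ is sitting there for but is only invoked when $c_1<\varepsilon$. The paper splits at $c(o_1)\lessgtr 0.505K$: if $c_1\leq 0.505$, then \Call{Simple}{$\I;v,K$} alone gives $\bigl(1-e^{-0.495}-O(\varepsilon)\bigr)v\geq (0.39-O(\varepsilon))v$ by Lemma~\ref{lem:simpleratio}; if $c_1>0.505$, then $c_2\leq 1-c_1<0.495$, so the exponent in Corollary~\ref{cor:IgnoreLarge} improves from $1$ to $\frac{1-c_2}{1-\uc_1}\geq \frac{0.505}{0.495}\approx 1.02$, and balancing $(1-\tau)\bigl(1-e^{-1.02}\bigr)$ against the singleton value $\tau$ (the paper fixes the threshold $\tau=0.39$ on $f(o_1)$) yields $0.61\bigl(1-e^{-1.02}\bigr)\approx 0.390$. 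Your framework can absorb this fix verbatim --- add the case split on $c_1$ at $\approx 0.505$, use $S_0$ in the small-$c_1$ case, and use the improved exponent in the large-$c_1$ case --- but as written the argument does not prove the claimed constant.
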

\begin{proof}
Fist suppose that $c(o_1)\leq 0.505K$.
Then Lemma~\ref{lem:simpleratio} with $W=K$ implies that we can find a set $S_1$ such that 
$$
f(S_1)\geq \left(1-e^{-\frac{\kn-c(o_1)}{K}}-O(\varepsilon)\right)v\geq \left(1-e^{-(1-0.505)}-O(\varepsilon)\right)v\geq (0.39-O(\varepsilon))v.
$$
Thus we may suppose that $c(o_1) > 0.505K$.
We guess $\uc_1$ with $\uc_1K\leq c(o_1)\leq (1+\varepsilon)\uc_1K$ by a geometric series of the interval $[0.505, 1.0]$, i.e., we find $\uc_1$ such that $0.505\leq \uc_1\leq c(o_1)/K \leq (1+\varepsilon)\uc_1\leq 1$ using $O(\varepsilon^{-1})$ space.
We may also suppose that $f(o_1) < 0.39v$, as otherwise we can just take a singleton with maximum return from $E$.
By Corollary~\ref{cor:IgnoreLarge} with $W=K$ and $\tau =0.39$, we can find a set $S_2$ such that 
\[
f(S_2)\geq 0.61 \left(1-e^{-\frac{1 - c_2}{1-\uc_1}}  - O(\varepsilon)\right)v.
\]
Since $c_2\leq 1-\uc_1\leq 0.495$, we have
\[
\frac{1 - c_2}{1-\uc_1} \geq \frac{1 - 0.495}{1-0.505}\geq 1.02.
\]
Therefore, it holds that 
\[
f(S_2)\geq 0.61 \left(1-e^{-1.02}  - O(\varepsilon)\right) \geq (0.39-O(\varepsilon))v.
\]
This completes the proof.
\end{proof}



\section{$0.46$-Approximation Algorithm}\label{sec:046}

In this section, we present a $(0.46-\varepsilon)$-approximation algorithm for the knapsack-constrained problem.
In our algorithm, we assume that we know in advance approximations of $c_1$ and $c_2$.
That is, we are given $\underline{c}_i, \overline{c}_i$ such that $\underline{c}_i\leq c_i\leq  \overline{c}_i$ and $\overline{c}_i\leq (1+\varepsilon)\underline{c}_i$ for $i\in\{1,2\}$.
Define $E_i = \{ e\in E\mid c(e)\in[\underline{c}_i, \overline{c}_i]\}$ for $i\in\{1,2\}$.
We call items in $E_1$ {\it large items}, and items in $E\setminus (E_1\cup E_2)$ are {\it small}.
Notice that we often distinguish the cases $c_1 \leq 0.5$ and $c_1 \geq 0.5$. In the former case, we assume that $\overline{c}_1 \leq 0.5$ while in the latter, 
$\underline{c}_1 \geq 0.5$. 

We first show that we may assume that $c_1+c_2\leq 1-\varepsilon$.
This means that we may assume that $\overline{c}_1+\overline{c}_2\leq 1$.
See Appendix for the proof.

\begin{lemma}\label{lem:Assumption1}
Suppose that we are given $v$ such that $v\leq f(\opt) \leq (1+\varepsilon) v$.
If $c_1+c_2\leq 1- \varepsilon$, we can find a $(0.5-O(\varepsilon))$-approximate solution in $O(\varepsilon^{-1}\kn)$ space using $O(\varepsilon^{-1})$ passes.
The total running time is $O(n \varepsilon^{-1})$.
\end{lemma}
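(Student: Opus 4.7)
My plan is to return the best of a small collection of candidate solutions and argue that, under the hypothesis $c_1+c_2\leq 1-\varepsilon$ (equivalently $\overline{c}_1+\overline{c}_2\leq 1$, leaving at least $\varepsilon K$ of slack after packing $o_1$ and $o_2$), at least one candidate attains value $(1/2-O(\varepsilon))v$. I maintain three candidates: (A) the best singleton $e^{\ast}:=\arg\max_{e\in E}f(e)$, which handles $\tau_1:=f(o_1)/v\geq 1/2$; (B) the output $S_0$ of \textup{\Call{Simple}{$\I;v,K$}}, which by Lemma~\ref{lem:simpleratio} satisfies $f(S_0)\geq (1-e^{-(1-c_1)}-O(\varepsilon))v\geq (1/2-O(\varepsilon))v$ whenever $c_1\leq 1-\ln 2$; and (C) a ``guess-and-pack'' candidate for the residual regime $\tau_1<1/2$ and $c_1>1-\ln 2$.

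For (C), I enumerate $\underline{c}_1,\overline{c}_1$ on a $(1+\varepsilon)$-multiplicative grid inside the constant-sized interval $[1-\ln 2,\,1-\varepsilon]$, yielding $O(\varepsilon^{-1})$ guesses processed in parallel over the stream. For each guess I select $e_1\in\arg\max\{f(e):c(e)\in[\underline{c}_1K,\overline{c}_1K]\}$, run \textup{\Call{Simple}{}} on the contracted instance $(f_{e_1},c,K-c(e_1),E\setminus\{e_1\})$, where $f_{e_1}(A):=f(A\cup\{e_1\})-f(e_1)$, with parameters $v':=v-f(e_1)$ and $W':=K-c(e_1)$, and output $\{e_1\}\cup S'$. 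When the guess is correct, $f(e_1)\geq f(o_1)=\tau_1 v$, and the hypothesis $\overline{c}_1+\overline{c}_2\leq 1$ ensures $o_2$ still fits in the contracted budget, so Corollary~\ref{cor:simpleratio} applied to the contracted instance with target $\opt-o_1$ yields $f_{e_1}(S')\geq (1-e^{-(K-c(e_1)-c(o_2))/W'}-O(\varepsilon))\,f_{e_1}(\opt-o_1)$.

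The main obstacle is a clean lower bound on $f_{e_1}(\opt-o_1)$ when $e_1\neq o_1$: monotonicity and submodularity give only $f_{e_1}(\opt-o_1)\geq (1-\tau_1)v-f(e_1)$, which is weak when $f(e_1)$ is large. I would resolve this by a dichotomy on $f(e_1)$: if $f(e_1)\geq (1/2-O(\varepsilon))v$ then candidate~(A) already wins since $f(e^{\ast})\geq f(e_1)$; otherwise the combined estimate $f(\{e_1\}\cup S')\geq f(e_1)+(1-e^{-x})((1-\tau_1)v-f(e_1))$ with $x=(K-c(e_1)-c(o_2))/W'$, together with the hypothesis, should be driven to $(1/2-O(\varepsilon))v$ by an arithmetic optimization over $(c_1,c_2,\tau_1)$ in the residual regime. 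If a subcase still falls short when $c_2$ approaches the threshold $1-c_1-\varepsilon$ (so $x$ collapses to $\Theta(\varepsilon)$), I would add a fourth candidate of the same style that also guesses $\underline{c}_2,\overline{c}_2$ and prepends a second replacement item $e_2$ before running \textup{\Call{Simple}{}} on the further-contracted instance. The case-by-case bookkeeping is where I expect the bulk of the appendix work to go.

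For resources, the $O(\varepsilon^{-1})$ grid guesses are processed in parallel over the same stream; each \textup{\Call{Simple}{}} invocation uses $O(K)$ space and $O(\varepsilon^{-1})$ passes with $O(n\varepsilon^{-1})$ aggregate work by Lemma~\ref{lem:simpleratio}. Summing, the algorithm uses $O(\varepsilon^{-1}K)$ space, $O(\varepsilon^{-1})$ passes, and $O(n\varepsilon^{-1})$ running time, as claimed.
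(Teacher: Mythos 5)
You have proved (or rather, sketched) the wrong statement: the inequality in the lemma as printed is a typo. The lemma is invoked (Section~\ref{sec:046}, proof of Theorem~\ref{thm:046}) to justify ``we may \emph{assume} $c_1+c_2\leq 1-\varepsilon$,'' so the case it must handle is the complementary one, $c_1+c_2\geq 1-\varepsilon$, and that is exactly what the appendix proof does: it exploits that $c(\opt-o_1-o_2)\leq \varepsilon K$ is tiny. Concretely, the paper first disposes of $f(o_1+o_2)\geq 0.75v$ by a two-pass ``red items'' pairing routine (Claim~\ref{clm:appendix1}, via Theorem~\ref{thm:redItems}), giving $\tfrac{2}{3}\cdot 0.75v=0.5v$; otherwise $f(\opt-o_1-o_2)\geq 0.25v$, and it splits on $c_1\gtrless 1-\sqrt{\varepsilon}$: in one case $\opt-o_1$ has total size at most $\sqrt{\varepsilon}K$ and a single run of \textsc{Simple} with $W=\sqrt{\varepsilon}K$ already yields $(0.5-O(\varepsilon))v$ because the exponent $\frac{1-\sqrt{\varepsilon}}{\sqrt{\varepsilon}}$ is huge; in the other case it packs the $\varepsilon K$-cheap set $\opt-o_1-o_2$ into a $\sqrt{\varepsilon}K$ budget to get $0.25v$, and then runs \textsc{Simple} a second time on the residual function $f(\cdot\mid Y)$ to collect another $0.25v$. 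None of this is available under your reading of the hypothesis, and indeed the statement you set out to prove ($c_1+c_2\leq 1-\varepsilon$, ratio $0.5-O(\varepsilon)$, $O(\varepsilon^{-1}K)$ space, $O(n\varepsilon^{-1})$ time) is essentially the paper's main theorem, which Sections~\ref{sec:046}--\ref{sec:0.5} obtain only with far heavier machinery and $O(K\varepsilon^{-7}\log^2K)$ space.

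Even judged on its own terms, your argument has a genuine gap, which you partly acknowledge. Take $c_1=c_2=(1-\varepsilon)/2$ with $f(o_1),f(o_2)$ both small: candidate (A) is useless, candidate (B) gives only $\left(1-e^{-1/2}\right)v\approx 0.39v$ by Lemma~\ref{lem:simpleratio}, and in candidate (C) the exponent $x=\frac{K-c(e_1)-c(o_2)}{K-c(e_1)}$ collapses to $\Theta(\varepsilon)$ while $f(e_1)$ may also be $o(v)$, so the ``combined estimate'' tends to $0$, not $0.5v$; no arithmetic optimization closes this. The proposed fourth candidate does not repair it: choosing $e_1,e_2$ greedily as maximum-$f$-value items in the size windows gives no lower bound on $f(\opt-o_1-o_2\mid\{e_1,e_2\})$ --- the chosen items can be nearly redundant with $\opt-o_1-o_2$ (or with each other) --- which is precisely why the paper needs the good-item machinery (Theorem~\ref{thm:APPROX}, Lemmas~\ref{lem:good_e_1} and~\ref{lem:good_e_2}, \textsc{LargeFirst}, and the two-phase analysis of Section~\ref{sec:0.5}) and even then pays much more in space and passes. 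There is also a parameter inconsistency in (C): you call \textsc{Simple} with $v'=v-f(e_1)$, but Corollary~\ref{cor:simpleratio} requires $v'\leq f_{e_1}(\opt-o_1)$, and you only establish $f_{e_1}(\opt-o_1)\geq(1-\tau_1)v-f(e_1)$.
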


The main idea of our algorithm is to choose an item $e\in E_1$ such that both $f(\opt - o_1\mid e)$ and $f(\opt - o_1 - o_2\mid e)$ are large.
After having this item $e$, we define $g(\cdot)=f(\cdot\mid e)$, and consider the problem:
\begin{equation}
  \text{maximize\ \  }g(S) \quad \text{subject to \ } c(S)\leq K-c(e),\quad S\subseteq E.\label{eq:p1}\\
\end{equation}
We then try to find feasible sets to \eqref{eq:p1} that approximate $\opt - o_1$ and $\opt -o_1 - o_2$.
These solutions, together with the item $e$, will give us well-approximate solutions for the original instance.
More precisely, we have the following observation.

\begin{observation}\label{obs:1}
Let $e\in E$ be an item.
Define $g(\cdot)=f(\cdot \mid e)$.
If $g(\opt - o_1)\geq p_1v$ and $S_1$ is a feasible set to the problem \eqref{eq:p1} such that $g(S_1)\geq \kappa_1 p_1v$, then it holds that $c(\{e\}\cup S_1)\leq K$ and
\[
f(\{e\}\cup S_1)\geq f(e) + \kappa_1 p_1 v.
\]
Similarly, 
if $g(\opt - o_1-o_2)\geq p_2v$ and $S_2$ is a feasible set to the problem \eqref{eq:p1} such that $g(S_2)\geq \kappa_2 p_2v$, then it holds that $c(\{e\}\cup S_2)\leq K$ and
\[
f(\{e\}\cup S_2)\geq f(e) + \kappa_2 p_2 v.
\]
\end{observation}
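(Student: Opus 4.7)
The claim is a direct consequence of the definition $g(\cdot)=f(\cdot\mid e)$ together with the feasibility of $S_1$ (respectively $S_2$) to problem~\eqref{eq:p1}; no submodularity or monotonicity is actually invoked. The plan is to first rewrite $f(\{e\}\cup S_1)=f(e)+f(S_1\mid e)=f(e)+g(S_1)$, which is immediate from the definition of the marginal return. Substituting the hypothesis $g(S_1)\geq \kappa_1 p_1v$ yields $f(\{e\}\cup S_1)\geq f(e)+\kappa_1 p_1 v$, as required. The identical two-line computation, with $S_2$ in place of $S_1$, gives the analogous bound $f(\{e\}\cup S_2)\geq f(e)+\kappa_2 p_2 v$.

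For the feasibility assertion, I would simply note that because $S_1$ is feasible to~\eqref{eq:p1}, we have $c(S_1)\leq K-c(e)$. Hence $c(\{e\}\cup S_1)\leq c(e)+c(S_1)\leq K$, regardless of whether $e\in S_1$ or not (equality in the first step when $e\notin S_1$). The same bound holds for $S_2$. The hypotheses $g(\opt-o_1)\geq p_1 v$ and $g(\opt-o_1-o_2)\geq p_2 v$ are not consumed in the derivation itself; they are stated only to frame $S_1$ and $S_2$ as $\kappa_1$- and $\kappa_2$-approximations to the relevant restricted versions of $\opt$, which is how the observation will later be invoked in the $0.46$-approximation algorithm. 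Because everything reduces to one line of arithmetic in each case, there is no substantive obstacle.
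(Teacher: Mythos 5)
Your proposal is correct and matches the intended argument: the paper treats this as an immediate observation, and the identity $f(\{e\}\cup S_i)=f(e)+g(S_i)$ together with $c(S_i)\leq K-c(e)$ is exactly what justifies it. Your remark that the hypotheses $g(\opt-o_1)\geq p_1v$ and $g(\opt-o_1-o_2)\geq p_2v$ are not used in the derivation itself, but only in how the observation is later applied, is also accurate.
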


To make the RHSs in Observation~\ref{obs:1} large, we aim to find an item $e$ from $E_1$ such that $f(e)\approx f(o_1)$  and $p_1, p_2$ are large simultaneously.
We propose two algorithms for finding such $e$ in Section~\ref{eq:046gooditem}.
We then apply \Call{Simple}{} to approximate $\opt-o_1$ and $\opt-o_1-o_2$ for \eqref{eq:p1}, respectively.
Since the largest item sizes in $\opt-o_1$ and $\opt-o_1-o_2$ are smaller, the performances $\kappa_1$ and $\kappa_2$ of \Call{Simple}{} are better than just applying \Call{Simple}{} to the original instance.
Therefore, the total approximation ratio becomes at least $0.46$.
The following subsections give the details.

\subsection{Finding a Good Item}\label{eq:046gooditem}


One of the important observation is the following, which is useful for analysis when $c_1\leq 0.5$.

\begin{lemma}\label{lem:good_e_1}
Let $e_0\in E$.
Suppose that $f(\opt)\geq v$.
If $f(e_0+o_1)<\beta v$, then we have 
\[
f(\opt-o_1\mid e_0)\geq (1-\beta) v.
\]
Moreover, if $f(e_0+o_2)<\beta v$ in addition, then we obtain
\[
f(\opt-o_1-o_2\mid e_0)\geq (1- 2\beta + f(e_0)) v.
\]
\end{lemma}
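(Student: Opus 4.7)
The plan is to derive both inequalities from two standard applications of the submodularity law in the form $f(A) + f(B) \geq f(A \cup B) + f(A \cap B)$, together with monotonicity.

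For the first inequality, I would apply submodularity to $A = (\opt - o_1) \cup \{e_0\}$ and $B = \{o_1, e_0\}$, whose union is $\opt \cup \{e_0\}$ and whose intersection is $\{e_0\}$. Rearranging gives
\[
f(\opt - o_1 \mid e_0) \;\geq\; f(\opt \cup \{e_0\}) - f(\{o_1, e_0\}) \;\geq\; f(\opt) - f(e_0 + o_1),
\]
where the second step uses monotonicity. The hypotheses $f(\opt) \geq v$ and $f(e_0 + o_1) < \beta v$ then deliver the bound $(1-\beta)v$.

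For the second inequality (which I read as the natural bound $(1-2\beta)v + f(e_0)$, the stated formula apparently being a minor typo in the statement), I would run the same submodularity argument with $A = (\opt - o_1 - o_2) \cup \{e_0\}$ and $B = \{o_1, o_2, e_0\}$ to obtain
\[
f(\opt - o_1 - o_2 \mid e_0) \;\geq\; f(\opt) - f(\{o_1, o_2, e_0\}).
\]
The remaining task is to upper-bound $f(\{o_1,o_2,e_0\})$. Here I would apply submodularity once more to $\{o_1, e_0\}$ and $\{o_2, e_0\}$, whose union is $\{o_1,o_2,e_0\}$ and whose intersection is $\{e_0\}$, yielding
\[
f(\{o_1,o_2,e_0\}) \;\leq\; f(e_0 + o_1) + f(e_0 + o_2) - f(e_0) \;<\; 2\beta v - f(e_0).
\]
Combining the two bounds gives $f(\opt - o_1 - o_2 \mid e_0) \geq (1-2\beta)v + f(e_0)$.

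There is no real obstacle: the proof is purely two uses of the submodular inequality plus monotonicity, and the only subtlety is choosing the right pair $(A,B)$ so that the intersection reduces to $\{e_0\}$ and the union absorbs the item(s) we want to remove from $\opt$. I would state each application of submodularity explicitly to keep the bookkeeping transparent.
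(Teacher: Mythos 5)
Your proof is correct and is essentially the paper's own argument: the lattice inequality $f(A)+f(B)\geq f(A\cup B)+f(A\cap B)$ with your choices of $A,B$ is exactly the submodularity step the paper phrases via marginals ($f(o_1\mid e_0)\geq f(o_1\mid (\opt-o_1)+e_0)$, and similarly for $o_2$), combined with the same use of monotonicity $f(\opt\cup\{e_0\})\geq f(\opt)$. Your reading of the second bound as $(1-2\beta)v+f(e_0)$ is also what the paper's proof actually derives, so the discrepancy with the displayed statement is indeed just a typo there.
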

\begin{proof}
By assumption, it holds that $\beta v > f(e_0+o_1) = f(e_0) + f(o_1\mid e_0)$, implying 
\[
f(\opt - o_1\mid e_0) \geq f(\opt\mid e_0) - f(o_1\mid e_0) \geq (f(\opt)-f(e_0)) - (\beta v - f(e_0)) \geq (1-\beta) v.
\]
Moreover, if $f(e_0+o_2)<\beta v$ in addition, then we have
$\beta v > f(e_0+o_2) = f(e_0) + f(o_2\mid e_0)$, implying 
\[
f(\opt - o_1 - o_2 \mid e_0) \geq f(\opt - o_1\mid e_0) - f(o_2 \mid e_0) \geq (1-\beta) v- (\beta v - f(e_0)).
\]
Thus the statement holds.
\end{proof}

When $c_1 \leq \overline{c}_1 \leq 0.5$, for any item $e_0\in E_1$, we see that $e_0+o_1$ is a feasible set.
Hence, by checking whether $f(e_0+e')\geq \beta v$ for some $e'\in E$ using a single pass, 
it holds that, either we have a feasible set $e_0+e'$ such that $f(e_0+e')\geq \beta v$, or we bound $f(\opt - o_1 \mid e_0)$ and $f(\opt - o_1 - o_2 \mid e_0)$ from below by the above lemma.

\medskip

Another way to lower-bound $p_1$ and $p_2$ in Observation~\ref{obs:1} is to use the algorithm in~\cite{APPROX17}. 
It is difficult to correctly identify $o_1$ among the items in $E_1$, but we can nonetheless find a reasonable approximation of it by a single pass~\cite{APPROX17}. 
For the sake of convenience, 
we define a procedure \Call{PickNiceItem}{}. 
This procedure \Call{PickNiceItem}{} takes an estimate $v$ of $f(\opt)$ along with the estimate of the size of $o_1$ and of its $f$-value. 
It then returns an item of similar size, which, together with $\opt - o_1$, guarantees $(2/3-O(\varepsilon))v$. 
More precisely, we have the following proposition.

\begin{theorem}[\cite{APPROX17}]\label{thm:APPROX}
Let $X\subseteq E$ such that $f(X)\geq v$.
Furthermore, assume that there exists $x_1\in X$ such that $ \underline{c} \kn\leq c(x_1) \leq \overline{c}\kn$ and $\tau v/(1+\varepsilon)\leq f(x_1) \leq \tau v$.
Then \textup{\Call{PickNiceItem}{$v, (\underline{c},\overline{c}), \tau$}}, a single-pass streaming algorithm using 
$O( 1)$ space, returns a set $Y$ of $O(1)$ items such that some item $e^\ast$ in $Y$ satisfies 
\[
f(X - x_1 + e^\ast) \geq \Gamma (f(x_1))v - O(\varepsilon)v,
\]
where 
\[
\Gamma (t)= 
\begin{cases}
\frac{2}{3} & \mbox{if \quad $t\geq 0.5$}\\
\frac{5}{6}-\frac{t}{3} & \mbox{if \quad $0.5 \geq t\geq 0.4$}\\
\frac{9}{10}-\frac{t}{2} & \mbox{if \quad $0.4\geq t \geq 0$}.
\end{cases}
\]
Moreover, for any item $e\in Y$, we have $\tau v/(1+\varepsilon)\leq f(e) \leq \tau v$ and $\uc\kn \leq c(e) \leq \oc\kn$.
\end{theorem}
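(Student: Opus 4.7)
The result is stated as being imported from~\cite{APPROX17}, so my plan is to outline the structure of \Call{PickNiceItem}{} and indicate how the piecewise function $\Gamma$ arises from the case analysis, rather than re-deriving the bound in full. The algorithm makes a single pass and maintains an $O(1)$-sized pool $Y$ of candidates whose size lies in $[\underline{c}K, \overline{c}K]$ and whose $f$-value lies in $[\tau v/(1+\varepsilon), \tau v]$ (this filtering gives the ``Moreover'' clause immediately and bounds $|Y|$). In parallel, the algorithm tracks a few auxiliary pairings of such items: for instance the pair $(e,e')$ maximizing $f(e + e')$, the pair maximizing $f(e \mid e')$, and a singleton that combines well with previously seen candidates. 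These auxiliary quantities are all maintainable in $O(1)$ memory by replacing weaker pairs as better ones arrive.

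For the main inequality, write $a := x_1$, so $f(a) \in [\tau v/(1+\varepsilon), \tau v]$ and by monotonicity $f(X - a) \geq (1 - \tau)v$. The key observation is that although the algorithm does not know which item in $Y$ corresponds to $a$, any stored $e$ satisfies $f(e) \geq \tau v/(1+\varepsilon)$ and $c(e)$ is comparable to $c(a)$, so $X - a + e$ is feasible whenever $X$ is. One then argues that either the direct substitution $e^* \leftarrow e$ into $X - a$ already delivers the required fraction of $v$, or else a pair detected by the auxiliary trackers exceeds the threshold, in which case that pair plays the role of $\{a,e^*\}$ in the bound. The three pieces of $\Gamma$ correspond to which side of this dichotomy dominates: when $\tau \geq 0.5$ the item $a$ carries at least half of $f(X)$, so the replacement need only recover a $2/3$-fraction and the constant $2/3$ is tight; when $\tau \leq 0.4$ most of the value lives in $X - a$, the size slack is larger, and pairing arguments push the ratio up to $9/10 - \tau/2$; the intermediate regime interpolates linearly to give $5/6 - \tau/3$.

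The main obstacle is the combinatorial case analysis that pins down the precise coefficients in $\Gamma$. Concretely, in each regime one must show that if no maintained pair already achieves the target value $\Gamma(\tau)v$, then submodularity applied to $X$, $X - a$, $X - a + e^*$ and the stored pair forces $f(X - a + e^*) \geq \Gamma(\tau)v - O(\varepsilon)v$ for an explicitly identifiable $e^* \in Y$. The careful choice of which items and pairs to retain during the streaming pass, and the matching algebra that turns those stored values into the stated lower bound, is the technical heart and is carried out in detail in~\cite{APPROX17}; the remaining error term $O(\varepsilon)v$ is absorbed by the $(1+\varepsilon)$ slack in the definition of $Y$.
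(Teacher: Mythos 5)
The paper does not prove Theorem~\ref{thm:APPROX} at all---it is imported verbatim from \cite{APPROX17}---so your decision to defer the case analysis behind $\Gamma$ to that reference is exactly the paper's own treatment of this statement. Your accompanying sketch of \textsc{PickNiceItem} (the filtering that immediately yields the ``Moreover'' clause, and the exchange/dichotomy argument producing the three regimes of $\Gamma$) is a plausible gloss of the cited algorithm, but since the present paper contains no proof to compare against, those details remain your own reconstruction and the substantive verification still rests entirely on \cite{APPROX17}.
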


Using the procedure \Call{PickNiceItem}{}, we can find a good item $e$.

\begin{lemma}\label{lem:good_e_2}
Let $Y:=$\textup{\Call{PickNiceItem}{$v, (\underline{c}_1,\overline{c}_1), \tau$}}, where $f(\opt)\geq v$ and $\tau v/(1+\varepsilon)\leq f(o_1) \leq \tau v$.
Then there exists $e\in Y$ such that $\tau v/(1+\varepsilon)\leq f(e) \leq \tau v$ and 
\[
f(\opt - o_1 \mid e) \geq ( \Gamma (\tau) - \tau )v - O(\varepsilon) v.
\]
Moreover, if $f(e+o_2)<\beta v$ in addition, then 
\[
f(\opt-o_1-o_2\mid e)\geq (\Gamma (\tau) - \beta ) v - O(\varepsilon) v.
\]
\end{lemma}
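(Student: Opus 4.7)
The plan is to apply Theorem~\ref{thm:APPROX} directly, taking $X=\opt$ and $x_1=o_1$. All hypotheses are satisfied: $f(\opt)\ge v$ by assumption, $o_1$ lies in the required size range $[\underline{c}_1 K,\overline{c}_1 K]$ by definition of the estimates, and $\tau v/(1+\varepsilon)\le f(o_1)\le\tau v$. The theorem then yields some $e^\ast\in Y$ (which we label as the desired $e$) with
\[
f(\opt-o_1+e^\ast)\ge \Gamma\!\bigl(f(o_1)/v\bigr)\,v-O(\varepsilon)v,
\]
and also guarantees $\tau v/(1+\varepsilon)\le f(e^\ast)\le \tau v$, which gives the stated property of $f(e)$. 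Observing that $\Gamma$ is non-increasing on its domain (easily verified from the three-piece formula, noting that the cases agree at the boundaries $t=0.4$ and $t=0.5$) and that $f(o_1)/v\le\tau$, we may replace $\Gamma(f(o_1)/v)$ by the smaller quantity $\Gamma(\tau)$.

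The first displayed inequality then falls out immediately: since $f(e^\ast)\le\tau v$,
\[
f(\opt-o_1\mid e^\ast)=f(\opt-o_1+e^\ast)-f(e^\ast)\ge \Gamma(\tau)v-\tau v-O(\varepsilon)v.
\]
For the ``moreover'' part, the plan is to combine two standard submodular facts. First, rewriting $\opt-o_1$ as $(\opt-o_1-o_2)\cup\{o_2\}$ and invoking diminishing returns gives
\[
f(\opt-o_1-o_2\mid e^\ast)\ge f(\opt-o_1\mid e^\ast)-f(o_2\mid e^\ast).
\]
Second, the hypothesis $f(e^\ast+o_2)<\beta v$ together with $f(e^\ast)\ge \tau v/(1+\varepsilon)\ge\tau v-O(\varepsilon)v$ yields
\[
f(o_2\mid e^\ast)=f(e^\ast+o_2)-f(e^\ast)< (\beta-\tau)v+O(\varepsilon)v.
\]
Plugging these two bounds together delivers
\[
f(\opt-o_1-o_2\mid e^\ast)\ge (\Gamma(\tau)-\tau)v-(\beta-\tau)v-O(\varepsilon)v=(\Gamma(\tau)-\beta)v-O(\varepsilon)v,
\]
as required.

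No step is really hard here; the only subtlety is to recognise that $\Gamma$ is non-increasing so that we can replace $\Gamma(f(o_1)/v)$ by $\Gamma(\tau)$ (using $f(o_1)\le\tau v$), and to apply the correct direction of submodularity when passing from the marginal of $\opt-o_1$ to that of $\opt-o_1-o_2$. The lower bound $f(e^\ast)\ge\tau v/(1+\varepsilon)$ is essential in the second part to cancel the $-\tau$ term, and this cancellation is precisely why the conclusion features $\Gamma(\tau)-\beta$ rather than $\Gamma(\tau)-\tau-\beta$.
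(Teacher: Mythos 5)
Your proof is correct and follows essentially the same route as the paper's: apply Theorem~\ref{thm:APPROX} with $X=\opt$, $x_1=o_1$, use $f(o_1)\leq\tau v$ and the monotonicity of $\Gamma$ to pass to $\Gamma(\tau)$, subtract $f(e)\leq\tau v$ for the first bound, and for the second combine $f(\opt-o_1-o_2\mid e)\geq f(\opt-o_1\mid e)-f(o_2\mid e)$ with $f(o_2\mid e)<(\beta-\tau)v+O(\varepsilon)v$, which uses $f(e)\geq\tau v/(1+\varepsilon)$ exactly as the paper does. The only cosmetic difference is that you make the non-increasing property of $\Gamma$ explicit, which the paper leaves implicit.
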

\begin{proof}
It follows from Theorem~\ref{thm:APPROX} that some $e\in Y$ satisfies that $f(\opt - o_1 + e) \geq \Gamma (f(o_1))v - O(\varepsilon)v \geq \Gamma (\tau)v - O(\varepsilon)v$ and $f(e)\leq \tau v$, and hence 
\[
f(\opt - o_1 \mid e) = f(\opt - o_1+e)-f(e)\geq ( \Gamma (\tau) - \tau )v - O(\varepsilon) v.
\]
Moreover, if $f(e+o_2)<\beta v$ in addition, then we have
\[
\beta v > f(e+o_2) = f(e) + f(o_2\mid e) \geq \frac{\tau }{1+\varepsilon} v + f(o_2\mid e)\geq \tau v + f(o_2\mid e) - O(\varepsilon) v,
\]
implying 
\[
f(\opt - o_1 - o_2 \mid e) \geq f(\opt - o_1\mid e) - f(o_2 \mid e) \geq ( \Gamma (\tau) - \tau )v - (\beta - \tau ) v - O(\varepsilon) v.
\]
Thus the statement holds.
\end{proof}

\subsection{Algorithm: Taking a Good Large Item First}\label{eq:046alg}

Suppose that we have $e\in E_1$ such that $f(\opt - o_1 \mid e)\geq p_1 v$ and $f(\opt - o_1 - o_2 \mid e)\geq p_2 v$, knowing that such $e$ can be found by Lemma~\ref{lem:good_e_1} or \ref{lem:good_e_2}.
More precisely, when $c_1\geq 0.5$, we first find a set $T$ by \Call{PickNiceItem}{$v, (\uc_1, \oc_1), \tau$}, where $\tau v/(1+\varepsilon) \leq f(o_1)\leq \tau v$; when $c_1\leq 0.5$, set $T=\{e\}$ for arbitrary $e\in E_1$.
Then $|T|=O(1)$ and some $e\in T$ satisfies $f(\opt - o_1 \mid e)\geq p_1 v$ and $f(\opt - o_1 - o_2 \mid e)\geq p_2 v$, where $p_1$ and $p_2$ are determined by Lemma~\ref{lem:good_e_1} or \ref{lem:good_e_2}.

Then, for each item $e\in T$, consider the problem \eqref{eq:p1}, and let $\I'$ be the corresponding instance.
We apply \Call{Simple}{} to the instance $\I'$ approximating $\opt-o_1$ and $\opt-o_1-o_2$, respectively.
Here we set $v_\ell=p_\ell v$~($\ell=1,2$), $W_1 = W-\uc_1 K$, and $W_2 = W-\uc_1 K -\uc_2 K$.
It follows that $c(\opt - o_1)\leq W_1$ and $c(\opt - o_1 - o_2)\leq W_2$.
Define $S^e_\ell =e +$ \Call{Simple}{$\I'; p_\ell v, W_\ell$} for $\ell=1,2$.
Also define $S^e_0=e+e^\ast$, where $e^\ast = \arg\max_{e'\in E: c(e')\leq \kn -c(e)} f(e+e')$.
Moreover, for $\ell = 0,1,2$, define $\tilde{S}_\ell$ to be the set that achieves $\max\{f(S^e_\ell)\mid e\in T\}$.

The algorithm, called \Call{LargeFirst}{}, can be summarized as in Algorithm~\ref{alg:LargeFirst}.
We can perform Lines~3--8 in parallel using the same $O(\varepsilon^{-1})$ passes.
Since $|T|=O(1)$, it takes $O(K)$ spaces.

The following bounds follow from Corollary~\ref{cor:simpleratio} and Observation~\ref{obs:1}.

\begin{algorithm}[t!]
  \caption{}\label{alg:LargeFirst}
  \begin{algorithmic}[1]
  \Procedure{LargeFirst}{$\I; v, W, \tau$}
  \Comment{$\tau v/(1+\varepsilon)\leq f(o_1)\leq \tau v$}
    \State{If $c_1\geq 0.5$, compute $T:=$\Call{PickNiceItem}{$v, (\underline{c}_1,\overline{c}_1 ), \tau$}, and if $c_1\leq 0.5$, set $T:=\{e\}$ for arbitrary $e\in E_1$.}
    \For{each item $e \in T$}
      \State{$S^e_0:=e+e^\ast$, where $e^\ast := \arg\max_{e'\in E: c(e')\leq \kn -c(e)} f(e+e')$.}
      \State{Define $\I' := (g, c, \kn -c(e), E)$, where $g(\cdot) := f(\cdot \mid e)$.}
      \State{Set $p_\ell$~($\ell =1,2$) as in Lemma~\ref{lem:good_e_1} for $c_1\leq 0.5$ and Lemma~\ref{lem:good_e_2} for $c_1\geq 0.5$.}
      \State{$W_1 := W - \uc_1 K$ and $W_2 := W -\uc_1 K -\uc_2 K$.}
      \State{Compute $S^e_\ell :=e +$ \Call{Simple}{$\I'; p_\ell v, W_\ell$} for $\ell=1,2$.}
    \EndFor
    \State{Denote by $\tilde{S}_\ell$ the set that achieves $\max\{f(S^e_\ell)\mid e\in T\}$ for $\ell\in\{0,1,2\}$.}
    \State{\Return a set $S$ that achieves $\max\{f(\tilde{S}_\ell)\mid \ell\in\{0,1,2\}\}$.}
  \EndProcedure
  \end{algorithmic}
\end{algorithm}

\begin{lemma}\label{lem:LargeFirst}
Suppose that $v\leq f(\opt)\leq (1+\varepsilon)v$ and $c(\opt)\leq W$.
We further suppose that $\uc_\ell\leq c_\ell\leq \oc_\ell \leq (1+\varepsilon)\uc_\ell$~\textup{(}$\ell=1,2$\textup{)} and $\frac{\tau}{1+\varepsilon} v\leq f(o_1)\leq \tau v$.
Let $e\in E_1$ be an item such that $f(\opt - o_1 \mid e)\geq p_1 v$ and $f(\opt - o_1 - o_2 \mid e)\geq p_2 v$.
Then, if $c_1+c_2\leq \cmax$ for some constant $\delta$, it holds that
\begin{align}
f(\tilde{S}_1) & \geq \left(\tau + p_1 \left( 1 - e^{-\frac{K-\oc_1 K-c_2K}{W-\uc_1K}}\right)-O(\varepsilon)\right)v,\label{eq:LargeFirst1}\\
f(\tilde{S}_2) & \geq \left(\tau + p_2 \left( 1 - e^{-\frac{K-\oc_1 K-c_3K}{W-\uc_1 K-\uc_2 K}}\right)-O(\varepsilon)\right)v.\label{eq:LargeFirst2}
\end{align}
In particular, if $W=K$ and $c_1+c_2\leq \cmax$ for some constant $\delta$, it holds that
\begin{align}
f(\tilde{S}_1) & \geq \left(\tau + p_1 \left( 1 - e^{-(\deltagap) \mu}\right)-O(\varepsilon)\right)v,\label{eq:LargeFirst3}\\
f(\tilde{S}_2) & \geq \left(\tau + p_2 \left( 1 - e^{-\left(\frac{\deltagap}{\mu}-1\right)}\right)-O(\varepsilon)\right)v,\label{eq:LargeFirst4},
\end{align}
where $\mu=\frac{1-\oc_1-\oc_2}{1-\oc_1}$.
\end{lemma}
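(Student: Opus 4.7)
\medskip
\noindent\textbf{Proof proposal.} The plan is to reduce everything to Corollary~\ref{cor:simpleratio} applied to the residual instance $\I' = (g, c, K - c(e), E)$, plus an additive contribution of $f(e)$. Concretely, for the good item $e \in T$ guaranteed by Lemma~\ref{lem:good_e_1} (when $c_1 \le 0.5$) or Lemma~\ref{lem:good_e_2} (when $c_1 \ge 0.5$), I would take $X_1 = \opt - o_1$ as the target set for the $\ell = 1$ run and $X_2 = \opt - o_1 - o_2$ for the $\ell = 2$ run, and verify that the condition $(\ast)$ of Corollary~\ref{cor:simpleratio} holds with respect to $g$: the lower bounds $g(X_\ell) \geq p_\ell v$ come from the hypothesis on $e$, the upper bounds $g(X_\ell) \leq f(\opt) \leq (1+\varepsilon)v = O(p_\ell v)$ come from monotonicity (using that $p_1, p_2$ are bounded below by constants by the formulas in Lemmas~\ref{lem:good_e_1} and~\ref{lem:good_e_2}), and $c(X_1) \leq W - \uc_1 K = W_1$, $c(X_2) \leq W - \uc_1 K - \uc_2 K = W_2$ follow from $c(o_i) \geq \uc_i K$.

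With the preconditions in place, Corollary~\ref{cor:simpleratio} guarantees that $S^e_\ell - e$ achieves
\[
g(S^e_\ell - e) \;\geq\; \left(1 - e^{-\frac{(K - c(e)) - c(x^*_\ell)}{W_\ell}} - O(\varepsilon)\right) p_\ell v,
\]
where $x^*_1 = o_2$ and $x^*_2 = o_3$ are the largest items in $X_1$ and $X_2$ respectively. Substituting $c(e) \leq \oc_1 K$ and $c(x^*_\ell) = c_{\ell+1} K$ only makes the exponent smaller, so the bound becomes $(1 - e^{-(K - \oc_1 K - c_{\ell+1} K)/W_\ell} - O(\varepsilon)) p_\ell v$. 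Adding $f(e) \geq \tau v / (1+\varepsilon) \geq (\tau - O(\varepsilon)) v$ and using $f(S^e_\ell) = f(e) + g(S^e_\ell - e)$ yields~\eqref{eq:LargeFirst1} and~\eqref{eq:LargeFirst2} for this particular~$e$; since $\tilde{S}_\ell$ takes the maximum over $e \in T$, the same bounds hold for $\tilde{S}_\ell$.

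For the specialization $W = K$, the exponents become $(1 - \oc_1 - c_2)/(1 - \uc_1)$ and $(1 - \oc_1 - c_3)/(1 - \uc_1 - \uc_2)$, and the task reduces to showing these are at least $(1-\delta)\mu$ and $(1-\delta)/\mu - 1$, respectively, up to an additive $O(\varepsilon)$ term that can be absorbed into the $O(\varepsilon)$ slack inside the exponential. This is a purely algebraic manipulation using the three ingredients $\oc_i \leq (1+\varepsilon) \uc_i$, $c_i \in [\uc_i, \oc_i]$, and $c_1 + c_2 \leq \cmax = 1 - \varepsilon/\delta$. For the first exponent, bounding $c_2 \leq \oc_2$ and $\uc_1 \leq \oc_1$ (with a small correction term from $\oc_1 \leq (1+\varepsilon) \uc_1$) directly gives the comparison with $\mu = (1 - \oc_1 - \oc_2)/(1 - \oc_1)$; for the second one, a similar computation combined with $c_3 \leq c_2$ and the definition of $\mu$ yields the desired form.

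The main obstacle, as I see it, is not any conceptual step but the careful tracking of how the $\varepsilon$-slacks between $\uc_i$, $c_i$, $\oc_i$ propagate through the quotients and accumulate with the $O(\varepsilon)$ term already present in Corollary~\ref{cor:simpleratio}. In particular, I need the condition $c_1 + c_2 \leq 1 - \varepsilon/\delta$ precisely to guarantee that the denominators $1 - \uc_1$ and $1 - \uc_1 - \uc_2$ stay bounded away from zero by at least $\varepsilon/\delta$, so that errors of order $\varepsilon$ in the numerators translate into errors of order $\delta$ (rather than blowing up) in the final ratio; this is exactly the origin of the factor $(1-\delta)$ in~\eqref{eq:LargeFirst3}--\eqref{eq:LargeFirst4}.
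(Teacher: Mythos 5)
Your overall route is the same as the paper's: run \Call{Simple}{} on the residual instance $(g,c,K-c(e),E)$ with targets $\opt-o_1$ and $\opt-o_1-o_2$, invoke Corollary~\ref{cor:simpleratio} (noting the capacity is at least $K-\overline{c}_1K$ and the largest target items are $o_2$, $o_3$), add $f(e)\geq(\tau-O(\varepsilon))v$ via Observation~\ref{obs:1}, and take the maximum over $e\in T$; this is exactly how the paper obtains \eqref{eq:LargeFirst1}--\eqref{eq:LargeFirst2}, and your explanation of how $c_1+c_2\leq 1-\varepsilon/\delta$ converts the $\varepsilon$-slack between $\underline{c}_i$ and $\overline{c}_i$ into the factor $1-\delta$ is precisely the paper's bound $\lambda_1,\lambda_2\geq 1-\delta$. (Minor point: $g(X_\ell)\leq f(\opt)$ needs submodularity, $f(X_\ell\mid e)\leq f(X_\ell)$, not just monotonicity; and this upper bound only controls the number of passes, which the lemma does not claim, so it is harmless either way.)

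There is, however, a concrete gap in the algebra you sketch for \eqref{eq:LargeFirst4}: the ingredient $c_3\leq c_2$ is not the right one and cannot by itself deliver the stated exponent. From $c_3\leq c_2\leq\overline{c}_2$ you only get
\[
\frac{1-\overline{c}_1-c_3}{1-\underline{c}_1-\underline{c}_2}\;\geq\;\frac{1-\overline{c}_1-\overline{c}_2}{1-\underline{c}_1-\underline{c}_2}\;\geq\;1-\delta,
\]
whereas the claimed exponent $\frac{1-\delta}{\mu}-1$ is strictly larger than $1-\delta$ whenever $\mu<\frac{1-\delta}{2-\delta}\approx 0.5$, and small $\mu$ is exactly the regime in which \eqref{eq:LargeFirst4} is used later (Lemmas~\ref{lem:046small} and~\ref{lem:046large} invoke it for $\mu\leq 0.5$, resp.\ $\mu<0.505$). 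The fact you actually need is the feasibility of $\opt$ under the $W=K$ specialization: $c_1+c_2+c_3\leq c(\opt)/K\leq 1$, hence $c_3\leq 1-c_1-c_2\leq 1-\underline{c}_1-\underline{c}_2$. Then
\[
\frac{1-\overline{c}_1-c_3}{1-\underline{c}_1-\underline{c}_2}\;\geq\;\frac{1-\overline{c}_1}{1-\underline{c}_1-\underline{c}_2}-1\;=\;\frac{1-\overline{c}_1-\overline{c}_2}{1-\underline{c}_1-\underline{c}_2}\cdot\frac{1-\overline{c}_1}{1-\overline{c}_1-\overline{c}_2}-1\;\geq\;\frac{1-\delta}{\mu}-1,
\]
which is the paper's derivation. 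With this replacement (your sketch for \eqref{eq:LargeFirst3} is the correct analogue and needs no change), the proof goes through as in the paper.
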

\begin{proof}
We note that $c(\opt -o_1)\leq W_1 = W - \uc_1 K$, and items in $\opt - o_1$ are of size at most $c(o_2)$.
By Corollary~\ref{cor:simpleratio}, \Call{Simpe}{$\I'; p_1 v, W_1$} can find a set $S$ such that 
\[
g(S)\geq p_1 \left(1 - e^{-\frac{K-\oc_1 K-c_2K}{W-\uc_1K}} -O(\varepsilon)\right) v
\]
as the capacity $K-c(e)\geq K - \oc_1 K$.
Therefore, since $f(e)\geq (\tau - O(\varepsilon))v$, the inequality \eqref{eq:LargeFirst1} follows from Observation~\ref{obs:1}.
The inequality \eqref{eq:LargeFirst2} holds in a similar way, noting that $c(\opt -o_1-o_2)\leq W - \uc_1 K- \uc_2 K$, and items in $\opt - o_1 - o_2$ are of size at most $c(o_3)$.

Suppose that $W=K$. Then the above inequalities~\eqref{eq:LargeFirst1} and \eqref{eq:LargeFirst2} can be transformed to 
\begin{align}
f(\tilde{S}_1)  & \geq \left(\tau + p_1 \left( 1 - e^{-\frac{1-\oc_1-c_2}{1-\uc_1}}\right)-O(\varepsilon)\right)v, \label{eq:046proof1} \\
f(\tilde{S}_2) & \geq \left(\tau + p_2 \left( 1 - e^{-\frac{1-\oc_1 -c_3}{1-\uc_1 -\uc_2}}\right)-O(\varepsilon)\right)v. \label{eq:046proof2}
\end{align}
Since $\oc_\ell\leq (1+\varepsilon)\uc_\ell$ for $\ell =1,2$, we have
\begin{align*}
\lambda_1 &:= \frac{1-\oc_1}{1-\uc_1}  \geq 1 - \varepsilon \frac{\uc_1}{1-\uc_1} \geq \deltagap + \varepsilon, \mbox{\quad and}\\
\lambda_2 &:= \frac{1-\oc_1 -\oc_2}{1-\uc_1 -\uc_2}  \geq 1 - \varepsilon \frac{\uc_1+\uc_2}{1-\uc_1 -\uc_2} \geq \deltagap + \varepsilon,
\end{align*}
where the second inequalities of each follow because $\uc_1\leq \uc_1+\uc_2\leq \cmax$.
Using $\lambda_1$, the exponent in \eqref{eq:046proof1} is equal to
\[
\frac{1-\oc_1-c_2}{1-\uc_1} =  \lambda_1 \frac{1-\oc_1-c_2}{1-\oc_1} \geq  \left(\deltagap\right)\mu.
\]
Thus \eqref{eq:LargeFirst3} holds.
Moreover, since $c_3 \leq 1-\uc_1 -\uc_2$, using $\lambda_2$, the exponent in \eqref{eq:046proof2} is equal to
\[
\frac{1-\oc_1 -c_3}{1-\uc_1 -\uc_2}  \geq \frac{1-\oc_1}{1-\uc_1 -\uc_2} - 1 = \lambda_2 \frac{1-\oc_1}{1-\oc_1 -\oc_2} - 1 \geq \left(\deltagap\right) \frac{1-\oc_1}{1-\oc_1 -\oc_2} - 1.
\]
Thus \eqref{eq:LargeFirst4} holds.
\end{proof}

\subsection{Analysis: $0.46$-Approximation}\label{eq:046anal}
We next analyze the approximation ratio of the algorithm.
We consider two cases when $c_1\leq 0.5$ and $c_1\geq 0.5$ separately; we will show that \Call{LargeFirst}{}, together with \Call{Simple}{}, admits a $(0.46-\varepsilon)$-approximation when $c_1\leq 0.5$ and a $(0.49-\varepsilon)$-approximation when $c_1\geq 0.5$, respectively.

\begin{lemma}\label{lem:046small}
Suppose that $c_1\leq 0.5$ and $c_1+c_2\leq \cmax$, where $\delta = 0.01$.
We further suppose that $\uc_\ell\leq c_\ell\leq \oc_\ell\leq (1+\varepsilon) \uc_\ell$~\textup{(}$\ell=1,2$\textup{)}, $\oc_1\leq 0.5$, and $v\leq f(\opt)\leq (1+\varepsilon)v$.
Then Algorithm \textup{\Call{LargeFirst}{}}, together with \textup{\Call{Simple}{}}, can find a $(0.46-O(\varepsilon))$-approximate solution in $O(\varepsilon^{-1})$ passes and $O(\varepsilon^{-1}K)$ space.
The total running time is $O(\varepsilon^{-2}n)$.
\end{lemma}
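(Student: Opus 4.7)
The plan is to carry out a case analysis driven by the geometric guess $\tau$ with $\tau v/(1+\varepsilon)\le f(o_1)\le \tau v$ and by the ratio $\mu := (1-\oc_1-\oc_2)/(1-\oc_1) \in [0,1]$. I would run \Call{LargeFirst}{} together with the call to \Call{Simple}{} supplied by Corollary~\ref{cor:IgnoreLarge} in parallel for every $O(\varepsilon^{-1})$ guess of $\tau$ and of the size estimates $(\uc_\ell,\oc_\ell)$, $\ell=1,2$, reusing the same $O(\varepsilon^{-1})$ passes; then return the best candidate among $\tilde S_0, \tilde S_1, \tilde S_2$ and the output of Corollary~\ref{cor:IgnoreLarge}.

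The two extremes of $\tau$ are handled directly. If $\tau \ge 0.46$, then since $e^\ast$ ranges over all singletons feasible together with the chosen $e$, $f(\tilde S_0) \ge f(o_1) \ge (\tau - O(\varepsilon))v \ge (0.46-O(\varepsilon))v$. If $\tau \le 0.27$, then Corollary~\ref{cor:IgnoreLarge} with $W=K$ yields a solution of value at least $(1-\tau)(1-e^{-1}-O(\varepsilon))v \ge 0.73\cdot(1-e^{-1})\,v - O(\varepsilon)v \ge (0.46-O(\varepsilon))v$.

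For the critical regime $\tau\in(0.27,0.46)$, take $e_0 := o_1 \in E_1$ (a valid choice of the ``arbitrary $e$'' when the corresponding size-guess is correct) and set $\beta = 0.46$ in Lemma~\ref{lem:good_e_1}. If some feasible $e'\in E$ satisfies $f(o_1+e') \ge \beta v$, then $f(\tilde S_0) \ge f(S_0^{o_1}) \ge \beta v = 0.46\,v$ and we are done. Otherwise $f(o_1)<\beta v$ (taking $e'=o_1$, which is feasible since $2\oc_1\le 1$) and $f(o_1+o_2)<\beta v$ (note $o_2$ is a feasible partner because $c_1+c_2\le \cmax$), so Lemma~\ref{lem:good_e_1} gives $p_1 := f(\opt-o_1\mid o_1)/v \ge 1-\beta = 0.54$ and $p_2 := f(\opt-o_1-o_2\mid o_1)/v \ge (1-2\beta) + f(o_1)/v \ge 0.08+\tau - O(\varepsilon)$. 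Substituting into Lemma~\ref{lem:LargeFirst} with $W=K$ yields
\begin{align*}
f(\tilde S_1) &\ge \bigl(\tau + 0.54\bigl(1-e^{-(\deltagap)\mu}\bigr)\bigr)v - O(\varepsilon)v,\\
f(\tilde S_2) &\ge \bigl(\tau + (0.08+\tau)\bigl(1-e^{-((\deltagap)/\mu - 1)}\bigr)\bigr)v - O(\varepsilon)v.
\end{align*}

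The main obstacle is the numerical verification that $\max\{f(\tilde S_1),f(\tilde S_2)\} \ge (0.46-O(\varepsilon))v$ uniformly for $\tau \in (0.27,0.46)$ and $\mu \in [0,1]$. Since the first bound is increasing and the second decreasing in $\mu$, their graphs cross at a unique $\mu^\ast(\tau)$ and the infimum of the pointwise maximum is attained there; the resulting value is monotone in $\tau$, which reduces the check to the boundary $\tau = 0.27$. A direct calculation at $\tau = 0.27$, $\mu^\ast \approx 0.5$ gives $\max\{f(\tilde S_1),f(\tilde S_2)\} \ge 0.48\,v$, leaving strict slack for the $O(\varepsilon)$ losses (the choice $\delta = 0.01$ makes these absolutely absorbed). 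The complexity bound follows because each \Call{Simple}{} invocation uses $O(\varepsilon^{-1})$ passes, $O(K)$ space and $O(\varepsilon^{-1} n)$ time, and we run $O(\varepsilon^{-1})$ parallel guesses in total.
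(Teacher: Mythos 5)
Your overall skeleton is the paper's: dispose of the extremes (small $f(o_1)$ via Corollary~\ref{cor:IgnoreLarge}, $f(o_1)\geq 0.46v$ via a best pair/singleton), geometrically guess $\tau$ in the middle range, feed $p_1=1-\beta$ and $p_2=1-2\beta+f(\cdot)/v$ with $\beta=0.46$ from Lemma~\ref{lem:good_e_1} into Lemma~\ref{lem:LargeFirst}, and balance the two resulting bounds around $\mu\approx 0.5$; your numbers (cutoff $0.27$ vs.\ the paper's $0.272$, worst value $\approx 0.48v$) agree with the paper's two-case check at $\mu=0.5$. The genuine gap is the step ``take $e_0:=o_1$ (a valid choice of the arbitrary $e$).'' The lemma is a claim about what \textsc{LargeFirst} outputs, and when $c_1\leq 0.5$ the algorithm commits to an \emph{arbitrary, unidentified} element $e$ of $E_1$; it has no way to recognize $o_1$, so an analysis that assumes the chosen item is $o_1$ does not bound the algorithm's output (with a different $e\in E_1$ your lower bounds $f(\tilde S_0)\geq f(S_0^{o_1})$, the additive $\tau$ term, and the $+f(o_1)/v$ inside $p_2$ all lose their justification). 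The paper's proof goes through the algorithm's actual $e$: since $\oc_1\leq 0.5$ the pair $e+o_1$ is feasible, and since $c_1+c_2\leq\cmax$ so is $e+o_2$; hence if $f(\tilde S_0)=\beta v<0.46v$ then $f(e+o_1),f(e+o_2)<\beta v$, and Lemma~\ref{lem:good_e_1} applied to that $e$ (this is exactly why it is stated for a general $e_0$) yields $p_1\geq 1-\beta$ and $p_2\geq 1-2\beta+f(e)/v$, after which Lemma~\ref{lem:LargeFirst} (whose hypothesis supplies the additive $\otau$) gives \eqref{eq:Small3}--\eqref{eq:Small4}. You have in effect located the delicate point of the argument but resolved it by replacing the algorithm's choice with $o_1$, which is not available to the algorithm; to repair your write-up you must run the argument for an arbitrary $e\in E_1$ as the paper does.

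A secondary accounting issue: the lemma already assumes $\uc_\ell,\oc_\ell$ ($\ell=1,2$) are given, so only $\tau$ should be guessed ($O(\varepsilon^{-1})$ parallel copies, giving the stated $O(\varepsilon^{-1}K)$ space and $O(\varepsilon^{-2}n)$ time). Additionally guessing the size estimates, as you propose, is not $O(\varepsilon^{-1})$ many guesses (the range of $c_2$ alone needs $O(\varepsilon^{-1}\log K)$ values) and would exceed the space and time bounds claimed in the statement; that bookkeeping belongs to Theorem~\ref{thm:046}, not here.
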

\begin{proof}
First suppose that $f(o_1)\leq 0.272v$.
Then, by Corollary~\ref{cor:IgnoreLarge}, we can find a set $S$ such that
\[
f(S) \geq 0.728 \left(1-e^{-1}  - O(\varepsilon)\right)v\geq (0.46-O(\varepsilon))v.
\]

Thus we may suppose that $f(o_1) \geq 0.272v$.
We may also suppose that $f(o_1) \leq  0.46v$, as otherwise taking a singleton with maximum return from $E_1$ gives a $0.46$-approximation.
We guess $\utau$ and $\otau$ with $0.272v\leq  \utau v\leq f(o_1) \leq \otau v\leq 0.46 v$ and $\otau \leq (1+\varepsilon)\utau$ from the interval $[0.272, 0.46]$ by a geometric series using $O(\varepsilon^{-1})$ space.

By Lemmas~\ref{lem:good_e_1} and \ref{lem:LargeFirst}, the output of \Call{LargeFirst}{$\I; v, K, \otau$} is lower-bounded by the RHSs of the following three inequalities:
\begin{align}
f(\tilde{S}_0) &\geq \beta v, \notag \\
f(\tilde{S}_1) &\geq \left(\otau + (1-\beta) \left( 1 - e^{-\left(\deltagap\right)\mu}\right)-O(\varepsilon)\right)v,\label{eq:Small3}\\
f(\tilde{S}_2) &\geq \left(\otau +  (1-2\beta+\otau) \left( 1 - e^{-\left(\frac{\deltagap}{\mu}-1\right)}\right)-O(\varepsilon)\right)v, \label{eq:Small4}
\end{align}
where $\mu=\frac{1-\oc_1-\oc_2}{1-\oc_1}$.
We may assume that $\beta < 0.46$.
If $\mu \geq 0.5$, then \eqref{eq:Small3} implies that
\[
f(\tilde{S}_1) \geq \left( 0.272 + (1- 0.46) \left( 1 - e^{- \frac{\deltagap}{2}}\right)-O(\varepsilon)\right)v \geq (0.46-O(\varepsilon))v,
\]
when $\delta = 0.01$.
On the other hand, if $\mu \leq 0.5$, then \eqref{eq:Small4} implies that
\[
f(\tilde{S}_2) \geq \left(0.272  +  (1-2\cdot 0.46 + 0.272) \left( 1 - e^{-\left(\frac{\deltagap}{0.5}-1\right)}\right)-O(\varepsilon)\right)v \geq (0.46-O(\varepsilon))v,
\]
when $\delta = 0.01$.
Thus the statement holds.
\end{proof}

Similarly, we have the following guarantee when $c_1\geq 0.5$.

\begin{lemma}\label{lem:046large}
Suppose that $c_1\geq 0.5$ and $c_1+c_2\leq \cmax$, where $\delta = 0.01$.
We further suppose that $\uc_\ell\leq c_\ell\leq \oc_\ell\leq (1+\varepsilon) \uc_\ell$~\textup{(}$\ell=1,2$\textup{)}, $\uc_1\geq 0.5$, and $v\leq f(\opt)\leq (1+\varepsilon)v$.
Then Algorithm \textup{\Call{LargeFirst}{}}, together with \textup{\Call{Simple}{}}, can find a $(0.49-O(\varepsilon))$-approximate solution in $O(\varepsilon^{-1})$ passes and $O(\varepsilon^{-1}K)$ space.
The total running time is $O(\varepsilon^{-2}n)$.
\end{lemma}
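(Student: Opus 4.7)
The plan is to mirror the proof of Lemma~\ref{lem:046small}, but in the regime $\underline{c}_1 \geq 0.5$. Since now every $e \in E_1$ satisfies $c(e) + c(o_1) > K$, the singleton-pair device of Lemma~\ref{lem:good_e_1} is no longer available; I would instead invoke Lemma~\ref{lem:good_e_2}, which uses \textup{\Call{PickNiceItem}{}} to extract an item $e \in T$ whose marginal on $\opt - o_1$ is controlled by $\Gamma(\tau) - \tau$ for $\tau := f(o_1)/v$.

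First I would split on $\tau$ into three regimes. When $\tau \leq \tau_0$ with $\tau_0 := 1 - 0.49\, e/(e-1) \approx 0.2248$, Corollary~\ref{cor:IgnoreLarge} applied with $W = K$ already returns a set $S$ with $f(S) \geq (1-\tau)(1-e^{-1}-O(\varepsilon))v \geq (0.49-O(\varepsilon))v$. When $\tau \geq 0.49$, the maximum-$f$ singleton in $E$ (found in a single pass in $O(1)$ space) has value at least $\tau v/(1+\varepsilon) \geq (0.49-O(\varepsilon))v$. For the middle regime $\tau \in [\tau_0, 0.49]$, I would guess $\utau, \otau$ with $\utau v \leq f(o_1) \leq \otau v$ and $\otau \leq (1+\varepsilon)\utau$ via a geometric series of $O(\varepsilon^{-1})$ values, and for each guess invoke \textup{\Call{LargeFirst}{$\I; v, K, \otau$}} in parallel.

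For the analysis inside the middle regime, set $\beta := f(\tilde{S}_0)/v$. If $\beta \geq 0.49$, the candidate $\tilde{S}_0$ already meets the ratio. Otherwise $\beta < 0.49$, and combining Lemma~\ref{lem:good_e_2} with Lemma~\ref{lem:LargeFirst} (specialized to $W = K$ and $\delta = 0.01$) yields
\begin{align*}
  f(\tilde{S}_1) &\geq \Bigl(\otau + (\Gamma(\otau) - \otau)\bigl(1 - e^{-(1-\delta)\mu}\bigr) - O(\varepsilon)\Bigr)v, \\
  f(\tilde{S}_2) &\geq \Bigl(\otau + (\Gamma(\otau) - \beta)\bigl(1 - e^{-((1-\delta)/\mu - 1)}\bigr) - O(\varepsilon)\Bigr)v,
\end{align*}
with $\mu := (1-\oc_1-\oc_2)/(1-\oc_1)$. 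Fixing a threshold $\mu^*$, I would verify that $\mu \geq \mu^*$ forces $f(\tilde{S}_1) \geq (0.49 - O(\varepsilon))v$ and $\mu \leq \mu^*$ forces $f(\tilde{S}_2) \geq (0.49 - O(\varepsilon))v$, in direct analogy with the split at $\mu = 0.5$ in Lemma~\ref{lem:046small}.

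The main obstacle is the numerical tuning in this last step: unlike Lemma~\ref{lem:046small}, where $p_1 = 1 - \beta$ was independent of $\otau$, here both $p_1 = \Gamma(\otau) - \otau$ and $p_2 = \Gamma(\otau) - \beta$ depend on $\otau$ through the piecewise-linear $\Gamma$, so the verification splits into the sub-cases $\otau \in [\tau_0, 0.4]$ (where $\Gamma(\otau) = 9/10 - \otau/2$) and $\otau \in [0.4, 0.49]$ (where $\Gamma(\otau) = 5/6 - \otau/3$). The binding constraint will occur near $\otau = \tau_0$, $\mu = \mu^*$, and $\beta$ approaching $0.49$; checking the inequality at this corner reduces to elementary calculus in $\mu^*$. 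Resource accounting mirrors Lemma~\ref{lem:046small}: each \textup{\Call{LargeFirst}{}} call uses $O(\varepsilon^{-1})$ passes, $O(K)$ space and $O(\varepsilon^{-1} n)$ time, and the $O(\varepsilon^{-1})$ parallel guesses of $\otau$ deliver the claimed $O(\varepsilon^{-1})$ passes, $O(\varepsilon^{-1} K)$ space and $O(\varepsilon^{-2} n)$ time.
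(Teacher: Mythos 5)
Your high-level framing (dispose of small $\tau$ via Corollary~\ref{cor:IgnoreLarge}, of large $\tau$ via a singleton, and run \Call{LargeFirst}{} with guessed $\utau,\otau$ in between) matches the paper, but the core of your analysis --- a single threshold $\mu^*$ such that $\mu\geq\mu^*$ is handled by $\tilde S_1$ and $\mu\leq\mu^*$ by $\tilde S_2$, ``in direct analogy with Lemma~\ref{lem:046small}'' --- does not go through, and no choice of $\mu^*$ can make it go through. The analogy breaks because in Lemma~\ref{lem:046small} the coefficients $p_1=1-\beta$ and $p_2=1-2\beta+\otau$ stay large, whereas here $p_1=\Gamma(\otau)-\otau$ and, worse, $p_2=\Gamma(\otau)-\beta$ collapse when $\beta$ approaches $0.49$ and $\otau$ sits near its lower end $\tau_0\approx 0.2248$. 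Concretely, take $\otau=0.225$, $\mu=0.45$, $\beta=0.489$, $\delta=0.01$: then $\Gamma(\otau)=0.9-\otau/2=0.7875$, and your two bounds give $f(\tilde S_1)\gtrsim\bigl(0.225+0.5625\,(1-e^{-0.4455})\bigr)v\approx 0.427\,v$ and $f(\tilde S_2)\gtrsim\bigl(0.225+0.2975\,(1-e^{-1.2})\bigr)v\approx 0.433\,v$, both well below $0.49\,v$, while $\tilde S_0$ only gives $\beta v<0.49v$. Since the $\tilde S_1$ bound worsens as $\mu$ decreases and the $\tilde S_2$ bound worsens as $\mu$ increases, this single point $(\otau,\mu,\beta)$ defeats every threshold $\mu^*$: whichever side of $\mu^*$ the value $\mu=0.45$ falls on, the designated candidate fails. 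So the ``elementary calculus in $\mu^*$'' you defer to has no solution with the three candidates $\tilde S_0,\tilde S_1,\tilde S_2$ alone.

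The missing ingredient in the paper's proof is a fourth candidate: the ignore-$o_1$ solution $\tilde S$ of Corollary~\ref{cor:IgnoreLarge}, analyzed \emph{using the hypothesis $\uc_1\geq 0.5$}. Because $\oc_1/(1-\uc_1)\geq 1$ in this regime, the exponent improves from the generic $1-\varepsilon$ to $\frac{1-\oc_2}{1-\uc_1}\geq(\deltagap)\mu+1$, so $f(\tilde S)\geq(1-\otau)\bigl(1-e^{-((\deltagap)\mu+1)}-O(\varepsilon)\bigr)v$, a bound that improves with $\mu$ and remains above $0.49v$ for all $\otau$ up to roughly $0.356$ when $\mu\geq 0.505$, and, when $\mu<0.505$, for all $(\otau,\mu)$ with $\mu\geq 3.5(\otau-0.22)$ (in particular it rescues the point above: at $\otau=0.225$, $\mu=0.45$ it gives about $0.775\,(1-e^{-1.44})v\approx 0.59\,v$). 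The paper's case analysis is therefore a joint split in $(\otau,\mu)$ rather than a single threshold in $\mu$: for $\mu\geq 0.505$ it uses $\tilde S$ for $\otau\leq 0.3562$ and $\tilde S_1$ otherwise (with the two branches of $\Gamma$), and for $\mu<0.505$ it uses $\tilde S$ when $\mu\geq 3.5(\otau-0.22)$ and only otherwise falls back on $\tilde S_2$, where the constraint $\otau\geq 0.22+\tfrac{2}{7}\mu$ now compensates for the small factor $\Gamma(\otau)-\beta$. Your proposal needs this extra candidate and the corresponding two-variable case split; with them, your regime decomposition in $\tau$ and your resource accounting are consistent with the paper.
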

\begin{proof}
First suppose that $f(o_1) \leq 0.224 v$.
Then, by Corollary~\ref{cor:IgnoreLarge}, we can find a set $S$ such that
\[
f(S)\geq 0.776 \left(1-e^{-1}  - O(\varepsilon)\right)v \geq (0.49-O(\varepsilon))v.
\]
Thus we may suppose that $f(o_1) \geq 0.224v$.
We may also suppose that $f(o_1) \leq  0.49v$, as otherwise taking a singleton with maximum return from $E_1$ gives a $0.49$-approximation.
We guess $\utau$ and $\otau$ with $0.224v\leq \utau v\leq f(o_1) \leq \otau v\leq 0.49v$ and $\otau \leq (1+\varepsilon)\utau$ from the interval $[0.224, 0.49]$ by a geometric series using $O(\varepsilon^{-1})$ space.


Let $\mu = \frac{1-\oc_1 -\oc_2}{1-\oc_1}$.
By Corollary~\ref{cor:IgnoreLarge}, we can find a set $\tilde{S}$ such that
\begin{equation}\label{eq:Large1}
f(\tilde{S})\geq (1 - \otau) \left(1-e^{-\frac{1 - \oc_2}{1-\uc_1}}  - O(\varepsilon)\right)v\geq (1 - \otau) \left(1-e^{- (\deltagap) \mu-1}  - O(\varepsilon)\right)v. 
\end{equation}
Here we note that
\[
\frac{1 - \oc_2}{1-\uc_1} = \frac{1 - \oc_1}{1-\uc_1} \left(\frac{1 - \oc_1 - \oc_2}{1-\oc_1} \right)+\frac{\oc_1}{1-\uc_1}\geq \left(\deltagap\right)\mu + 1,
\]
since $\frac{\oc_1}{1-\uc_1}\geq 1$ when $\oc_1\geq \uc_1\geq 0.5$.
Moreover, there exists $e'\in T$ such that $f(\opt-o_1\mid e')$ and $f(\opt-o_1-o_2\mid e')$ are bounded as in Lemma~{lem:good_e_2}.
By Lemma~\ref{lem:LargeFirst}, the output of \Call{LargeFirst}{$\I; v, K, \otau$} is lower-bounded by the RHSs of the following three inequalities:
\begin{align}
f(\tilde{S}_0)&\geq \beta v,\nonumber\\
f(\tilde{S}_1)&\geq \left(\otau + (\Gamma(\otau)-\otau) \left( 1 - e^{-\left(\deltagap\right)\mu}\right)-O(\varepsilon)\right)v,\label{eq:Large2}\\
f(\tilde{S}_2)&\geq \left(\otau + (\Gamma(\otau)-\beta) \left( 1 - e^{-\left(\frac{\deltagap}{\mu}-1\right)}\right)-O(\varepsilon)\right)v.\label{eq:Large3}
\end{align}
We may assume that $\beta < 0.49$.
The above inequalities~\eqref{eq:Large1}--\eqref{eq:Large3} imply that one of $\tilde{S}$, $\tilde{S}_\ell$~($\ell = 0,1,2$) admits a $(0.49-O(\varepsilon))$-approximation.

More specifically, we can obtain the ratio as follows.
First suppose that $\mu\geq 0.505$.
Then, if $\otau \leq 0.3562$, then \eqref{eq:Large1} implies that
\[
f(\tilde{S}) \geq (1 - 0.3562) \left(1-e^{-  \left( \left(\deltagap\right)0.505+1\right)}  - O(\varepsilon)\right)v\geq (0.50-O(\varepsilon))v.
\]
If $0.4\geq \tau \geq 0.3562$, then \eqref{eq:Large2} implies that
\[
f(\tilde{S}_1)\geq \left(0.3562 + \left(\frac{9}{10}-\frac{3\cdot 0.3562}{2}\right) \left( 1 - e^{-\left(\deltagap\right)0.505}\right)-O(\varepsilon)\right)v\geq (0.50-O(\varepsilon))v.
\]
If $\tau \geq 0.4$, then \eqref{eq:Large2} implies that
\[
f(\tilde{S}_1)\geq \left(0.4 + \left(\frac{5}{6}-\frac{4\cdot 0.4}{3}\right) \left( 1 - e^{-\left(\deltagap\right)0.505}\right)-O(\varepsilon)\right)v\geq (0.51-O(\varepsilon))v.
\]
Thus we obtain a $(0.5-O(\varepsilon))$-approximation when $\mu\geq 0.505$ using \eqref{eq:Large1} and \eqref{eq:Large2}.

Next suppose that $\mu <0.505$.
First consider the case when $\tau \leq 0.22$.
Then, since $\mu \geq 0$, it follows from \eqref{eq:Large1} that
\[
f(\tilde{S}) \geq (1 - 0.22) \left(1-e^{- 1}  - O(\varepsilon)\right)v\geq (0.49-O(\varepsilon))v,
\]
Next assume that $\tau\geq 0.22$ and $\mu \geq 3.5\left(\tau -0.22\right)$.
Since $\mu < 0.505$, we have $\tau \leq 0.365$.
Hence \eqref{eq:Large1} implies that
\[
f(\tilde{S}) \geq (1 - \tau) \left(1-e^{- \left(\left(\deltagap\right)3.5\left(\tau -0.22\right)+1\right)}  - O(\varepsilon)\right)v\geq (0.50-O(\varepsilon))v.
\]
Otherwise, that is, if $\mu \geq 3.5\left(\tau -0.22\right)$, then \eqref{eq:Large3} implies that
\begin{align*}
f(\tilde{S}_2) &\geq \left(\frac{2}{7}\mu+0.22 + \left(\frac{5}{6}-\frac{1}{3}\left(\frac{2}{7}\mu+0.22\right) - 0.49\right) \left( 1 - e^{-\left(\frac{\deltagap}{\mu}-1\right)}\right)-O(\varepsilon)\right)v\\
& \geq (0.49-O(\varepsilon))v,
\end{align*}
when $\mu<0.505$.
Thus the statement holds.
\end{proof}
\noindent
We remark that the proof of Lemmas~\ref{lem:046small} works when $K\geq W$, and that of Lemma~\ref{lem:046large} works when $K\geq W$ and $c_1\geq \uc_1\geq 0.5 W$.

In summary, we have the following theorem.
\begin{theorem}\label{thm:046}
Suppose that we are given an instance $\I = (f,c, K, E)$ for the problem \eqref{eq:problem}.
Then we can find a $(0.46-\varepsilon)$-approximate solution in $O(\varepsilon^{-1})$ passes and $O(K\varepsilon^{-4}\log K)$ space.
The total running time is $O(n \varepsilon^{-5}\log K)$.
\end{theorem}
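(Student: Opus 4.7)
The plan is to combine Lemmas~\ref{lem:Assumption1}, \ref{lem:046small}, and \ref{lem:046large} via a guessing scheme, running each branch in parallel and returning the best feasible solution encountered. First I will run, in a single pre-processing pass, the $(1/3-\varepsilon)$-approximation streaming algorithm of~\cite{Yu:2016} to pin $f(\opt)$ to within a constant factor, and then guess $v$ from a geometric grid of ratio $1+\varepsilon$ over that constant-sized interval, producing $O(\varepsilon^{-1})$ candidates so that the correct one satisfies $v \leq f(\opt) \leq (1+\varepsilon)v$. This replaces the naive $O(\varepsilon^{-1}\log K)$ grid on $v$ by an $O(\varepsilon^{-1})$ grid, at the cost of one extra pass.

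Next I will enumerate $(\uc_1, \oc_1)$ and $(\uc_2, \oc_2)$ from geometric grids of ratio $1+\varepsilon$, restricted by $\oc_1 \geq \uc_2$ and $\oc_1 + \oc_2 \leq 1$. Corollary~\ref{cor:IgnoreLarge} lets me absorb any $c_i$ below a constant threshold without having to guess it, leaving $O(\varepsilon^{-1})$ candidates on each side. For each tuple $(v, \uc_1, \oc_1, \uc_2, \oc_2)$ I will launch in parallel three candidates: the algorithm of Lemma~\ref{lem:Assumption1}, which delivers a $(0.5-O(\varepsilon))$-approximation whenever $c_1+c_2 \leq 1-\varepsilon$; the algorithm of Lemma~\ref{lem:046small}, activated when $\oc_1 \leq 0.5$; and the algorithm of Lemma~\ref{lem:046large}, activated when $\uc_1 \geq 0.5$. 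The output is the best feasible set across all branches and guesses.

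Correctness follows by case-analysis on $c_1+c_2$ and $c_1$. If $c_1+c_2 \leq 1-\varepsilon$, Lemma~\ref{lem:Assumption1} already yields a $(0.5-O(\varepsilon))$-approximation; otherwise the correctly-guessed tuple triggers Lemma~\ref{lem:046small} when $c_1 \leq 0.5$ and Lemma~\ref{lem:046large} when $c_1 \geq 0.5$, each producing at least $(0.46-O(\varepsilon))\, f(\opt)$. Rescaling $\varepsilon$ by a constant absorbs the hidden factor into the stated $-\varepsilon$ slack. For complexity, the outer guessing contributes $O(\varepsilon^{-1}) \cdot O(\varepsilon^{-1}) \cdot O(\varepsilon^{-1}) = O(\varepsilon^{-3})$ parallel executions; each execution runs an instance of Lemma~\ref{lem:046small} or~\ref{lem:046large}, costing $O(\varepsilon^{-1})$ passes, $O(K\varepsilon^{-1})$ space, and $O(n\varepsilon^{-2})$ time. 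Multiplying and absorbing the pre-processing cost of $O(K\varepsilon^{-1}\log K)$ space and $O(n\varepsilon^{-1}\log K)$ time yields the claimed totals $O(K\varepsilon^{-4}\log K)$ space, $O(n\varepsilon^{-5}\log K)$ time, and $O(\varepsilon^{-1})$ passes.

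The main obstacle I anticipate is keeping the guessing grids for $\uc_1, \uc_2$ at size $O(\varepsilon^{-1})$ rather than $O(\varepsilon^{-1}\log K)$; otherwise the claimed complexity would pick up additional $\log K$ factors. This requires arguing that extreme regimes --- $c_i$ vanishingly small, or $c_1+c_2$ saturating toward $1$ --- are handled either by Corollary~\ref{cor:IgnoreLarge} without exact size knowledge or by Lemma~\ref{lem:Assumption1}. A subtle point is the intermediate regime $c_1+c_2 \in (1-\varepsilon, 1]$, which lies outside the domain of Lemma~\ref{lem:Assumption1} and only marginally inside that of Lemmas~\ref{lem:046small} and~\ref{lem:046large} (which formally need $c_1+c_2 \leq 1-\varepsilon/\delta$): one must verify that in this regime $\opt$ is almost $\{o_1,o_2\}$, so that either the pair enumeration implicit in \Call{PickNiceItem}{} or the simpler $0.39$-approximation of Section~\ref{sec:039} already covers the gap.
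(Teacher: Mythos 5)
Your overall architecture is the same as the paper's: fix $v$ by running the single-pass constant-factor algorithm and a geometric grid of $O(\varepsilon^{-1})$ candidates, dismiss the regime where $c_1+c_2$ is close to $1$ via Lemma~\ref{lem:Assumption1}, dismiss small $c_1$ by a threshold argument, guess $(\uc_1,\oc_1),(\uc_2,\oc_2)$ geometrically, and invoke Lemma~\ref{lem:046small} or Lemma~\ref{lem:046large} according to whether $c_1\leq 0.5$ or $c_1\geq 0.5$. However, two steps in your write-up do not hold as stated. First, your claim that Corollary~\ref{cor:IgnoreLarge} ``absorbs'' any $c_i$ below a constant threshold is wrong for $c_2$ (and mis-attributed for $c_1$: small $c_1$ is handled by Lemma~\ref{lem:simpleratio}, since $1-e^{-(1-c_1)}\geq 0.46$ when $c_1\leq 0.383$, not by Corollary~\ref{cor:IgnoreLarge}, which needs $f(o_1)\leq \tau v$ with $\tau\leq 0.272$ and says nothing when $f(o_1)$ is large). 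When $f(o_1)>0.272v$ and $c_2$ is tiny, Corollary~\ref{cor:IgnoreLarge} yields only $(1-\tau)(1-e^{-1})<0.46$, so small $c_2$ is not covered without the machinery of Lemmas~\ref{lem:046small}/\ref{lem:046large}, whose hypotheses require a multiplicative guess $\uc_2\leq c_2\leq\oc_2\leq(1+\varepsilon)\uc_2$. Your worry that this forces an extra $\log K$ is unfounded: the theorem's stated bounds already carry one $\log K$, and this is exactly where it comes from in the paper --- $c_2$ ranges over $[1/K,1]$, so its grid has $O(\varepsilon^{-1}\log K)$ values, and multiplying $O(\varepsilon^{-1})$ guesses for $v$, $O(\varepsilon^{-1})$ for $c_1$ (constant range $[0.383,1]$), $O(\varepsilon^{-1}\log K)$ for $c_2$, with $O(K\varepsilon^{-1})$ space and $O(n\varepsilon^{-2})$ time per run gives precisely $O(K\varepsilon^{-4}\log K)$ and $O(n\varepsilon^{-5}\log K)$. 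Dropping your unjustified grid-compression and using the plain grid both repairs the argument and matches the claimed complexity.

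Second, your case split on $c_1+c_2$ is inconsistent as written: you invoke Lemma~\ref{lem:Assumption1} when $c_1+c_2\leq 1-\varepsilon$ and claim the complementary regime triggers Lemmas~\ref{lem:046small}/\ref{lem:046large}, but those lemmas require $c_1+c_2\leq \cmax$ with $\delta=0.01$, i.e.\ $c_1+c_2\leq 1-100\varepsilon$, which is incompatible with $c_1+c_2>1-\varepsilon$. The statement of Lemma~\ref{lem:Assumption1} contains a sign slip: as its appendix proof shows (it uses $c(\opt-o_1-o_2)=O(\varepsilon)K$ and the two-pass pair-finding of Claim~\ref{clm:appendix1}), it is the regime where $c_1+c_2$ is \emph{close to} $1$ that it resolves with a $(0.5-O(\varepsilon))$-approximation, and in the theorem's proof it is applied (with $\varepsilon$ rescaled by $1/\delta$) to cover all $c_1+c_2>\cmax$. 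Your proposed fallbacks for this regime do not work: the $0.39$-approximation of Section~\ref{sec:039} is below the $0.46$ target, and \Call{PickNiceItem}{} alone does not handle the small leftover items of $\opt\setminus\{o_1,o_2\}$. You need the argument of Lemma~\ref{lem:Assumption1}'s appendix proof (pair enumeration for $o_1,o_2$ plus \Call{Simple}{} on the $O(\varepsilon)K$-sized remainder) to close this case; with that substitution and the standard $c_2$ grid, your proof goes through and coincides with the paper's.
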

\begin{proof}
As mentioned in the beginning of Section~\ref{sec:simple}, if we design an algorithm running in $O(T_1)$ time and $O(T_2)$ space provided the approximate value $v$, then the total running time is $O(n\varepsilon^{-1}\log K + \varepsilon^{-1}T_1)$ and the space required is $O(\max\{\varepsilon^{-1}\log K, \varepsilon^{-1} T_2\})$.
Thus suppose that we are given $v$ such that $v\leq f(\opt)\leq (1+\varepsilon)v$.

By Lemma~\ref{lem:Assumption1}, we may assume that $c_1+c_2\leq \cmax$ where $\delta = 0.01$.
We may assume that $c_1\geq 0.383$, as otherwise Lemma~\ref{lem:simpleratio} implies that \textup{\Call{Simple}{$\mathcal{I}; v, K$}} yields a $(0.46-\varepsilon)$-approximation.
For $i\in\{1,2\}$, we guess $\underline{c}_i, \overline{c}_i$ such that $\underline{c}_i\leq c_i\leq  \overline{c}_i$ and $\overline{c}_i\leq (1+\varepsilon)\underline{c}_i$ by a geometric series.
This takes $O(\varepsilon^{-2}\log K)$ space, since the range of $c_1$ is $[0.383, 0.46]$ and that of $c_2$ is $[1/K, 1]$.

When $c_1\leq 0.5$, it follows from Lemma~\ref{lem:046small} that we can find a $(0.46-\varepsilon)$-approximate solution in $O(\varepsilon^{-1})$ passes and $O(\varepsilon^{-1}K)$ space.
When $c_1\geq 0.5$, it follows from Lemma~\ref{lem:046large} that we can find a $(0.49-\varepsilon)$-approximate solution in $O(\varepsilon^{-1})$ passes and $O(\varepsilon^{-1}K)$ space.
Hence, for each $\underline{c}_i, \overline{c}_i$, it takes $O(\varepsilon^{-1})$ passes and $O(\varepsilon^{-1}K)$ space, running in $O(n\varepsilon^{-2})$ time in total.
Thus, for a fixed $v$, the space required is $O(K\varepsilon^{-3}\log K)$, and the running time is $O(n\varepsilon^{-4}\log K)$.
Therefore, the algorithm in total uses $O(\varepsilon^{-1})$ passes and $O(K \varepsilon^{-4}\log K)$ space, running in $O(n\varepsilon^{-5}\log K)$ time.
Thus the statement holds.
\end{proof}

\section{Improved $0.5$-Approximation Algorithm}\label{sec:0.5}


In this section, we further improve the approximation ratio to $0.5$.
Recall that we are given $v$ with $v\leq f(\opt)\leq (1+\varepsilon)v$ taking $O(\varepsilon^{-1})$ space.

\subsection{Overview}\label{sec:0.5overview}


We first remark that algorithms so far give us a $(0.5-\varepsilon)$-approximation for some special cases.
In fact, Lemma~\ref{lem:simpleratio} and Corollary~\ref{cor:IgnoreLarge} lead to a $(0.5-\varepsilon)$-approximation when $c(o_1)\leq 0.3\kn$ or $f(o_1)\leq 0.15 v$.

\begin{corollary}\label{cor:c1_lb}
If $c(o_1)\leq 0.3\kn$ or $f(o_1)\leq 0.15 v$, then we can find a set $S$ in $O(\varepsilon^{-1})$ passes and $O(K)$ space such that $c(S)\leq K$ and  $f(S)\geq (0.5-O(\varepsilon))v$.
\end{corollary}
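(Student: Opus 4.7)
The plan is a direct case split using the two building blocks already established. In each case, one of the earlier results, applied with the right parameters, is already strong enough; no new technical ingredient is needed.

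For the first case, suppose $c(o_1)\leq 0.3K$. I would simply run \Call{Simple}{$\I; v, K$}. By Lemma~\ref{lem:simpleratio} with $W=K$, the returned set $S$ satisfies $c(S)\leq K$ and
\[
f(S)\geq \left(1-e^{-\frac{K-c(o_1)}{K}}-O(\varepsilon)\right)v \geq \left(1-e^{-0.7}-O(\varepsilon)\right)v \geq (0.5-O(\varepsilon))v,
\]
since $e^{-0.7}<0.5$. This is $O(\varepsilon^{-1})$ passes, $O(K)$ space, and $O(\varepsilon^{-1}n)$ time, as required.

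For the second case, suppose $f(o_1)\leq 0.15v$. Here I would invoke Corollary~\ref{cor:IgnoreLarge} with $\tau=0.15$ and $W=K$, yielding a set $S$ with $c(S)\leq K$ and
\[
f(S)\geq (1-\tau)\left(1-e^{-1}-O(\varepsilon)\right)v = \left(0.85(1-e^{-1})-O(\varepsilon)\right)v \geq (0.5-O(\varepsilon))v,
\]
since $0.85(1-e^{-1})\approx 0.537$. Again this runs in $O(\varepsilon^{-1})$ passes and $O(K)$ space per invocation.

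Since the algorithm does not know a priori which of the two conditions holds, I would run both strategies and return whichever produces the larger $f$-value. The only subtle point, and the main obstacle, is that Corollary~\ref{cor:IgnoreLarge} assumes a prior estimate $\uc_1$ with $\uc_1 K\leq c(o_1)\leq (1+\varepsilon)\uc_1 K$. I would obtain it by the standard geometric-series guess over $[1/K,1]$, which produces $O(\varepsilon^{-1}\log K)$ candidates; each individual run still respects the $O(K)$ space bound asserted here, and the guessing overhead is absorbed into the overall pipeline exactly as it is in Sections~\ref{sec:simple}--\ref{sec:046}. Combining the two cases gives the claimed $(0.5-O(\varepsilon))v$ guarantee.
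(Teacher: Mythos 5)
Your proposal is correct and follows essentially the same route as the paper's own proof: Case 1 invokes Lemma~\ref{lem:simpleratio} with $W=K$ to get $\left(1-e^{-0.7}-O(\varepsilon)\right)v\geq(0.5-O(\varepsilon))v$, and Case 2 invokes Corollary~\ref{cor:IgnoreLarge} with $\tau=0.15$ and $W=K$ to get $0.85\left(1-e^{-1}-O(\varepsilon)\right)v\geq(0.5-O(\varepsilon))v$, exactly as the paper does. Your extra remark about guessing $\uc_1$ is harmless and consistent with how the paper treats these estimates as supplied by the surrounding pipeline.
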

\begin{proof}
First suppose that $c(o_1)\leq 0.3\kn$.
By Lemma~\ref{lem:simpleratio}, the output $S$ of \Call{Simple}{$\mathcal{I}; v, \kn$} satisfies that
$$
f(S)\geq \left(1-e^{-\frac{K-c(o_1)}{K}}-O(\varepsilon)\right)v \geq \left(1-e^{-0.7}-O(\varepsilon)\right)v\geq \ratioep v.
$$

Next suppose that $f(o_1)\leq 0.15 v$.
We see that $f(\opt-o_1)\geq f(\opt)-f(o_1)\geq 0.85 v$.
By Corollary~\ref{cor:IgnoreLarge}, we can find a set $S$ such that $K - c(o_2) < c(S)\leq K$ and 
\[
f(S)\geq 0.85  \left(1-e^{-1}  - O(\varepsilon)\right)v\geq (0.5-O(\varepsilon))v.
\]
\end{proof}

Moreover, the following corollary asserts that we may suppose that $f(o_1)$ and $f(o_2)$ are small.

\begin{corollary}\label{cor:boundf1}
In the following cases, \textup{\Call{LargeFirst}{}}, together with \textup{\Call{Simple}{}}, can find a $(0.5-\varepsilon)$-approximate solution in $O(\varepsilon^{-1})$ passes and $O(K\varepsilon^{-4}\log K)$ space:
\begin{enumerate}
\item when $c_1\geq 0.5$ and $f(o_1)\geq 0.362v$.
\item when $c_1\leq 0.5$ and $f(o_1)\geq 0.307v$.
\item when $f(o_2)\geq 0.307v$.
\end{enumerate}
\end{corollary}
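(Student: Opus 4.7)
The plan is to sharpen the case analyses in Lemmas~\ref{lem:046small} and \ref{lem:046large}, where the previous thresholds $\tau \geq 0.272$ and $\tau \geq 0.224$ yielded $0.46$ and $0.49$ respectively, and are now raised to $\tau \geq 0.307$ and $\tau \geq 0.362$. These thresholds are dictated by the worst intersection of the competing bounds \eqref{eq:Small3}--\eqref{eq:Small4} (resp.\ \eqref{eq:Large1}--\eqref{eq:Large3}), which under the stronger hypotheses meet exactly at $0.5$ rather than at $0.46$ or $0.49$.

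For cases~1 and~2, I first may assume $\tau \leq 0.5$ (otherwise the singleton $\{o_1\}$ already yields a $0.5$-approximation). I then replay the $\mu$-case split from the existing proofs with the stronger lower bound on $\tau$ plugged in. In case~2 (mirroring Lemma~\ref{lem:046small}), for $\mu \geq 0.5$ the bound \eqref{eq:Small3} becomes $0.307 + (1 - \beta)(1 - e^{-(1-\delta)/2}) - O(\varepsilon)$, which exceeds $0.5 - O(\varepsilon)$ given $\beta < 0.5$; for $\mu < 0.5$ the bound \eqref{eq:Small4} becomes at least $0.307 + (1 - 2\beta + 0.307)(1 - e^{-((1-\delta)/\mu - 1)}) - O(\varepsilon)$, which likewise exceeds $0.5 - O(\varepsilon)$. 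Case~1 is analogous, using the tighter $\Gamma(\tau)$ values provided by Lemma~\ref{lem:good_e_2} together with \eqref{eq:Large1}--\eqref{eq:Large3} at the critical point $\mu = 0.505$, where the three bounds combine to just clear $0.5$.

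For case~3, I apply a symmetric invocation of the \Call{LargeFirst}{} framework in which $o_2$ takes the role of the focal large item. Concretely, I run \Call{PickNiceItem}{} with the size range $(\uc_2, \oc_2)$ and the $f$-value estimate $\tau_2 \geq 0.307$ to obtain a proxy $e$ of $o_2$ satisfying $f(e) \geq (\tau_2 - O(\varepsilon))v$ and $f(\opt - o_2 \mid e) \geq (\Gamma(\tau_2) - \tau_2)v - O(\varepsilon)v$, and then pack $\opt - o_2$ around $e$ via \Call{Simple}{}. Since Lemma~\ref{lem:Assumption1} guarantees $c_1 + c_2 \leq 1 - \varepsilon$, the remaining capacity $K - c(e)$ accommodates $o_1$ along with the smaller items, so the analysis reduces to case~2 with the roles of $o_1$ and $o_2$ interchanged and yields the same $(0.5 - O(\varepsilon))v$ guarantee.

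The main obstacle is the sharpness of the boundary arithmetic: the thresholds $0.307$ and $0.362$ are tuned so that the bound curves (as functions of $\tau$ and $\mu$) touch $0.5$ exactly at the critical $\mu$, leaving no slack for crude estimates. The secondary difficulty is in case~3, where the $o_2$-focused invocation requires an additional parameter guess $\tau_2$ for $f(o_2)/v$; however the resulting geometric grid of $O(\varepsilon^{-1} \log K)$ candidate values is subsumed by the space budget $O(K\varepsilon^{-4} \log K)$ already incurred by \Call{LargeFirst}{}, so the complexity claim of the corollary is preserved.
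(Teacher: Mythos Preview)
Your treatment of cases~1 and~2 is essentially the paper's argument: replay the $\mu$-split from Lemmas~\ref{lem:046small} and~\ref{lem:046large} with the sharper lower bound on $\tau$. (A minor point: the paper splits at $\mu=0.495$ rather than $0.5$. With your split, the bound from \eqref{eq:Small4} at $\mu$ just below $0.5$ is about $0.307+0.307(1-e^{-0.98})\approx 0.499$, so you must also observe that $\tilde S_1$ is computed simultaneously and taking the maximum covers that sliver.)

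Case~3, however, has a real gap. You propose to obtain the proxy $e$ for $o_2$ via \Call{PickNiceItem}{}, which yields the bounds of Lemma~\ref{lem:good_e_2}: $p_1=\Gamma(\tau_2)-\tau_2$ and $p_2=\Gamma(\tau_2)-\beta$. For $\tau_2=0.307$ one has $\Gamma(\tau_2)=0.9-0.307/2\approx 0.7465$, so $p_1\approx 0.44$ and $p_2<0.25$. Plugging these into the analogues of \eqref{eq:LargeFirst3}--\eqref{eq:LargeFirst4} at the critical value $\mu'\approx 0.5$ gives only about $0.48$ and $0.46$ respectively, well short of $0.5$. So the claimed reduction to case~2 fails: case~2's arithmetic relies on the stronger values $p_1=1-\beta\geq 0.5$ and $p_2=1-2\beta+\tau\geq 0.307$, which come from Lemma~\ref{lem:good_e_1}, not from \Call{PickNiceItem}{}.

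The paper's route is precisely to invoke the $c\leq 0.5$ branch of \Call{LargeFirst}{} on $E_2$ (an item $e\in E_2$ with $f(e)\approx f(o_2)$, together with the pair check $f(e+e')\geq\beta v$), which is legitimate because $c_2\leq 0.5$ and $\oc_1+\oc_2\leq 1$ ensures both $e+o_1$ and $e+o_2$ are feasible. That mechanism genuinely reproduces the bounds \eqref{eq:Small3}--\eqref{eq:Small4} with $\mu'=\frac{1-\oc_1-\oc_2}{1-\oc_2}$ in place of $\mu$, and then case~2's computation applies verbatim.
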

\begin{proof}
(1)
Suppose that $c_1\geq 0.5$ and $f(o_1)\geq 0.362v$.
We may also suppose that $f(o_1) < 0.5v$, as otherwise we can just take a singleton with maximum return from $E$.
We guess $\utau$ and $\otau$ such that $0.362v\leq \utau v\leq f(\opt)\leq \otau v\leq 0.5 v$ and $\otau \leq (1+\varepsilon)\utau$ from the interval $[0.362, 0.5]$ by a geometric series using $O(\varepsilon^{-1})$ space.
Consider applying \Call{LargeFirst}{$\I;v, K, \otau$} for each $\otau$.
By Lemmas~\ref{lem:good_e_2} and \ref{lem:LargeFirst}, the output of \Call{LargeFirst}{$\I;v, K, \otau$} is lower-bounded by the RHSs of the inequalities \eqref{eq:Large2} and \eqref{eq:Large3}, where we may assume that $\beta <0.5$.

First suppose that $\mu=\frac{1-\oc_1-\oc_2}{1-\oc_1}\geq 0.495$.
Then \eqref{eq:Large2} implies that, if $\otau \geq 0.4$, we obtain
\[
f(\tilde{S}_1) \geq \left(0.4 + \left(\frac{5}{6}-\frac{4}{3}\cdot 0.4\right) \left( 1 - e^{-(\deltagap) 0.495}\right)-O(\varepsilon)\right)v\geq (0.51-O(\varepsilon))v,
\]
and if $\otau < 0.4$, then
\[
f(\tilde{S}_1) \geq \left(0.362 + \left(\frac{9}{10}-\frac{3}{2}\cdot 0.362\right) \left( 1 - e^{-(\deltagap) 0.495}\right)-O(\varepsilon)\right)v\geq (0.50-O(\varepsilon))v.
\]
Otherwise, suppose that $\mu < 0.495$.
Then \eqref{eq:Large3} implies that, if $\otau \geq 0.4$, we have 
\[
f(\tilde{S}_2) \geq \left(0.4 +  \left(\frac{5}{6}-\frac{1}{3}\cdot 0.4 - 0.5\right) \left( 1 - e^{-\left(\frac{\deltagap}{0.495}-1\right)}\right)-O(\varepsilon)\right)v \geq (0.52-O(\varepsilon))v.
\]
and if $\otau < 0.4$, then
\[
f(\tilde{S}_2) \geq \left(0.362 +  \left(\frac{9}{10}-\frac{1}{2}\cdot 0.362 - 0.5\right) \left( 1 - e^{-\left(\frac{\deltagap}{0.495}-1\right)}\right)-O(\varepsilon)\right)v \geq (0.50-O(\varepsilon))v.
\]
Thus the statement holds.

(2)
The argument is similar to (1).
Suppose that $c_1\leq 0.5$ and $f(o_1)\geq 0.307v$.
We guess $\utau$ and $\otau$ such that $0.307v\leq \utau v\leq f(\opt)\leq \otau v\leq 0.50 v$ and $\otau \leq (1+\varepsilon)\utau$ from the interval $[0.307, 0.5]$ by a geometric series using $O(\varepsilon^{-1})$ space.
Consider applying \Call{LargeFirst}{$\I;v, K, \otau$} for each $\otau$.
By Lemmas~\ref{lem:good_e_1} and \ref{lem:LargeFirst}, the output of \Call{LargeFirst}{$\I;v, K, \otau$} is lower-bounded by the RHSs of \eqref{eq:Small3} and \eqref{eq:Small4}, where we may assume that $\beta <0.5$.

First suppose that $\mu=\frac{1-\oc_1-\oc_2}{1-\oc_1}\geq 0.495$.
Then \eqref{eq:Small3} implies that
\[
f(\tilde{S}_1) \geq \left(0.307 + (1-0.5) \left( 1 - e^{-\left(\deltagap\right)0.495}\right)-O(\varepsilon)\right)v\geq (0.50-O(\varepsilon))v.
\]
Otherwise, if $\mu \leq 0.495$, then \eqref{eq:Small4} implies that
\[
f(\tilde{S}_2) \geq \left(0.307  +  (1-2\cdot 0.5 + 0.307) \left( 1 - e^{-\left(\frac{\deltagap}{0.495}-1\right)}\right)-O(\varepsilon)\right)v \geq (0.50-O(\varepsilon))v. 
\]
Thus the statement holds.

(3)
This case can be shown by applying \Call{LargeFirst}{} to $E_2$.
More precisely, we replace $c_1$ with $c_2$ in \Call{LargeFirst}{} with $\tau v \geq f(o_2)\geq \tau v/(1+\varepsilon)$.
We also set $W_1=W-\uc_2K$ instead of $W-\uc_1K$.
Then, since $(\oc_1 + \oc_2)K\leq K$, we can use the same analysis as in the proof of Lemma \ref{lem:046small};
the output of \Call{LargeFirst}{$\I; v, W, \tau$} is lower-bounded by the RHSs of the following three inequalities:
\begin{align*}
f(\tilde{S}_0) &\geq \beta v, \\
f(\tilde{S}_1) &\geq \left(\otau + (1-\beta) \left( 1 - e^{-\left(\deltagap\right)\mu'}\right)-O(\varepsilon)\right)v,\\
f(\tilde{S}_2) &\geq \left(\otau +  (1-2\beta+\otau) \left( 1 - e^{-\left(\frac{\deltagap}{\mu'}-1\right)}\right)-O(\varepsilon)\right)v,
\end{align*}
where $\mu'=\frac{1-\oc_1-\oc_2}{1-\oc_2}$.
We may assume that $\beta < 0.5$.
Since the lower bounds are the same as \eqref{eq:Small3} and \eqref{eq:Small4} in the proof (2), 
the statement holds.
\end{proof}


Recall that, in Section~\ref{sec:046}, we found an item $e$ such that Observation~\ref{obs:1} can be applied, that is, $f(\opt-o_1\mid e)$ and $f(\opt-o_1-o_2\mid e)$ are large.
In this section, we aim to find a good set $Y\subseteq E$ such that $f(\opt'\mid Y)$ is large for some $\opt'\subseteq \opt$, using $O(\varepsilon^{-1})$ passes, 
while guaranteeing that the remaining space $K- c(Y)$ is sufficiently large. 
We then solve the problem of maximizing the function $f(\cdot\mid Y)$ to approximate $\opt'$ with algorithms in previous sections.
Specifically, we devise two strategies depending on the size of $c_1+c_2$~(see Sections~\ref{sec:c1c2Large} and \ref{sec:c1c2Small} for more specific values of $c_1$ and $c_2$).

\paragraph*{First Strategy: Packing small items first}

First consider the case when $c_1+c_2$ is large.
Recall that $f(o_1)$ and $f(o_2)$ are supposed to be small by Corollary~\ref{cor:boundf1}.
Hence, there is  a ``dense'' set $\opt-o_1-o_2$ of small items, i.e., $\frac{ f({\rm OPT} \setminus \{o_1, o_2\})}{ c({\rm OPT} \setminus \{o_1, o_2\})}$ is large. 
Therefore, we consider collecting such small items.
However, if we apply \Call{Simple}{} to the original instance~\eqref{eq:problem} to approximate $\opt-o_1-o_2$, then we can only find a set whose function value is at most $f(\opt-o_1-o_2)$.

The main idea of this case is to stop collecting small items early. 
That is, we introduce 
\begin{align}\label{prob:c1c2Large_1st}
  \text{maximize\ \  }f(S) \quad \text{subject to \ } c(S)\leq K_1, \quad S\subseteq E,
\end{align}
where $K_1\leq K - c(o_1)$, and apply \Call{Simple}{} to this instance to approximate $\opt-o_1-o_2$.
Let $Y$ be the output. The key observation is that, in Phase 2, since we still have space to take $o_1$, we may assume that $f(\opt-o_1\mid Y)\geq 0.5v$ in a way similar to Lemma~\ref{lem:good_e_1}.


Given such a set $Y$, define $g(\cdot )=f(\cdot \mid Y)$ and the problem:
\begin{align}\label{prob:c1c2Large_2nd}
  \text{maximize\ \  }g(S) \quad &\text{subject to \ } c(S)\leq K-c(Y), \quad S\subseteq E.
\end{align}
We apply approximation algorithms in Sections~\ref{sec:simple}--\ref{sec:046} to approximate $\opt-o_1$, using the fact that $g(\opt - o_1) \geq 0.5v$ and $c(\opt - o_1) \leq (1-\uc_1)K$.
Let $\tilde{S}$ be the output of this phase.
Then $Y\cup \tilde{S}$ is a feasible set to the original instance, and it holds that $f(Y\cup \tilde{S}) = f(Y) + g(\tilde{S})$.


We remark that the lower bound for $f(Y)$ depends on the size $c(Y)$ by Corollary~\ref{cor:simpleratio}, and that for $g(\tilde{S})$ depends on the knapsack capacity $K-c(Y)$.
Hence the lower bound for $f(Y\cup \tilde{S})$ can be represented as a function with respect to $c(Y)$.
By balancing the two lower bounds with suitable $K_1$, we can obtain a $(0.5-O(\varepsilon))$-approximation.
See Sections~\ref{sec:Large_c1c2Large} and \ref{sec:Small_c1c2Large} for more details.

\paragraph*{Second Strategy: Packing small items later}

Suppose that $c_1+c_2$ is small. Then $c({\rm OPT} \setminus \{o_1, o_2\})$ is large and we do not have the dense set of small items as before. 
For this case, we introduce a modified version of \Call{Simple}{} for the original problem~\eqref{eq:problem} to find a good set $Y$.
The difference is that, in each round, we check whether any item in $E$, by itself, is enough to give us a solution with $0.5v$. 
Such a modification would allow us to lower bound $f(\opt'|Y)$ for some $\opt' \subseteq \opt$ for Phase 2. 
We may assume that $c(Y)<0.7K$, as otherwise we are done by Lemma~\ref{lem:fact_knapsack}, which means that we still have enough space to pack other items.
That is, define $g(\cdot )=f(\cdot \mid Y)$ and the problem:
\begin{align}\label{prob:c1c2Small_2nd}
  \text{maximize\ \  }g(S) \quad &\text{subject to \ } c(S)\leq K-c(Y), \quad S\subseteq E.
\end{align}
Let $\opt'=\{e\in \opt\mid c(e)\leq K-c(Y)\}$. We aim to find a feasible set to this problem that approximates $\opt'$ in Phase 2. Thanks to the modification of \Call{Simple}{}, 
we can assume that $g(\opt')$ is large. However, an extra difficulty arises
if $K-c(Y)\geq c(\opt')$, we cannot apply our algorithms developed in previous sections. For this, we need to combine
\Call{Simple}{} and \Call{LargeFirst}{} to obtain the better ratios, where the results are summarized as below.

\begin{lemma}\label{lem:largeW}
Suppose that we are given an instance $\I' = (f,c, K', E)$ for the problem~\eqref{eq:problem}.
Let $X$ be a subset such that $c(e)\leq K'$ for any $e\in X$ and $c(X)\leq W' =\eta K'$, where $\eta > 1$.
We further suppose that $v'\leq f(X)\leq O(1)v'$.
Then we can find a set $S$ in $O(n\varepsilon^{-4}\log K')$ time and $O(K'\varepsilon^{-3}\log K')$ space, using $O(\varepsilon^{-1})$ passes, such that the following hold:
\begin{enumerate}
\item[\textup{(a)}] If $\eta\in [1, 1.4]$, then $f(S)\geq (0.315-O(\varepsilon)) v'$.
\item[\textup{(b)}] If $\eta\in [1.4, 1.5]$, then $f(S)\geq (0.283-O(\varepsilon)) v'$.
\item[\textup{(c)}] If $\eta\in [1.5, 2]$, then $f(S)\geq (0.218-O(\varepsilon)) v'$.
\item[\textup{(d)}] If $\eta\in [2, 2.5]$, then $f(S)\geq (0.178-O(\varepsilon)) v'$.
\end{enumerate}
\end{lemma}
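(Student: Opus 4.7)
The plan is to generalize Corollaries~\ref{cor:simpleratio} and~\ref{cor:IgnoreLarge} from $\opt$ to the arbitrary benchmark $X$. The proofs of both corollaries use only submodularity, $c(X)\leq W'$, $f(X)\geq v'$, and the fact that $c(x_i)\leq K'$ for every $x_i\in X$, so they carry over verbatim once $\opt$ is replaced by $X$ and $W$ by $\eta K'$. Writing $\sigma_i := c(x_i)/K'$ with $x_1,x_2,\ldots$ ordered by decreasing size, and letting $\tau$ denote the target ratio from case (a)--(d), the generalized Corollary~\ref{cor:simpleratio} applied via \Call{Simple}{$\I'; v', W'$} returns a feasible set with value at least $(1 - e^{-(1-\sigma_1)/\eta} - O(\varepsilon))v'$, and the generalized Corollary~\ref{cor:IgnoreLarge}, after guessing $\uc_1^X$ with $\uc_1^X K'\leq c(x_1)\leq (1+\varepsilon)\uc_1^X K'$, returns a feasible set with value at least $(1-\tau)(1 - e^{-(1-\sigma_2)/(\eta-\uc_1^X)} - O(\varepsilon))v'$ whenever $f(x_1)\leq \tau v'$.

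I would output the best of: (i) the singleton of maximum $f$-value (attaining $\geq f(x_1)\geq \tau v'$ in the complementary case $f(x_1)\geq \tau v'$); (ii) \Call{Simple}{$\I'; v', W'$}; and (iii), over all guesses $\uc_1^X$ on a geometric grid of size $O(\varepsilon^{-1}\log K')$, the ``ignore $x_1$'' variant above. For intermediate $\sigma_1$ where none of (i)--(iii) alone meets the target, I would add (iv) a \Call{LargeFirst}{}-style output that first picks a large item $e$ via \Call{PickNiceItem}{$v',(\uc_1^X,\oc_1^X),\tau$} approximating $x_1$ (requiring an additional $O(\varepsilon^{-1})$-size geometric guess of $\tau$) and then applies \Call{Simple}{} on the reduced instance with capacity $K'-c(e)$ targeting $X-x_1$, contributing $f(e)$ to the final value.

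For each of (a)--(d), the claimed ratio is verified by a case analysis on $\sigma_1$, exploiting the structural constraint $\sigma_2 \leq \min(\sigma_1,\eta-\sigma_1)$ which follows from the sorted ordering together with $c(X)\leq \eta K'$. Below a threshold on $\sigma_1$, strategy (ii) alone meets the target, e.g.\ in (a) with $\eta\leq 1.4$, $\sigma_1\leq 0.47$ yields $(1 - e^{-0.53/1.4})v'\approx 0.315v'$. Above the threshold, strategies (i), (iii), and (iv) cover the rest, using $(1-\tau)(1 - e^{-(1-\sigma_2)/(\eta-\uc_1^X)})$ for (iii) and the \Call{LargeFirst}{} bound $(\tau + p(1 - e^{-\cdot}))v'$ for (iv); the targets weaken from $0.315$ down to $0.178$ across (a)--(d) because the exponent $(1-\sigma_1)/\eta$ shrinks as $\eta$ grows, forcing looser numerical balances.

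The main obstacle is the numerical tuning: for each $\eta$-range one must identify the worst corner of the $(\sigma_1,\sigma_2)$-region and verify that one of the four outputs meets the target there, then lift by monotonicity in $\eta$. This is where the precise constants $0.315$, $0.283$, $0.218$, $0.178$ come from, and I expect this case check to occupy most of the formal proof. Resource-wise, each invocation of \Call{Simple}{} uses $O(K')$ space, $O(\varepsilon^{-1})$ passes, and $O(n\varepsilon^{-1})$ time; multiplying by the $O(\varepsilon^{-2}\log K')$ overhead for guessing $\uc_1^X$ and $\tau$ (with $\uc_2^X$ folded in when (iv) is required) and parallelizing across guesses yields the stated $O(K'\varepsilon^{-3}\log K')$ space and $O(n\varepsilon^{-4}\log K')$ time within $O(\varepsilon^{-1})$ passes, exactly matching the template of Theorem~\ref{thm:046}.
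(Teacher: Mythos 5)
There is a genuine gap: your candidate set never includes the strategy ``discard \emph{both} $x_1$ and $x_2$ from the benchmark, without substituting any large item, and approximate the residual,'' and this is exactly what is needed in parts (c) and (d). Concretely, take $\eta=2.5$, $c(x_1)=c(x_2)=K'$, the rest of $X$ consisting of tiny items of total size $0.5K'$ carrying value $0.7v'$, with $f(x_1),f(x_2)\leq 0.17v'$ and every singleton below $0.178v'$. Then (i) fails by assumption; (ii) has guarantee $1-e^{-(1-\sigma_1)/\eta}=0$ since $\sigma_1=1$; (iii) has guarantee $(1-\tau)\bigl(1-e^{-(1-\sigma_2)/(\eta-\sigma_1)}\bigr)=0$ since $\sigma_2=1$; and (iv) picks an item $e$ of size $\approx c(x_1)=K'$, leaving essentially no residual capacity, so both the ``target $X-x_1$'' and the ``target $X-x_1-x_2$'' bounds of the \textsc{LargeFirst} analysis have negative exponents and are vacuous, leaving only $f(e)<0.178v'$. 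So none of your four outputs has a usable guarantee at this corner, even though the small part of $X$ is itself a feasible set of value $0.7v'$. The paper avoids this by a recursive peeling argument that your proposal does not contain: if $c_1\geq 0.5$ (resp.\ $c_1\geq \eta-1$), drop $x_1$ from the benchmark, which reduces $\eta$ by at least $0.5$ (resp.\ to at most $1$), and invoke the already-proved case for the smaller $\eta$ — case (d) calls case (c), case (c) calls case (b), and cases (a),(b) bottom out in the $0.46$-approximation of Section~4 applied to the now-feasible residual $X-x_1$. In the corner above this peels off $x_1$ and then $x_2$ and finishes with \textsc{Simple} on the small residual.

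Relatedly, you defer the entire numerical verification (``identify the worst corner of the $(\sigma_1,\sigma_2)$-region''), but that verification is not a routine appendix to your plan: the constants $0.315$, $0.283$, $0.218$, $0.178$ do not come from balancing the \textsc{Simple}/IgnoreLarge/\textsc{LargeFirst} exponent bounds you list; they arise multiplicatively through the recursion, e.g.\ $0.315\approx 0.685\cdot 0.46$, $0.218\approx 0.78\cdot 0.28$, $0.178\approx 0.82\cdot 0.218$, where each factor $1-\tau$ reflects the singleton threshold at that level of the recursion. Without the peeling/recursion idea (or some equivalent ``ignore the top two items'' candidate with its own guessed sizes), the case analysis you promise cannot close, so the proposal as written does not establish the lemma.
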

\noindent
The proof will be given in Section~\ref{sec:ProofLargeW}.

Using Lemma~\ref{lem:largeW} with case analysis, we can find a feasible set to \eqref{prob:c1c2Small_2nd} that approximates $\opt'$.
This solution, together with $Y$, gives a $(0.5-O(\varepsilon))$-approximate solution.

\subsection{Packing Small Items First}\label{sec:c1c2Large}

\subsubsection{When $c_1\geq 0.5$}\label{sec:Large_c1c2Large}

In this section, we assume that $c_1\geq \uc_1 \geq  0.5$.
Since the range of $c_1$ is $[0.5, 1]$, we can guess $\uc_1$ and $\oc_1$ using $O(\varepsilon^{-1})$ space.
We also guess $\uc_2$ and $\oc_2$ using $O(\varepsilon^{-1}\log K)$ space.

Recall that in the proof of Lemma~\ref{lem:046large}, we have shown that we obtain a $(0.5-\varepsilon)$-approximation when $\mu = \frac{1-\oc_1 -\oc_2}{1-\oc_1}\geq 0.505$.
Therefore, in this section, we assume that $\mu < 0.505$, i.e.,
\begin{equation}\label{eq:c1large_assumption}
1- \oc_1\geq \frac{200}{101}(1-\oc_1 -\oc_2) \geq 1.98 (1-\oc_1 -\oc_2).
\end{equation}
This implies that $\oc_1+\oc_2\geq 0.747$.

\begin{lemma}
Then, if \eqref{eq:c1large_assumption} holds, we can find a $(0.5-\varepsilon)$-approximate solution in $O(\varepsilon^{-1})$ passes and $O(K\varepsilon^{-5}\log^2 K)$ space.
The total running time is $O(n \varepsilon^{-6}\log^2 K)$.
\end{lemma}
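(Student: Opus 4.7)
The plan follows the two-phase approach outlined in Section~\ref{sec:c1c2Large}. First I reduce to a clean subproblem: by Corollaries~\ref{cor:c1_lb} and \ref{cor:boundf1} I may assume $f(o_1) < 0.362v$ and $f(o_2) < 0.307v$, so that submodularity gives $v_1 := f(\opt - o_1 - o_2) \geq 0.331v$. The hypothesis $\mu < 0.505$ is equivalent to $(1-\oc_1)/(1-\oc_1-\oc_2) > 1/0.505 > 1.98$, the key amplification I will exploit below.

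For each geometric guess of $\uc_i, \oc_i$ ($i=1,2$) and $v_1 \in [0.331v, v]$, Phase~1 invokes \Call{Simple}{$(f, c, K_1, E);\, v_1, W_1$} with $K_1 = (1-\oc_1)K$ and $W_1 = (1-\uc_1-\uc_2)K$, producing $Y$ with $c(Y) \leq K_1 \leq K - c(o_1)$. A single extra pass computes $\max_{e\in E} f(Y+e)$; if this is $\geq 0.5v$, the corresponding $Y+e$ is returned. Otherwise $f(o_1 \mid Y) < 0.5v - f(Y)$, which combined with $f(\opt \mid Y) \geq v - f(Y)$ and the submodular inequality $f(\opt \mid Y) \leq f(o_1 \mid Y) + f(\opt - o_1 \mid Y)$ (mirroring the argument of Lemma~\ref{lem:good_e_1}) gives $g(\opt - o_1) \geq 0.5v$, where $g(\cdot) := f(\cdot \mid Y)$. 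Phase~2 then solves the residual instance $(g, c, K - c(Y), E)$ by running \emph{both} the $(0.46-\varepsilon)$-approximation of Theorem~\ref{thm:046} and \Call{Simple}{$\cdot;\, 0.5v,\, (1-\uc_1)K$} targeting $\opt - o_1$ (which is feasible because $c(\opt - o_1) \leq (1-\uc_1)K \leq \oc_1 K \leq K - c(Y)$), retaining the better output $\tilde S$, and the final answer is $Y \cup \tilde S$.

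The main obstacle is verifying $f(Y) + g(\tilde S) \geq (0.5 - O(\varepsilon))v$ over all possible $\alpha := c(Y)/K$ and admissible $(\oc_1, \oc_2)$. Corollary~\ref{cor:simpleratio} together with the analogue of Lemma~\ref{lem:round_knapsack} for Phase~1 restricts $\alpha \in [\alpha_{\min},\ 1-\oc_1]$ with $\alpha_{\min} \geq 1-\oc_1-\min(\oc_2,\, 1-\oc_1-\oc_2)-O(\varepsilon)$, giving $f(Y) \geq (1-e^{-\alpha/(1-\uc_1-\uc_2)})\,v_1$; Phase~2 contributes $g(\tilde S) \geq \max\!\left((1 - e^{-(1-\alpha-\oc_2)/(1-\uc_1)}),\ 0.46\right)\cdot 0.5 v$, both up to an $O(\varepsilon)v$ error. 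The binding configuration is $\oc_1 \approx 0.5$ and $\oc_2 \approx 0.2475$: at $\alpha=\alpha_{\min}\approx 0.253$ both exponents equal $1$ and yield $\approx 0.209v+0.316v=0.525v$; at $\alpha=1-\oc_1\approx 0.5$ the Phase~1 exponent is amplified to $\approx 1.98$ by the $\mu<0.505$ hypothesis, giving $\approx 0.286v$, which combined with Theorem~\ref{thm:046}'s $0.23v$ floor reaches $\approx 0.516v$. A short derivative check on the sum of the two exponential terms confirms the inequality holds throughout $[\alpha_{\min},\,1-\oc_1]$, and the remaining admissible configurations of $(\oc_1, \oc_2)$ yield strictly more slack. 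The resource bounds follow by multiplying the outer enumeration of $(\uc_i, \oc_i, v_1)$, of size $O(\varepsilon^{-3}\log K)$, against the $O(\varepsilon^{-1})$-pass, $O(K\varepsilon^{-4}\log K)$-space, $O(n\varepsilon^{-5}\log K)$-time guarantee of Theorem~\ref{thm:046}, with passes executed in parallel.
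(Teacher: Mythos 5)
Your overall architecture is the paper's: pack a ``dense'' set $Y$ approximating $\opt-o_1-o_2$ with \Call{Simple}{} on a reduced capacity (the paper caps Phase~1 at $1.98(1-\oc_1-\oc_2)K$ rather than your $(1-\oc_1)K$, a minor variant, and likewise uses the single-item check to force $f(\opt-o_1\mid Y)\geq 0.5v$), then fill the residual capacity by running both \Call{Simple}{} and the $0.46$-algorithm against $\opt-o_1$ and keep the best. However, two concrete steps do not hold up as written.

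First, the numerical verification. Writing $\phi(\alpha)=0.331(1-e^{-\alpha/\ocs})$, $\psi_1(\alpha)=0.5\bigl(1-e^{-(1-\alpha-\oc_2)/(1-\uc_1)}\bigr)$, your guarantee is $\phi(\alpha)+\max(\psi_1(\alpha),0.23)$, the maximum of two concave functions of $\alpha$; its minimum over $[\alpha_{\min},1-\oc_1]$ is \emph{not} at an endpoint but at the interior crossing point where $\psi_1=0.23$. At your binding configuration $\oc_1\approx 0.5$, $\oc_2\approx 0.2475$ the crossing is at $\alpha^\ast\approx 0.444$, where the guarantee is $\phi(\alpha^\ast)+0.23\approx 0.274+0.23\approx 0.504$ -- not the $0.516$--$0.525$ you report at the endpoints. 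Moreover, your stated check (``a derivative check on the sum of the two exponential terms'') cannot close this: that sum $\phi+\psi_1$ itself drops below $0.5$ on the right part of the interval (e.g.\ $\approx 0.484$ at $\alpha=0.5$ in the binding configuration), so the $0.23$ floor must take over there, and the real bottleneck is the crossing point. The conclusion $\geq(0.5-O(\varepsilon))v$ does survive (barely), but the verification you offer is not a proof of it; you need to bound the guarantee at the crossing point (or argue as the paper does with its smaller Phase-1 cap, which shifts the interval), not just at the endpoints.

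Second, the resource accounting does not yield the stated bounds. Multiplying the outer enumeration of size $O(\varepsilon^{-3}\log K)$ by the full cost of Theorem~\ref{thm:046} ($O(K\varepsilon^{-4}\log K)$ space, $O(n\varepsilon^{-5}\log K)$ time) gives $O(K\varepsilon^{-7}\log^2 K)$ space and $O(n\varepsilon^{-8}\log^2 K)$ time, which is worse than the lemma's $O(K\varepsilon^{-5}\log^2 K)$ and $O(n\varepsilon^{-6}\log^2 K)$ by a factor $\varepsilon^{-2}$. The point the paper uses is that inside a fixed guess of $(v,\oc_1,\uc_1,\oc_2,\uc_2)$ the residual $0.46$-subroutine (Lemmas~\ref{lem:046small} and \ref{lem:046large}) needs no fresh guesses of its optimal value (it is $v/2$) nor of the largest item of $\opt-o_1$ (that is $c_2$, already guessed); only $\oc_3,\uc_3$ must be enumerated, costing $O(\varepsilon^{-1}\log K)$, so the inner cost is $O(K\varepsilon^{-2}\log K)$ space and $O(n\varepsilon^{-3}\log K)$ time. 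Without this observation (and note your extra enumeration of $v_1$ is unnecessary -- the fixed lower bound $0.331v$ suffices as Simple's value parameter), your proof does not establish the lemma as stated.
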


\noindent
The rest of this subsection is devoted to the proof of the above lemma.
It suffices to design an $O(\varepsilon^{-1})$-pass algorithm provided the approximated value $v$ and $\oc_i$, $\uc_i$~($i=1,2$) such that $\oc_i\leq (1+\varepsilon)\uc_i$, running in $O(K\varepsilon^{-2}\log K)$ space and $O(n\varepsilon^{-3}\log K)$ time.
We may also assume that $c_1+c_2\leq \cmax$ where $\delta = 0.01$.

\paragraph*{Finding a good set $Y$.}

By Corollary~\ref{cor:boundf1}, we may assume that $f(\opt-o_1-o_2)$ is relatively large.
More specifically, $f(\opt-o_1-o_2)\geq f(\opt) - f(o_1) - f(o_2) \geq 0.33 v$.
On the other hand, \eqref{eq:c1large_assumption} implies that $\oc_1+\oc_2\geq 0.747$, which means that $c(\opt-o_1-o_2)$ is small.
We consider collecting such a ``dense'' set of small items by introducing 
\begin{align}\label{prob:c1c2Large_c1large_1st}
  \text{maximize\ \  }f(S) \quad \text{subject to \ } c(S)\leq 1.98 \ucs K, \quad S\subseteq E,
\end{align}
where we define $\ucs =1 - \oc_1 -\oc_2$.
We apply \Call{Simple}{} to \eqref{prob:c1c2Large_c1large_1st} to find a set $Y$ that approximates $\opt - o_1 -o_2$.
By \eqref{eq:c1large_assumption}, we still have space to take $o_1$ after taking $Y$.
We denote $\ocs =1 - \uc_1 -\uc_2$.

\begin{lemma}\label{lem:SmallFirst_c1large}
We can find a subset $Y$ in $O(\varepsilon^{-1})$ passes and $O(K)$ space such that
\begin{align*}
f(Y) & \geq 0.33 \left(1-e^{-\frac{c(Y)}{\ocs K}}\right)v-O(\varepsilon)v, \\
1.98\ucs K \geq  c(Y) & \geq \left(0.98\ocs - 1.98 \varepsilon (\uc_1+\uc_2) \right)K.
\end{align*}
Moreover, if $f(Y+o_1)< 0.5 v$, then $f(\opt-o_1\mid Y)\geq 0.5 v$.
\end{lemma}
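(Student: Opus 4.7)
The plan is to invoke \Call{Simple}{} on the auxiliary instance $\I_1=(f,c,1.98\ucs K, E)$ with target value $v'=0.33v$ and weight bound $W'=\ocs K$, aiming to approximate $X=\opt-o_1-o_2$. First I verify the hypotheses of Corollary~\ref{cor:simpleratio}. By Corollary~\ref{cor:boundf1}\,(1) (since $c_1\geq 0.5$) and (3) we may assume $f(o_1)<0.362v$ and $f(o_2)<0.307v$, so $f(X)\geq v-f(o_1)-f(o_2)\geq 0.331v\geq v'$; also $c(X)\leq K-\uc_1K-\uc_2K=\ocs K=W'$. Since $c_1+c_2\leq \cmax$, a short calculation using $\oc_i\leq (1+\varepsilon)\uc_i$ gives $\ocs\leq 1.98\ucs$, so each item of $X$ has size at most $c(o_3)\leq \ocs K\leq 1.98\ucs K$ and therefore fits in the capacity $K_1:=1.98\ucs K$ of $\I_1$.

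With the hypotheses in place, Corollary~\ref{cor:simpleratio} produces $Y$ in $O(\varepsilon^{-1})$ passes and $O(K)$ space with $c(Y)\leq K_1=1.98\ucs K$ and
\[
f(Y) \geq \bigl(1-e^{-c(Y)/W'}-O(\varepsilon)\bigr)v' = 0.33\left(1-e^{-\frac{c(Y)}{\ocs K}}-O(\varepsilon)\right)v,
\]
which is the claimed $f$-lower bound. For the lower bound on $c(Y)$, I adapt the Lemma~\ref{lem:round_knapsack} argument: if $c(Y)\leq K_1-c(o_3)$, then every $o_i\in X\setminus Y$ would still fit in the residual capacity of $\I_1$ and so must have been discarded by the marginal test, whence by Lemma~\ref{lem:fact_knapsack}\,(2) we get $f(X)\leq f(Y)+\alpha W'=f(Y)+(1-\varepsilon)v'-f(Y_0)$, forcing $f(Y)-f(Y_0)\geq \varepsilon v'$ and contradicting termination. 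Hence $c(Y)>K_1-c(o_3)$. Rather than using the crude $c(o_3)\leq \oc_2 K$, I use the sharper $c(o_3)\leq c(o_3)+\cdots+c(o_\ell)=c(\opt)-c(o_1)-c(o_2)\leq \ocs K$, yielding $c(Y)>(1.98\ucs-\ocs)K$. Substituting $\oc_i\leq (1+\varepsilon)\uc_i$ and rearranging,
\[
1.98\ucs-\ocs = 0.98\ocs - 1.98\bigl((\oc_1-\uc_1)+(\oc_2-\uc_2)\bigr) \geq 0.98\ocs-1.98\varepsilon(\uc_1+\uc_2),
\]
matching the stated bound.

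For the ``moreover'' claim, assume $f(Y+o_1)<0.5v$. Submodularity, applied to $Y\subseteq (\opt-o_1)\cup Y$, yields $f(o_1\mid (\opt-o_1)\cup Y)\leq f(o_1\mid Y)$; combined with monotonicity this gives
\[
f(\opt-o_1\mid Y) \geq f(\opt\mid Y)-f(o_1\mid Y) \geq \bigl(f(\opt)-f(Y)\bigr)-\bigl(f(Y+o_1)-f(Y)\bigr) = f(\opt)-f(Y+o_1) > 0.5v.
\]
The main (minor) obstacle is the algebra of the $c(Y)$ lower bound: the crude estimate $c(o_3)\leq \oc_2 K$ is too weak to recover the stated form, and the key trick is to aggregate all items of $\opt-o_1-o_2$ so that $c(o_3)\leq \ocs K$. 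Everything else is a routine instantiation of \Call{Simple}{} with the correct approximation targets.
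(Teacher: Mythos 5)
Your proposal is correct and follows essentially the same route as the paper: apply \Call{Simple}{} (via Corollary~\ref{cor:simpleratio} and the Lemma~\ref{lem:round_knapsack}-style argument) to the auxiliary instance with capacity $1.98\ucs K$, target $0.33v$, and weight bound $\ocs K$, bound $c(o_3)\leq \ocs K$, and finish the ``moreover'' claim by submodularity exactly as the paper does. Your additional checks (e.g., that $\ocs\leq 1.98\ucs$ so items of $\opt-o_1-o_2$ fit the reduced capacity) are sound and only make explicit what the paper leaves implicit.
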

\begin{proof}
The first inequality follows from Corollary~\ref{cor:simpleratio} applied to approximate $\opt-o_1-o_2$ for the instance~\eqref{prob:c1c2Large_c1large_1st}, noting that $f(\opt - o_1 - o_2)\geq 0.33 v$.
Since items in $\opt - o_1 - o_2$ are of size at most $c(o_3)$, it is obvious from Lemma~\ref{lem:round_knapsack} that $(1.98\ucs-c_3) K \leq c(Y) \leq 1.98 \ucs K$.
Since $\ucs\geq \ocs - (1+\varepsilon)(\uc_1+\uc_2)$ and $c_3\leq \ocs$, the lower bound is bounded by 
\[
(1.98\ucs-c_3) K \geq 1.98(\ocs - \varepsilon(\uc_1+\uc_2))K - \ocs K = \left(0.98\ocs - 1.98 \varepsilon (\uc_1+\uc_2) \right)K.
\]
Finally, if $f(Y+o_1)< 0.5 v$, then we have
\[
f(\opt-o_1\mid Y)\geq f(\opt\mid Y) - f(o_1\mid Y) \geq (f(\opt)-f(Y)) - (0.5 v - f(Y)) \geq 0.5 v.
\]
\end{proof}

\paragraph*{Packing the remaining space.}

Define $g(\cdot )=f(\cdot \mid Y)$.
Consider the problem~\eqref{prob:c1c2Large_2nd}, and let $\I'$ be the corresponding instance.
We shall find a feasible set to approximate $\opt - o_1$.
By Lemma~\ref{lem:SmallFirst_c1large}, we may assume that $g(\opt - o_1)\geq v'=v/2$, as otherwise we can find an item $e$ such that $c(Y+e)\leq K$ and $f(Y+e)\geq 0.5 v$ using a single pass.
Let $W' = (1-\uc_1)K$ and $K'=K-c(Y)$.
Then $c(\opt -o_1)\leq W'$ holds.

The algorithm \Call{Simple}{$\I'; 0.5v, W'$} can find a set $\tilde{S}$ such that
\[
g(\tilde{S}) \geq \frac{1}{2}\left(1-e^{-\frac{1 - y - c_2}{1-\uc_1}}  - O(\varepsilon)\right)v,
\]
where $y = c(Y)/K$.

Moreover, noting that $c(Y)\leq 1.98 \ucs \leq (1-\oc_1)\leq 0.5 \leq \uc_1$ since $\uc_1\geq 0.5$ and \eqref{eq:c1large_assumption}, we have $W'\leq K'$. 
Hence we can apply a $(0.46-\varepsilon)$-approximation algorithm in Lemmas~\ref{lem:046small} and \ref{lem:046large} with $g(\opt - o_1) \geq v'= v/2$ and $c(\opt - o_1) \leq W'$.
That is, we can find a set $\tilde{S}'$ such that
\[
g(\tilde{S}')\geq \frac{1}{2}(0.46-O(\varepsilon)) v=(0.23 - O(\varepsilon))v.
\]
Then $Y\cup \tilde{S}$ and $Y\cup \tilde{S}'$ are both feasible set to the original instance.
By Lemma~\ref{lem:SmallFirst_c1large}, we have
\begin{align}
f(Y\cup \tilde{S})& =f(Y)+g(\tilde{S}) \geq 0.33 \left(1-e^{-\frac{y}{\ocs}}\right)v + \frac{1}{2}\left(1-e^{-\frac{1 - y - c_2}{1-\uc_1}}\right)v - O(\varepsilon)v, \label{eq:c1large_c1c2large_1}\\
f(Y\cup \tilde{S}')& =f(Y)+g(\tilde{S}') \geq 0.33 \left(1-e^{-\frac{y}{\ocs}}\right)v + 0.23 v- O(\varepsilon)v.\label{eq:c1large_c1c2large_2}
\end{align}
Since each bound is a concave function with respect to $y$, the worst case is achieved when $y=0.98\ocs - 1.98 \varepsilon (\uc_1+\uc_2)$ or $1.98\ucs$.

Suppose that  $y= 0.98\ocs - 1.98 \varepsilon (\uc_1+\uc_2)$.
Then it holds that
\[
\frac{y}{\ocs} = 0.98 - 1.98 \varepsilon\frac{\uc_1+\uc_2}{1-\uc_1-\uc_2}\geq 0.98 -1.98\delta,
\]
assuming that $\uc_1+\uc_2\leq \cmax$.
Moreover, since $y \leq \ocs = 1 - \uc_1 -\uc_2$,
\[
\frac{1 - y - c_2}{1-\uc_1} \geq \frac{1 - (1 - \uc_1 -\uc_2) -c_2}{1-\uc_1}\geq  \frac{\uc_1}{1-\uc_1} -\varepsilon\frac{\uc_2}{1-\uc_1}\geq 1 - \varepsilon,
\]
where the last inequality follows since $\uc_1\geq 0.5$ and $\uc_2\leq 1-\uc_1$.
Hence, by \eqref{eq:c1large_c1c2large_1}, we obtain
\[
f(Y\cup \tilde{S})\geq   0.33\left(1-e^{-\left(0.98-1.98\delta\right)}\right) + \frac{1}{2}\left(1-e^{-1}\right)- O(\varepsilon)v\geq (0.51- O(\varepsilon)) v
\]
when $\delta = 0.01$.


Suppose that  $y= 1.98 \ucs$.
Then we have
\[
\frac{y}{\ocs} = 1.98 \frac{\ucs}{\ocs}\geq 1.98 \left(\deltagap\right).
\]
Hence \eqref{eq:c1large_c1c2large_2} implies that
\[
f(Y \cup \tilde{S}')\geq   0.33\left(1-e^{-1.98 \left(\deltagap\right)}\right)v + 0.23 v- O(\varepsilon)v \geq (0.51- O(\varepsilon)) v
\]
when $\delta = 0.01$.

Therefore, it follows that the maximum of $f(Y\cup \tilde{S})$ and $f(Y\cup \tilde{S}')$ is at least $(0.51-O(\varepsilon))v$
for any $c(Y)$.
Thus we can find a $(0.5-O(\varepsilon))$-approximate solution assuming \eqref{eq:c1large_assumption}.

In the above, we apply the algorithms in Sections~\ref{sec:046} to $\I'$ to approximate $\opt - o_1$.
To do it, we need to have approximated sizes of $c(o_2)$ and $c(o_3)$, which are the two largest items in $\opt - o_1$. 
Since $\oc_2, \uc_2$ are given in the beginning, it suffices to guess approximated values $\oc_3$ and $\uc_3$ of $c(o_3)$ using $O(\varepsilon^{-1}\log K)$ space.
Therefore, the space required is $O(K\varepsilon^{-2}\log K)$ and the running time is $O(n\varepsilon^{-3}\log K)$.

\medskip

In summary, when $c_1\geq 0.5$, we have the following, combining the above discussion with Lemmas~\ref{lem:Assumption1} and \ref{lem:046large}.

\begin{theorem}\label{thm:05large}
For any instance $\I = (f,c, K, E)$ for the problem \eqref{eq:problem}, 
if $c_1 \geq 0.5$, then we can find a $(0.5-\varepsilon)$-approximate solution in $O(\varepsilon^{-1})$ passes and $O(K\varepsilon^{-5}\log^2K)$ space.
The total running time is $O(n \varepsilon^{-6}\log^2 K)$.
\end{theorem}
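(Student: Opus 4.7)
The plan is to derive Theorem~\ref{thm:05large} by a case analysis that combines the three main lemmas established in this and the previous subsection: Lemma~\ref{lem:Assumption1}, Lemma~\ref{lem:046large}, and the (unlabelled) lemma immediately preceding the theorem in this subsection. First I would set up the guessing framework: using a single-pass constant-factor streaming algorithm such as that of Yu~\emph{et~al.}, guess $v$ with $v \leq f(\opt) \leq (1+\varepsilon)v$ at an overhead of $O(\varepsilon^{-1})$ in passes, space, and time. Since the hypothesis gives $c_1 \in [0.5, 1]$, geometrically guess $\uc_1, \oc_1$ with $\uc_1 \leq c_1 \leq \oc_1 \leq (1+\varepsilon)\uc_1$ using $O(\varepsilon^{-1})$ candidate values; similarly, geometrically guess $\uc_2, \oc_2$ using $O(\varepsilon^{-1}\log K)$ candidates since $c_2 \in [1/K, 1]$. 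The algorithm runs below for every guess tuple in parallel and returns the best feasible output found.

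Next I would split into three cases. Case~(i): if $c_1+c_2$ lies in the range covered by Lemma~\ref{lem:Assumption1}, invoke it directly to get $(0.5-O(\varepsilon))$-approximation at $O(\varepsilon^{-1}K)$ space and $O(\varepsilon^{-1})$ passes, well within the theorem's budget. Case~(ii): if $c_1+c_2 \leq \cmax$ and $\mu := (1-\oc_1-\oc_2)/(1-\oc_1) \geq 0.505$, then the $\mu \geq 0.505$ branch of the proof of Lemma~\ref{lem:046large} (using inequalities~\eqref{eq:Large2} and \eqref{eq:Large3}) already yields $(0.5-O(\varepsilon))$-approximation via \Call{LargeFirst}{} together with \Call{Simple}{}. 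Case~(iii): if $c_1+c_2 \leq \cmax$ and $\mu < 0.505$, condition~\eqref{eq:c1large_assumption} holds, so I invoke the preceding ``pack small items first'' lemma to obtain $(0.5-O(\varepsilon))$-approximation. Because the three cases tile the entire parameter space (up to the $O(\varepsilon)$ seam between $\cmax$ and the threshold in Lemma~\ref{lem:Assumption1}, which is absorbed into the $(0.5-O(\varepsilon))$ error by choosing $\delta=0.01$), at least one branch succeeds for the correct guess tuple.

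Finally I would account for resources. The dominant cost is case~(iii): $O(\varepsilon^{-1})$ passes, $O(K\varepsilon^{-5}\log^2 K)$ space, and $O(n\varepsilon^{-6}\log^2 K)$ time, which already include the guessing overheads above. Cases~(i) and (ii) are strictly cheaper in every metric and are thus subsumed. The main obstacle I anticipate is propagating the $(1+\varepsilon)$-factor inaccuracy of the guesses $\uc_i, \oc_i$ through the threshold tests on $\mu$ and $c_1+c_2$: the values computed from the guesses may drift from their true counterparts by $O(\varepsilon)$, so the partition into cases is only ``$O(\varepsilon)$-accurate''. I would resolve this by the same $\lambda_1, \lambda_2$-type manipulations used in the proof of Lemma~\ref{lem:LargeFirst} and in the preceding ``pack-small-first'' lemma, showing that the drift is absorbed into the $O(\varepsilon)$ additive slack already present in each branch's approximation ratio, so that taking the best output across all guess tuples yields a $(0.5-\varepsilon)$-approximation within the stated resource bounds.
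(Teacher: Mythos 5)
Your proposal is correct and is essentially the paper's own proof: Theorem~\ref{thm:05large} is obtained there exactly by combining Lemma~\ref{lem:Assumption1} (the case where $c_1+c_2$ is close to $1$), the $\mu\geq 0.505$ branch of the proof of Lemma~\ref{lem:046large}, and the unlabelled pack-small-items-first lemma of Section~\ref{sec:Large_c1c2Large} for the case \eqref{eq:c1large_assumption}, with the same guessing overheads for $v,\uc_1,\oc_1,\uc_2,\oc_2$ and with your case~(iii) dominating the space and time bounds. The only cosmetic slip is that the $\mu\geq 0.505$ branch of Lemma~\ref{lem:046large} relies on \eqref{eq:Large1} and \eqref{eq:Large2}, not on \eqref{eq:Large2} and \eqref{eq:Large3}.
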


\subsubsection{When $c_1\leq 0.5$}\label{sec:Small_c1c2Large}

In this section, we assume that $c_1\leq 0.5$.
Note that we may assume that $c_1\geq 0.3$ by Corollary~\ref{cor:c1_lb}.
Furthermore, we suppose that
\begin{equation}\label{eq:c1small_assumption}
2.4 (1-\oc_1 -\oc_2) \leq 1- \oc_1.
\end{equation}
(Section~\ref{sec:c1c2Small} handles the case when this inequality does not hold.) This implies that $\oc_1+\oc_2\geq 14/19 \geq 0.735$, where the minimum is when $\oc_1=\oc_2$.
Thus $\oc_1\geq 7/19 \geq 0.36$.
The argument is similar to the previous subsection.
That is, we first try to find a dense set of small items, and then apply algorithms in Sections~\ref{sec:simple}--\ref{sec:046}.

\begin{lemma}\label{lem:Small_c1c2Large}
Suppose that $0.3 \leq c_1\leq 0.5$.
Then, if \eqref{eq:c1small_assumption} holds, we can find a $(0.5-\varepsilon)$-approximate solution in $O(\varepsilon^{-1})$ passes and $O(K\varepsilon^{-6}\log K)$ space.
The total running time is $O(n \varepsilon^{-7}\log K)$.
\end{lemma}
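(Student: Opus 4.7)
The plan is to mirror the strategy of Theorem~\ref{thm:05large}, adapting the constants to the regime $0.3 \leq c_1 \leq 0.5$. As in Section~\ref{sec:simple}, it suffices to design an $O(\varepsilon^{-1})$-pass algorithm assuming the approximate value $v$ (with $v \leq f(\opt) \leq (1+\varepsilon)v$) and $\uc_i, \oc_i$ with $\uc_i \leq c_i \leq \oc_i \leq (1+\varepsilon)\uc_i$ for $i = 1, 2$ are given. Guessing these contributes factors $O(\varepsilon^{-1})$ for $v$, $O(\varepsilon^{-1})$ for $c_1 \in [0.3, 0.5]$ (using Corollary~\ref{cor:c1_lb}), and---crucially---$O(\varepsilon^{-1})$ rather than $O(\varepsilon^{-1}\log K)$ for $c_2$, since hypothesis~\eqref{eq:c1small_assumption} forces $\oc_1 + \oc_2 \geq 14/19$ and therefore pins $c_2$ to an interval bounded away from $0$. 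By Corollary~\ref{cor:boundf1}(2)--(3), I may additionally assume $f(o_1), f(o_2) \leq 0.307 v$, whence $f(\opt - o_1 - o_2) \geq (1 - 2 \cdot 0.307) v = 0.386 v$.

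For Phase~1, I invoke \textup{\Call{Simple}{$(f, c, K_1, E); 0.386 v, \ocs K$}} with $K_1 = 2.4 \ucs K$, where $\ucs := 1 - \oc_1 - \oc_2$ and $\ocs := 1 - \uc_1 - \uc_2$, in order to approximate the ``dense'' subset $\opt - o_1 - o_2$. Hypothesis~\eqref{eq:c1small_assumption} guarantees $K_1 \leq (1 - \oc_1) K \leq K - c(o_1)$, so there is room for $o_1$ afterwards. Corollary~\ref{cor:simpleratio} then produces a set $Y$ with
\[
f(Y) \geq 0.386\left(1 - e^{-c(Y)/(\ocs K)}\right) v - O(\varepsilon) v, \qquad K_1 - c(o_3) \leq c(Y) \leq K_1,
\]
where $c(o_3) \leq \min(\oc_2, \ocs) K$ because $o_3$ lies inside $\opt - o_1 - o_2$ whose total size is at most $\ocs K$. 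If $f(Y + o_1) \geq 0.5 v$ the algorithm returns $Y + o_1$; otherwise, as in Lemma~\ref{lem:SmallFirst_c1large}, $f(\opt - o_1 \mid Y) \geq 0.5 v$.

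For Phase~2, set $g(\cdot) := f(\cdot \mid Y)$ and $\I' := (g, c, K - c(Y), E)$, and compute two candidate extensions of $Y$ approximating $\opt - o_1$: (i) $\tilde S := \textup{\Call{Simple}{$\I'; 0.5 v, (1 - \uc_1) K$}}$, for which Corollary~\ref{cor:simpleratio} gives
\[
g(\tilde S) \geq 0.5 \left(1 - e^{-(1 - y - c_2)/(1 - \uc_1)}\right) v - O(\varepsilon) v
\]
with $y := c(Y)/K$; and (ii) $\tilde S'$ produced by the $(0.46-\varepsilon)$-approximation of Theorem~\ref{thm:046}---more precisely Lemma~\ref{lem:046small} or Lemma~\ref{lem:046large} depending on whether $c_2/(K - c(Y))$ is below or above $0.5$---applied to $\I'$ with target value $0.5 v$, yielding $g(\tilde S') \geq (0.23 - O(\varepsilon)) v$. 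The algorithm returns the best of $Y + o_1$, $Y \cup \tilde S$, and $Y \cup \tilde S'$. Since each of $f(Y \cup \tilde S)$ and $f(Y \cup \tilde S')$ is concave in $y$, it suffices to verify the $0.5 v$ threshold at the two endpoints of the feasible range $[2.4\ucs - c_3,\, 2.4\ucs]$. At $y = 2.4 \ucs$ one has $y/\ocs \geq 2.4(1 - O(\varepsilon))$, giving $f(Y \cup \tilde S') \geq 0.386(1 - e^{-2.4}) v + 0.23 v \geq (0.58 - O(\varepsilon))v$; at $y = 2.4\ucs - c_3$, the bound $c_3 \leq \min(\oc_2, \ocs)$ together with $\ucs/\ocs \geq 1 - O(\varepsilon)$ drives $y/\ocs \geq 1.4 - O(\varepsilon)$, so $f(Y \cup \tilde S') \geq 0.386(1 - e^{-1.4}) v + 0.23 v \geq (0.52 - O(\varepsilon)) v$.

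The main obstacle will be the algebra at the lower endpoint. Unlike the $c_1 \geq 0.5$ case---where $\uc_1 \geq 0.5$ forces $(1 - y - c_2)/(1 - \uc_1) \geq 1$ and lets the $\tilde S$-branch close the gap---here $\uc_1$ can be as small as $7/19$, so the $\tilde S$-branch no longer suffices and the $\tilde S'$-branch must carry the argument. The crucial step is the bound $c(o_3) \leq \ocs K$, obtained from the fact that $o_3$ sits inside $\opt - o_1 - o_2$ of total size $\leq \ocs K$; combined with $2.4 \ucs \leq 1 - \oc_1$, this is precisely what pushes $y_{\min}/\ocs$ above the threshold $\ln\bigl(0.386/(0.386 - 0.27)\bigr) \approx 1.2$ required for $f(Y) + 0.23 v \geq 0.5 v$. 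Once the $(0.5 - O(\varepsilon))$ ratio is established, the claimed $O(K \varepsilon^{-6} \log K)$ space and $O(n \varepsilon^{-7} \log K)$ time follow by composing the inner $O(K\varepsilon^{-1})$-space, $O(n\varepsilon^{-2})$-time \textup{\Call{Simple}{}} and $0.46$-approximation routines with the internal $O(\varepsilon^{-1}\log K)$-space guess of $c_3$ and $O(\varepsilon^{-1})$-space guess of $f(o_2)$, and with the outer $O(\varepsilon^{-1})$-space guesses of $v$, $c_1$, and $c_2$.
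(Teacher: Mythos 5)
There is a genuine gap in Phase~2. Your branch (ii) applies the $(0.46-\varepsilon)$-approximation of Lemmas~\ref{lem:046small}/\ref{lem:046large} to the residual instance $\I'=(g,c,K-c(Y),E)$ with target $\opt-o_1$, claiming $g(\tilde S')\geq (0.23-O(\varepsilon))v$ for all admissible $c(Y)$. But those lemmas (as the paper remarks after Lemma~\ref{lem:046large}) require the target to fit the residual knapsack, i.e.\ $W'=(1-\uc_1)K\leq K'=K-c(Y)$, equivalently $c(Y)\leq \uc_1 K$. In the present regime $c_1\leq 0.5$, hypothesis~\eqref{eq:c1small_assumption} only caps $c(Y)$ by $2.4\ucs K\leq (1-\oc_1)K$, and $(1-\oc_1)K\geq 0.5K\geq c_1K$; e.g.\ with $\oc_1\approx 0.4$, $\oc_2\approx 0.35$ and $c_3$ tiny, both endpoints $y=2.4\ucs-c_3$ and $y=2.4\ucs$ are near $0.6>c_1$, so $c(\opt-o_1)$ (up to $(1-c_1)K\approx 0.6K$) need not fit into $K-c(Y)\approx 0.4K$ and no feasible set of $\I'$ is guaranteed to approximate $g(\opt-o_1)$. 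This is exactly the regime you say ``must carry the argument,'' so the $0.23v$ term is unsupported where it is needed. Patching it with Lemma~\ref{lem:largeW} does not suffice either: with $\eta=W'/K'\approx 1.5$ one only gets about $0.218\cdot 0.5v\approx 0.11v$, and $0.386(1-e^{-2.4})v+0.11v<0.5v$.

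The paper closes this by splitting on $y$ versus $c_1$. For $y\leq c_1$ it does essentially what you propose (and even gets $0.49$ instead of $0.46$ via Lemma~\ref{lem:046large}, after checking $c_2K\geq 0.5W'$ from~\eqref{eq:c1small_assumption}). For $y>c_1$ it abandons $\opt-o_1$ and instead approximates $\opt-o_1-o_2$, which does fit since $c(\opt-o_1-o_2)\leq \tfrac{5}{19}K\leq \oc_1K\leq K-c(Y)$: it guesses $\tau$ with $g(o_2)\approx \tau v'$ and takes the best of three candidates --- \Call{Simple}{} on $\opt-o_1-o_2$, the best feasible singleton (value $\geq \tfrac{1}{2}\tau v$), and the $0.46$-algorithm on $\opt-o_1-o_2$ (value $\geq \tfrac12(1-\tau)0.46v$) --- and a $\tau$-threshold case analysis at the endpoints $y=c_1$ and $y=2.4\ucs$ yields $(0.5-O(\varepsilon))v$. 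Your proposal is missing this second case entirely (the guess of $g(o_2)$, the singleton branch, and the extra guess of $c_4$ needed to run the Section~\ref{sec:046} machinery on $\opt-o_1-o_2$), and without it the approximation claim does not go through; Phase~1 and the guessing/complexity bookkeeping are otherwise in line with the paper.
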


\noindent
The rest of this subsection is devoted to the proof of the above lemma.
Since the range of $c_1$ is $[0.3, 0.5]$, we can guess $\uc_1$ and $\oc_1$, where $\uc_1\geq 0.3$ and $\oc_1\leq 0.5$,
 using $O(\varepsilon^{-1})$ space.
We also guess $\uc_2$ and $\oc_2$ using $O(\varepsilon^{-1})$ space, since the range of $c_2$ is $[0.235, 0.5]$ by \eqref{eq:c1small_assumption}.
Recall that they satisfy $\uc_i\leq c_i\leq \oc_i\leq (1+\varepsilon)\uc_i$ for $i=1,2$.
Therefore, it suffices to design an $O(\varepsilon^{-1})$-pass algorithm provided the approximated value $v$ and $\oc_i$, $\uc_i$~($i=1,2$) such that $\oc_i\leq (1+\varepsilon)\uc_i$, running in $O(K\varepsilon^{-3}\log K)$ space and $O(n\varepsilon^{-4}\log K)$ time.
We may also assume that $c_1+c_2\leq \cmax$ where $\delta = 0.01$.

\paragraph*{Finding a good set $Y$.}

By Corollary~\ref{cor:boundf1}, we may assume that $f(\opt-o_1-o_2)$ is relatively large, while $c(\opt-o_1-o_2)$ is small.
More specifically, $f(\opt-o_1-o_2)\geq f(\opt) - f(o_1) - f(o_2) \geq 0.386 v$, but $c(\opt-o_1-o_2)\leq 5/19 K\leq 0.265 K$.
We consider collecting such a ``dense'' set of small items by introducing 
\begin{align}\label{prob:c1c2Large_c1small_1st}
  \text{maximize\ \  }f(S) \quad \text{subject to \ } c(S)\leq 2.4\ucs K, \quad S\subseteq E,
\end{align}
where we recall $\ucs = 1 - \oc_1 -\oc_2$.
By \eqref{eq:c1small_assumption}, we still have space to take $o_1$ after applying \Call{Simple}{} to \eqref{prob:c1c2Large_c1small_1st}.
We denote $\ocs =1 - \uc_1 -\uc_2$.

Similarly to Lemma \ref{lem:SmallFirst_c1large}, we have the following lemma.

\begin{lemma}\label{lem:SmallFirst_c1small}
We can find a subset $Y$ in $O(\varepsilon^{-1}n)$ time and $O(K)$ space such that
\begin{align*}
f(Y) & \geq 0.386 \left(1-e^{-\frac{c(Y)}{\ocs K}}\right)v -O(\varepsilon)v,\\
2.4\ucs K & \geq c(Y)\geq (2.4\ucs - c_3)K.
\end{align*}
Moreover, if $f(Y+o_1)< 0.5 v$, then $f(\opt-o_1\mid Y)\geq 0.5 v$.
\end{lemma}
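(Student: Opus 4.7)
The proof would closely parallel that of Lemma~\ref{lem:SmallFirst_c1large}, with constants adjusted to the $c_1 \leq 0.5$ regime: the coefficient $1.98$ in front of $\ucs K$ is replaced by $2.4$ (matching condition \eqref{eq:c1small_assumption}), and the value lower bound $0.33 v$ is replaced by the stronger $0.386 v$ afforded by Corollary~\ref{cor:boundf1}. Because we are inside the proof of Lemma~\ref{lem:Small_c1c2Large} (which reduces to the \Call{LargeFirst}{}/\Call{Simple}{} guarantee otherwise), I would first invoke Corollary~\ref{cor:boundf1}(2) and (3) to assume both $f(o_1) < 0.307 v$ and $f(o_2) < 0.307 v$. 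This immediately yields
\[
f(\opt - o_1 - o_2) \;\geq\; f(\opt) - f(o_1) - f(o_2) \;\geq\; (1 - 2\cdot 0.307)v \;=\; 0.386\, v,
\]
while trivially $c(\opt - o_1 - o_2) \leq (1 - c_1 - c_2)K \leq (1 - \uc_1 - \uc_2)K = \ocs K$.

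Next I would run \textup{\Call{Simple}{}} on the sub-instance \eqref{prob:c1c2Large_c1small_1st} (of capacity $K' = 2.4\ucs K$) with estimates $v' = 0.386 v$ and $W = \ocs K$, targeting the subset $X = \opt - o_1 - o_2$. The two bullets above verify the hypotheses of Corollary~\ref{cor:simpleratio} (with $f(X) \leq (1+\varepsilon)v \leq 3 v'$ giving the $O(1) v'$ upper bound), which immediately delivers the displayed bound $f(Y) \geq 0.386\bigl(1 - e^{-c(Y)/(\ocs K)}\bigr)v - O(\varepsilon)v$. The size upper bound $c(Y) \leq 2.4 \ucs K$ is just the capacity of the sub-instance, and the lower bound $c(Y) \geq (2.4\ucs - c_3)K$ comes from the size guarantee in Corollary~\ref{cor:simpleratio} together with the fact that the largest item of $X$ has weight $c(o_3) = c_3 K$, so the algorithm cannot halt with $c(Y) \leq K' - c(o_3)$.

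Finally, the ``Moreover'' clause is a short submodularity/monotonicity computation: if $f(Y + o_1) < 0.5 v$, then $f(o_1 \mid Y) = f(Y+o_1) - f(Y) < 0.5 v - f(Y)$, while monotonicity gives $f(\opt \mid Y) \geq f(\opt) - f(Y) \geq v - f(Y)$, and submodularity yields $f(\opt - o_1 \mid Y) \geq f(\opt \mid Y) - f(o_1 \mid Y) > (v - f(Y)) - (0.5 v - f(Y)) = 0.5 v$, exactly as in the mirror lemma.

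The only non-cosmetic point to check — and the nearest thing to an obstacle — is that each item of $X = \opt - o_1 - o_2$ individually respects the reduced capacity $2.4 \ucs K$ of the sub-instance, so that Corollary~\ref{cor:simpleratio} applies verbatim. This is where \eqref{eq:c1small_assumption} pulls its weight: it ensures $2.4 \ucs \geq \oc_2$ up to an $O(\varepsilon)$ slack, since $\oc_2 \leq 1 - \oc_1$ (absorbing the negligible $O(\varepsilon)$ gap using $\oc_i \leq (1+\varepsilon)\uc_i$) and $1 - \oc_1 \geq 2.4 \ucs$ is \eqref{eq:c1small_assumption} itself; any $O(\varepsilon)$ discrepancy can be folded into the $O(\varepsilon)v$ error term, exactly as was done implicitly in Lemma~\ref{lem:SmallFirst_c1large}. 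The time and space complexity bounds $O(\varepsilon^{-1} n)$ and $O(K)$ are then inherited directly from Corollary~\ref{cor:simpleratio}.
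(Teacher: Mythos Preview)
Your proposal is correct and matches the paper's approach exactly: the paper itself gives no separate proof of this lemma, merely stating ``Similarly to Lemma~\ref{lem:SmallFirst_c1large}'', and your first three paragraphs reproduce that argument with the right constant changes ($0.33 \to 0.386$ via Corollary~\ref{cor:boundf1}(2)--(3), and $1.98 \to 2.4$).

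One small slip in your final paragraph: the argument you give for ``$2.4\ucs \geq \oc_2$'' does not work --- you show both quantities are at most $1-\oc_1$, which says nothing about their relative order, and in fact $2.4\ucs \geq \oc_2$ is false in this regime (e.g.\ $\oc_1=\oc_2=0.45$ gives $2.4\ucs=0.24<0.45$). What you actually need is $c_3 \leq 2.4\ucs$, and this follows directly from $c_3 \leq 1-c_1-c_2 \leq \ocs$ together with $\ocs \leq \ucs + \varepsilon(\uc_1+\uc_2) \leq 2.4\ucs$ once you use $\uc_1+\uc_2 \leq 1-\varepsilon/\delta$ with $\delta=0.01$. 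The paper does not raise this point at all, so this is a harmless over-verification on your part rather than a gap.
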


\paragraph*{Packing the remaining space.}

Let $Y$ be a set found by Lemma~\ref{lem:SmallFirst_c1small}.
Define $g(\cdot )=f(\cdot \mid Y)$.
Consider the problem~\eqref{prob:c1c2Large_2nd}.
By Lemma~\ref{lem:SmallFirst_c1small}, we may assume that $g(\opt - o_1)\geq v/2$ by checking whether adding an item $e$ to $Y$ gives us a $0.5$-approximation using a single pass.
We set $W'=(1-\uc_1)K\geq c(\opt - o_1)$ and $K'=K-c(Y)$.
There are two cases depending on the sizes of $W'$ and $K'$.
Note that $K'\geq W'$ if and only if $y \leq c_1$, where we denote $y = c(Y)/K$.

\paragraph*{(a) $y \leq c_1$.}
In this case, $K'\geq W'$ holds.
Hence we can apply our algorithm in Section~\ref{sec:046} with $g(\opt - o_1) \geq v'= 0.5v$ and $c(\opt - o_1) \leq W'$.
Our algorithm in fact admits a $(0.49-\varepsilon)$-approximation by Lemma~\ref{lem:046large} since the biggest size in $\opt - o_1$ is $c_2K$ and, by \eqref{eq:c1small_assumption}, 
\[
c_2K  \geq (1-\varepsilon) \oc_2K\geq \frac{1.4}{2.4}(1-\oc_1)K - \oc_2 \varepsilon K \geq 0.5 W',
\]
when $\varepsilon$ is small, e.g., $\varepsilon < 1/12$.
Let $S$ be the obtained set, that is, it satisfies that $c(S)\leq K - c(Y)$ and $g(Y)\geq (0.49-O(\varepsilon))v'$.
Then $Y\cup S$ is a feasible set to the original instance.

By Lemma~\ref{lem:SmallFirst_c1small}, the set $Y\cup S$ satisfies
\begin{equation}\label{eq:ratio_i}
f(Y\cup S) = f(Y)+g(S)\geq 0.386 \left(1-e^{-\frac{y}{\ocs}}\right)v + 0.5\cdot 0.49v - O(\varepsilon)v.
\end{equation}
Since $y\geq 2.4\ucs - c_3\geq 2.4\ucs - \ocs$ by Lemma~\ref{lem:SmallFirst_c1small}, the exponent in \eqref{eq:ratio_i} is
\[
\frac{y}{\ocs} \geq 2.4\frac{\ucs}{\ocs} - 1\geq 2.4(\deltagap)-1\geq 1.4 - 2.4\delta,
\]
when $\uc_1 +\uc_2\leq \cmax$.
Hence the RHS of \eqref{eq:ratio_i} is lower-bounded by
\[
0.386 \left(1-e^{-1.4 + 2.4\delta}\right)v + 0.5\cdot 0.49v - O(\varepsilon)v \geq (0.53-O(\varepsilon))v.
\]

To apply the algorithms in Sections~\ref{sec:046} to approximate $\opt - o_1$, we need to have approximated sizes of $c(o_2)$ and $c(o_3)$. 
Since we need to guess $\oc_3, \uc_3$ using $O(\varepsilon^{-1}\log K)$ additional space, the space required is $O(K\varepsilon^{-2}\log K)$ and the running time is $O(n\varepsilon^{-3}\log K)$.

\paragraph*{(b) $y > c_1$.}

In this case, $K' < W'$ holds.
We consider the problem~\eqref{prob:c1c2Large_2nd} to approximate $\opt-o_1-o_2$.

Suppose that $\tau v' \geq  g(o_2)\geq \tau v'/(1+\varepsilon)$.
Since $g(\opt - o_1)\geq v'$, it holds that $g(\opt - o_1 -o_2)\geq g(\opt - o_1) - g(o_2)\geq (1-\tau)v'-\varepsilon v'$.
Since $v'=v/2$, it follows from Corollary~\ref{cor:simpleratio} that we can find a set $\tilde{S}_1$ such that $c(\tilde{S}_1)\leq K - c(Y)$
 and 
\begin{equation}\label{eq:c1small_ii_1}
g(\tilde{S}_1) \geq \frac{1}{2}(1 - \tau)\left(1-e^{-\frac{1 - y - c_3}{\ocs}}\right)v  - O(\varepsilon)v.
\end{equation}
Moreover, if we take a singleton $e$ with maximum return $g(e)$ such that $c(e)\leq K - c(Y)$, then letting $\tilde{S}_2=\{e\}$, we have $c(\tilde{S}_2)\leq K - c(Y)$
 and 
\begin{equation}\label{eq:c1small_ii_2}
g(\tilde{S}_2) \geq g(o_2)\geq \frac{1}{2}\tau v - O(\varepsilon)v.
\end{equation}

Note that $c(\opt-o_1-o_2)=(1-\uc_1-\uc_2)K\leq 5/19 K$ and $K' = K-c(Y)\geq \oc_1 K \geq 7/19 K$.
Hence Lemmas~\ref{lem:046small} and \ref{lem:046large} are applicable to approximate $\opt-o_1-o_2$, and we can find a set $\tilde{S}_3$ such that $c(\tilde{S}_3)\leq K - c(Y)$ and 
\begin{equation}\label{eq:c1small_ii_3}
g(\tilde{S}_3) \geq \frac{1}{2}(1 - \tau) 0.46 v - O(\varepsilon)v.
\end{equation}
Then the lower bound of the best solution is 
\[
\max\{f(Y\cup \tilde{S}_\ell)\mid \ell=1,2,3\}\geq 
0.386 \left(1-e^{-\frac{y}{\ocs}}\right)v + \max\left\{ g(\tilde{S}_\ell) \mid \ell =1,2,3\right\} - O(\varepsilon)v.
\]
Since every bound is a concave function with respect to $y$, the worst case is achieved when $y=c_1$ or $2.4\ucs$.
Recall that $\oc_1 \geq 7/19$ and $\oc_1 +\oc_2\geq 14/19$.

Suppose that $y=c_1$.
If $\tau \geq 0.42$, then \eqref{eq:c1small_ii_2} implies that 
\[
f(Y\cup \tilde{S}_2)  \geq 0.386 \left(1-e^{-\frac{c_1}{\ocs}}\right)v+\frac{1}{2}0.42 v - O(\varepsilon)v\geq (0.50-O(\varepsilon)) v,
\]
since 
\[
\frac{c_1}{\ocs} \geq \frac{7/19 - \varepsilon}{5/19  + \varepsilon}\geq 1.4 - O(\varepsilon).
\]
If $\tau \leq 0.42$, then \eqref{eq:c1small_ii_1} implies 
\begin{equation}\label{eq:c1small_ii_4}
f(Y\cup \tilde{S}_1) \geq 0.386 \left(1-e^{-\frac{c_1}{\ocs}}\right)v+ \frac{1}{2}(1 - 0.42)\left(1-e^{-\frac{1 - \oc_1 - \oc_3}{\ocs}}  - O(\varepsilon)\right)v.
\end{equation}
Since $\oc_3\leq \ocs \leq 5/19+\varepsilon$, we have
\[
\frac{c_1}{\ocs} \geq \frac{19}{5}\oc_1 - O(\varepsilon), \text{\ and\ }
\frac{1 - \oc_1 - \oc_3}{\ocs} \geq \frac{1 - \oc_1}{\ocs} - 1\geq \frac{19}{5}\left(1- \oc_1\right)-1-O(\varepsilon).
\]
Hence \eqref{eq:c1small_ii_4} implies that
\[
f(Y\cup \tilde{S}_1) \geq 0.386 \left(1-e^{-\frac{19}{5}\oc_1}\right)v+ 0.29\left(1-e^{-\frac{19}{5}\left(1- \oc_1\right)+1}\right)v   - O(\varepsilon)v \geq (0.50-O(\varepsilon)) v
\]
as $0.5 \geq \oc_1\geq 7/19$.

Suppose that $y=2.4\ucs$.
Then we have $\frac{y}{\ocs} \geq 2.4(1-\delta)$, since $\uc_1 + \uc_2 \leq \cmax$.
If $\tau \geq 0.314$, then \eqref{eq:c1small_ii_2} implies that
\[
f(Y\cup \tilde{S}_2)  \geq 0.386 \left(1-e^{-2.4\left(\deltagap\right)}\right)v+\frac{1}{2}0.314 v -O(\varepsilon) v\geq (0.50-O(\varepsilon)) v.
\]
If $\tau \leq 0.314$, then \eqref{eq:c1small_ii_3} implies that
\[
f(Y\cup \tilde{S}_3) \geq 0.386 \left(1-e^{-2.4\left(\deltagap\right)}\right)v+ \frac{1}{2}(1-0.314)0.46  -O(\varepsilon) v \geq (0.50-O(\varepsilon)) v.
\]

Therefore, it holds that
\[
\max\{f(Y\cup \tilde{S}_\ell)\mid \ell=1,2,3\}\geq (0.50-O(\varepsilon)) v.
\]
Thus we can find a $(0.5-O(\varepsilon))$-approximate solution.

Note that we apply the algorithms in Sections~\ref{sec:046} to approximate $\opt - o_1 - o_2$ in the above, and hence we need to estimate approximations of $c(o_3)$ and $c(o_4)$, which are the two largest items in $\opt - o_1 - o_2$. 
This requires $O(\varepsilon^{-2}\log K)$ space in a similar way to the proof of Theorem~\ref{thm:046}.
Therefore, the space required is $O(K\varepsilon^{-3}\log K)$ and the running time is $O(n\varepsilon^{-4}\log K)$.

\subsection{Packing Small Items Later}\label{sec:c1c2Small}

In this section, we consider the remaining case.
By Corollary~\ref{cor:c1_lb} and Theorem~\ref{thm:05large}, it suffices to consider the case when $0.3\leq c_1\leq 0.5$.
Moreover, we assume that $2.4(1-\oc_1 -\oc_2) > 1- \oc_1$, as otherwise Lemma~\ref{lem:Small_c1c2Large} implies a $(0.5-\varepsilon)$-approximation.
That is, $\oc_2< \frac{1.4}{2.4}(1-\oc_1)$.
Hence it suffices to consider when $c_2\leq 7/19\leq 0.37$.

\begin{lemma}\label{lem:LastCase}
Suppose that $0.3\leq c_1\leq 0.5$.
Then, if \eqref{eq:c1small_assumption} does not hold, then we can find a $(0.5-\varepsilon)$-approximate solution in $O(\varepsilon^{-1})$ passes and $O(K \varepsilon^{-7}\log^2 K)$ space.
The total running time is $O(n \varepsilon^{-8}\log^2 K)$.
\end{lemma}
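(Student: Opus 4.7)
The plan is to implement the two-phase ``pack small items later'' strategy sketched in Section~\ref{sec:0.5overview}, focusing on the residual regime $0.3 \leq c_1 \leq 0.5$ in which \eqref{eq:c1small_assumption} fails (so $c_2 \leq 7/19$). By Corollaries~\ref{cor:c1_lb} and \ref{cor:boundf1} we may additionally assume $f(o_1) \leq 0.307 v$ and $f(o_2) \leq 0.307 v$. As before, I would guess $v$ from a constant-factor approximation (paying $O(\varepsilon^{-1})$), the pairs $\uc_1,\oc_1$ and $\uc_2,\oc_2$ (paying $O(\varepsilon^{-1})$ and $O(\varepsilon^{-1}\log K)$ respectively, since $c_1$ lies in a constant range while $c_2$ may be as small as $1/K$), and an approximate $\tau$ for $f(o_2)/v$ (paying $O(\varepsilon^{-1})$), for a cumulative inner guessing overhead of $O(\varepsilon^{-3}\log K)$ after $v$ is fixed.

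For Phase~1 I would run the modified \textup{\Call{Simple}{$\I;v,K$}} in which each round, on top of the usual threshold pass, performs one extra single-item pass that checks, for every $e\in E$ with $c(Y_0 \cup \{e\}) \leq K$, whether $f(Y_0 \cup \{e\}) \geq 0.5v$; if so, return $Y_0 \cup \{e\}$ immediately. Suppose this trigger never fires and let $Y$ be the final set. Lemma~\ref{lem:fact_knapsack}(3) with $W = K$ gives $f(Y) \geq (1 - e^{-c(Y)/K} - O(\varepsilon))v$, and Lemma~\ref{lem:round_knapsack} gives $c(Y) > K - c(o_1) \geq 0.5K$. If $c(Y) \geq 0.7K$ then $f(Y) \geq (1 - e^{-0.7} - O(\varepsilon))v \geq (0.5 - O(\varepsilon))v$ and we are done; otherwise $K' := K - c(Y) \in (0.3K, 0.5K)$. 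Setting $g(\cdot) := f(\cdot \mid Y)$ and $\opt' := \{o \in \opt : c(o) \leq K'\}$ (which excludes $o_1$), a straightforward combination of monotonicity, submodularity, and the Phase~1 trigger ($g(e) < 0.5v - f(Y)$ for every $e$ with $c(e) \leq K'$) together with $f(o_1), f(o_2) \leq 0.307v$ yields an explicit lower bound $g(\opt') \geq q(f(Y))\,v$.

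For Phase~2 I would solve the residual instance $\I' = (g, c, K', E)$ aiming to approximate $\opt'$. The key quantity is $\eta := c(\opt')/K'$; since $c(\opt') \leq c(\opt) - c(o_1) \leq (1-\uc_1)K$ and $K' > 0.3K$, we have $\eta \leq 7/3 < 2.5$, which keeps us within the range covered by Lemma~\ref{lem:largeW}. When $\eta \leq 1$ we run \textup{\Call{Simple}{}} or \textup{\Call{LargeFirst}{}} on $\I'$ directly (using the algorithms of Sections~\ref{sec:simple}--\ref{sec:046} with $W' = c(\opt') \leq K'$); when $\eta > 1$ we invoke Lemma~\ref{lem:largeW}, whose four regimes on $\eta \in [1,1.4], [1.4,1.5], [1.5,2], [2,2.5]$ deliver ratios $0.315, 0.283, 0.218, 0.178$ of $g(\opt')$. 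We return the best candidate $Y \cup S$ across all guesses.

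The main obstacle is the two-parameter case analysis in $(c(Y)/K, \eta)$: one must verify that $f(Y) + g(S) \geq (0.5 - O(\varepsilon))v$ in every admissible subregion. Since $f(Y)$ is increasing and $g(S)$ (through $q$ and the $\eta$-dependent ratio) is roughly decreasing in $c(Y)$, the tight points lie at the boundary $c(Y)/K \in \{0.5, 0.7\}$ combined with the five breakpoints of $\eta$; at each, one plugs in the numerical ratio from Lemma~\ref{lem:largeW} and verifies the sum. The constant $2.4$ in \eqref{eq:c1small_assumption} is precisely calibrated so that the $\eta \leq 1$ case (handled in Section~\ref{sec:c1c2Large}) meets the $\eta > 1$ case handled here, which is why the two lemmas together cover the whole parameter space. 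Composing complexities, the Phase~2 subroutine (dominated by Lemma~\ref{lem:largeW}) takes $O(K\varepsilon^{-3}\log K)$ space and $O(n\varepsilon^{-4}\log K)$ time within $O(\varepsilon^{-1})$ passes; multiplying by the $O(\varepsilon^{-3}\log K)$ inner guesses and the outer $O(\varepsilon^{-1})$ guess of $v$ yields the claimed $O(K\varepsilon^{-7}\log^2 K)$ space and $O(n\varepsilon^{-8}\log^2 K)$ time while preserving $O(\varepsilon^{-1})$ passes (all guess evaluations run in parallel over the same stream).
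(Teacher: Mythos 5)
Your overall architecture matches the paper's (a \textsc{Simple}-type Phase~1 with a ``does any single item reach $0.5v$?'' trigger, then residual packing via Lemma~\ref{lem:largeW} and the $0.46$-algorithm, with a case analysis on $\eta$), but the core of the argument --- how to lower-bound $g(\opt')$ --- has a genuine gap. Your trigger only gives information about items that fit the \emph{final} residual capacity ($c(e)\leq K'=K-c(Y)$), i.e.\ about items that are \emph{in} $\opt'$; it says nothing about the excluded large items $o_j$ with $c(o_j)>K'$, whose marginals $g(o_j)$ are exactly what must be subtracted from $g(\opt)\geq v-f(Y)$. Falling back on $f(o_1),f(o_2)\leq 0.307v$ is (a) incomplete, since when $c(Y)>(1-\oc_3)K$ the item $o_3$ is also excluded and there is no a priori bound on $f(o_3)$ (Corollary~\ref{cor:boundf1} covers only $o_1,o_2$), and (b) quantitatively too weak: with $g(\opt')\geq v-f(Y)-2\cdot 0.307v=0.386v-f(Y)$ and the ratio $0.315$ from Lemma~\ref{lem:largeW}, at $c(Y)\approx \tfrac{2}{3}K$ one gets roughly $\bigl(1-e^{-2/3}\bigr)v+0.315\bigl(0.386-(1-e^{-2/3})\bigr)v\approx 0.455v<0.5v$. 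The paper needs the much stronger per-item bound $f(o_j\mid Y)\leq \bigl(e^{-c(Y')/K}-0.5\bigr)v$ \emph{for all} $j$ (Lemma~\ref{lem:boundingMarginalo1o2}), which gives $0.54v-f(Y)$ resp.\ $0.567v-f(Y)$ in its Cases~2 and~3.

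The missing mechanism is precisely why the paper's \textsc{ModifiedSimple} terminates as soon as $c(S)>(1-\oc_1)K$ rather than at small marginal gain: at the start of the last round the prefix $Y'$ satisfies $c(Y')\leq(1-\oc_1)K$, so \emph{every} $o_j$ (size $\leq \oc_1K$) still fits and the trigger at $Y'$ yields $f(o_j\mid Y)\leq f(o_j\mid Y')\leq 0.5v-f(Y')$. To make this small one then needs $c(Y')$ large, which the paper gets from Claims~\ref{clm:tprimenottoobig1}--\ref{clm:tprimenottoobig2}: if the last round added a large chunk $T'$, the refined bound $f(Y)\geq\bigl(1-(1-\tfrac{c(T')}{K})e^{-c(Y')/K}-O(\varepsilon)\bigr)v$ of Lemma~\ref{lem:d_size}(1) already yields $0.5v$, so one may assume $c(T')$ is small and hence $c(Y')$ is large. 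None of this appears in your proposal, and without it the numbers do not close. Two secondary issues: your assertion that $\opt'=\{o\in\opt: c(o)\leq K'\}$ ``excludes $o_1$'' can fail (e.g.\ $c(o_1)=0.31K$, $c(Y)=0.55K$), and your bound $\eta\leq 7/3$ rests on it --- if $o_1\in\opt'$ then $\eta$ can exceed $2.5$ and Lemma~\ref{lem:largeW} no longer applies; the paper avoids this by explicitly targeting $\opt-o_1$, $\opt-o_1-o_2$, or $\opt-o_1-o_2-o_3$ according to where $c(Y)$ lands. Also, the paper first invokes Corollary~\ref{cor:Small_bound_c2} to confine $c_2$ to a constant range (so it is guessed with $O(\varepsilon^{-1})$ space) and separately guesses $c_3$ with $O(\varepsilon^{-1}\log K)$ space, which is needed by the subroutines of Sections~\ref{sec:simple}--\ref{sec:046} on the residual instance; your accounting lands on the same totals but omits these steps.
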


We first show that we may assume that $\oc_2$ is bounded from below.

\begin{corollary}\label{cor:Small_bound_c2}
Suppose that $0.3\leq c_1\leq 0.5$.
If $\frac{1-\oc_2}{1-\oc_1}\geq 1.3$, then we can find a set $S$ such that $f(S)\geq (0.5-O(\varepsilon))v$ in $O(K)$ space and $O(\varepsilon^{-1})$ passe.
\end{corollary}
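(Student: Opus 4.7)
The plan is to invoke Corollary~\ref{cor:IgnoreLarge} with $W=K$, exploiting the hypothesis $(1-\oc_2)/(1-\oc_1)\geq 1.3$ to show that its exponent is strictly larger than~$1$. I split on the size of $f(o_1)$: if $f(o_1)\geq 0.307\,v$, then since $c_1\leq 0.5$, Corollary~\ref{cor:boundf1}(2) already delivers a $(0.5-\varepsilon)$-approximation. Otherwise $f(o_1)<0.307\,v$, so I may take $\tau := 0.307$ in Corollary~\ref{cor:IgnoreLarge}, obtaining a feasible set $S$ with
\[
f(S)\ \geq\ (1-\tau)\Bigl(1 - e^{-(1-c_2)/(1-\uc_1)} - O(\varepsilon)\Bigr) v.
\]

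The key step is to lower bound the exponent. Since $c_1\leq 0.5$, we have $\uc_1\leq \oc_1 \leq 0.5+O(\varepsilon)$, so $1-\oc_1$ is bounded below by a constant. Combined with $\oc_1\leq (1+\varepsilon)\uc_1$ this gives
\[
1-\uc_1\ =\ (1-\oc_1) + (\oc_1-\uc_1)\ \leq\ (1-\oc_1)\bigl(1+O(\varepsilon)\bigr),
\]
and, using also $c_2\leq \oc_2$, the exponent is bounded by
\[
\frac{1-c_2}{1-\uc_1}\ \geq\ \frac{1-\oc_2}{1-\oc_1}\,\bigl(1-O(\varepsilon)\bigr)\ \geq\ 1.3\,\bigl(1-O(\varepsilon)\bigr).
\]
Plugging this in yields
\[
f(S)\ \geq\ 0.693\bigl(1-e^{-1.3}\bigr)v - O(\varepsilon)v\ \geq\ 0.504\,v - O(\varepsilon)v\ \geq\ (0.5-O(\varepsilon))v,
\]
since $0.693\cdot(1-e^{-1.3})\approx 0.504$.

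The space bound is immediate in the main branch: the call to Corollary~\ref{cor:IgnoreLarge} uses $O(K)$ space and $O(\varepsilon^{-1})$ passes, and $\tau$ is fixed so no geometric guessing of $\tau$ is required. The main obstacle is really only the numerical fit---the two threshold constants $1.3$ (from the hypothesis) and $0.307$ (the value dictated by Corollary~\ref{cor:boundf1}(2), above which we fall back to that corollary) must be tuned so that $(1-0.307)(1-e^{-1.3})$ exceeds $0.5$, which it does with roughly $0.004$ of slack to absorb the $O(\varepsilon)$ error terms.
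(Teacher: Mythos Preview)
Your proof is correct and follows essentially the same approach as the paper: reduce to the case $f(o_1)<0.307v$ via Corollary~\ref{cor:boundf1}(2), then apply Corollary~\ref{cor:IgnoreLarge} with $W=K$ and $\tau=0.307$, bounding the exponent $(1-c_2)/(1-\uc_1)$ below by $1.3$ times a factor close to~$1$. Your derivation of $\frac{1-\oc_1}{1-\uc_1}\geq 1-O(\varepsilon)$ directly from $\oc_1\leq(1+\varepsilon)\uc_1$ and $\uc_1\leq 0.5$ is in fact slightly cleaner than the paper's, which routes through the standing assumption $\uc_1+\uc_2\leq 1-\varepsilon/\delta$ to get the factor $1-\delta$; both suffice for the numerical conclusion.
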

\begin{proof}
By Corollary~\ref{cor:boundf1}, we may suppose that $f(o_1)<0.307v$.
If $\frac{1-\oc_2}{1-\oc_1}\geq 1.3$, then it holds that
\[
\frac{1 - \oc_2}{1-\uc_1}\geq \frac{1 - \oc_1}{1-\uc_1} \frac{1-\oc_2}{1-\oc_1} \geq 1.3(\deltagap).
\]
Hence Corollary~\ref{cor:simpleratio} with $\tau = 0.307$ implies that we can find a set $S$ such that 
\[
f(S) \geq (1 - 0.307)\left(1-e^{-1.3(\deltagap)}  - O(\varepsilon)\right)v\geq (0.5-O(\varepsilon))v.
\]
\end{proof}

Since the range of $c_1$ is $[0.3, 0.5]$, we can guess $\uc_1, \oc_1$ with $\oc_1\leq (1+\varepsilon)\uc_1$ using $O(\varepsilon^{-1})$ space.
Moreover, the above corollary implies that we may assume that $\oc_2\geq 1 - 1.3 (1-\oc_1)\geq 0.09$ as $\oc_1 \geq 0.3$.
Hence the range of $c_2$ is $[0.09, 0.5]$, which implies that we can guess $\uc_2, \oc_2$ with $\oc_2\leq (1+\varepsilon)\uc_2$ using $O(\varepsilon^{-1})$ space.
We also guess $\uc_3$ and $\oc_3$ using $O(\varepsilon^{-1}\log K)$ space.

To prove Lemma~\ref{lem:LastCase}, we will show that, given such $\oc_i, \uc_i$~($i=1,2,3$) and $v$, there is an algorithm using $O(K\varepsilon^{-3}\log K)$ space and $O(n\varepsilon^{-4}\log K)$ time.




\paragraph*{Finding a good set $Y$.}

The first phase, called \Call{ModifiedSimple}{}~(see Algorithm~\ref{alg:phaseone}), is roughly similar to \Call{Simple}{}.
As before, we assume $v \leq f(\opt)\leq (1+\varepsilon) v$ and $ c(\opt) \leq \kn$~(notice that we here set $W = \kn$). 
The difference is in that, in each round,  we check whether any item in $E$, by itself, is enough to give us a solution with $0.5v$ (Lines 4--5). 
We terminate the repetition when $c(S)> (1- \oc_1) K$.
As will be explained (see Lemma~\ref{lem:boundF}), we can lower-bound $f(\opt - Z\mid Y)$ for some subset $Z\subseteq \opt$, because $c_2$ is small.

\begin{algorithm}[th!]
  \caption{}\label{alg:phaseone}
  \begin{algorithmic}[1]
  \Procedure{ModifiedSimple}{$\I; v$}
  \State{$S := \emptyset$.}
  \Repeat
    \If{$\exists e \in E$ such that $f(S + e) \geq 0.5 v$ and $c(S+e)\leq K$}
        \State{\Return $S+e$.}
    \EndIf 
    \State{$S_0:=S$ and $\alpha := \frac{(1-\varepsilon)v - f(S_0)}{K}$.}
    \For{each $e \in E$}
       \State{\textbf{if } $f(e \mid S) \geq \alpha c(e)$ and $c(S+e) \leq K$ \textbf{then} $S := S+e$.}
    \EndFor  
    \State{$T:=S\setminus S_0$.}
  \Until{$c(S) > (1 - \oc_1)K$}
  \State{\Return $S$.}
  \EndProcedure
   \end{algorithmic}
\end{algorithm}

It is clear that Lemma~\ref{lem:fact_knapsack}(1)(2) still hold in \Call{ModifiedSimple}{}. 
Moreover, \Call{ModifiedSimple}{} terminates in $O(\varepsilon^{-1}n)$ time.
%
%

In the following discussion, let $Y$ be the final output set of \Call{ModifiedSimple}{}, $Y'$ the set in the beginning of the last round, and $T'$ be the elements added in the last round, i.e., $Y = Y' \cup T'$. 
We now give two different bounds  on $f(Y)$.
The proof is identical to Lemmas~\ref{lem:c_size} and \ref{lem:fact_knapsack}(3), where the first one is a stronger bound obtained in the proof of Lemma~\ref{lem:c_size}.

\begin{lemma} \label{lem:d_size} 
\begin{enumerate}
\item $f(Y) \geq \left(1 - \left(1-\frac{c(T')}{\kn} \right) e^{-\frac{c(Y')}{\kn}} - O(\varepsilon)\right)v.$
\item $f(Y) \geq \left(1- e^{-\frac{c(Y)}{\kn}} - O(\varepsilon)\right)v.$ 
\end{enumerate}
\end{lemma}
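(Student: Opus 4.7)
The plan is to mimic the inductive proof of Lemma~\ref{lem:c_size} (whose knapsack counterpart is Lemma~\ref{lem:fact_knapsack}(3)), observing first that \Call{ModifiedSimple}{} differs from \Call{Simple}{} only in the early-termination check at Lines 4--5; whenever that check does not fire, the round performs exactly the same threshold-based updates with $W=K$. Consequently the per-round invariants in Lemma~\ref{lem:fact_knapsack}(1),(2) hold verbatim for every completed round of \Call{ModifiedSimple}{}, and in particular for the last round that produces the output $Y=Y'\cup T'$ analyzed here.

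To prove part~2, I would simply repeat the induction-on-rounds argument of Lemma~\ref{lem:c_size}, with cardinality replaced by knapsack weights and $W=K$, using Lemma~\ref{lem:fact_knapsack}(1) in place of Lemma~\ref{lem:fact_size}(1). This gives exactly the inequality $f(Y)\ge(1-e^{-c(Y)/K}-O(\varepsilon))v$, which is the analog already stated as Lemma~\ref{lem:fact_knapsack}(3).

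To prove part~1, I would stop the final step of that same induction one line earlier. Concretely, in the chain of inequalities
\begin{align*}
f(Y) &\ge f(Y')\Bigl(1-\tfrac{c(T')}{K}\Bigr) + (1-\varepsilon)\tfrac{c(T')}{K}v \\
 &\ge \Bigl(1-\left(1-\tfrac{c(T')}{K}\right)e^{-c(Y')/K}\Bigr)v - O(\varepsilon)v,
\end{align*}
obtained by plugging the inductive bound $f(Y')\ge(1-e^{-c(Y')/K}-O(\varepsilon))v$ into the threshold identity $f(Y)\ge f(Y')+\alpha c(T')$, I would simply not invoke the further estimate $1-c(T')/K\le e^{-c(T')/K}$. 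This yields exactly the stronger bound in part~1, and part~2 follows from it by applying that last estimate.

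The only nontrivial point to double-check is that the induction is legitimately applicable to \Call{ModifiedSimple}{}: if the procedure reaches the terminating round at all, then in every previous round the Lines 4--5 check failed, so each prior round is structurally identical to a round of \Call{Simple}{} and satisfies the inductive hypothesis $f(S)\ge(1-e^{-c(S)/K}-O(\varepsilon))v$. I do not expect any real obstacle here; the argument is a direct transcription of the earlier proof with careful bookkeeping of the intermediate inequality.
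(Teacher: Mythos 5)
Your proposal is correct and follows exactly the route the paper intends: the paper's proof is simply the remark that the argument is identical to Lemmas~\ref{lem:c_size} and \ref{lem:fact_knapsack}(3), with part~1 being the intermediate inequality in that induction before the estimate $1-c(T')/K\le e^{-c(T')/K}$ is applied. Your additional check that the Lines~4--5 test not firing makes each completed round of \Call{ModifiedSimple}{} identical to a round of \Call{Simple}{} is precisely the (implicit) bookkeeping the paper relies on.
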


To avoid triviality, we assume that $f(Y)< 0.5 v$.
Then we may assume that $c(Y) \leq 0.7 \kn$, 
as otherwise, Lemma~\ref{lem:d_size}(2) immediately implies that $f(Y) \geq (0.5-O(\varepsilon))v$~(cf. Corollary~\ref{cor:c1_lb}). 

\begin{lemma} Suppose that $f(Y) < 0.5v$. Then for any $j$, we have $f(o_j \mid Y) \leq \left(e^{-\frac{c(Y')}{\kn}} -0.5\right)v$. 
\label{lem:boundingMarginalo1o2}
\end{lemma}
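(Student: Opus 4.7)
The plan is to exploit the early-termination test at lines~4--5 of \Call{ModifiedSimple}{}. Under the hypothesis $f(Y)<0.5v$, the procedure cannot have returned at line~5, so it must have exited through the \textbf{Until} condition $c(S)>(1-\oc_1)K$. In particular, at the beginning of the last round, when $S=Y'$, the test at line~4 failed: there is no $e\in E$ with $c(Y'+e)\le K$ and $f(Y'+e)\ge 0.5v$.

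Next I will verify that each $o_j$ qualifies as a candidate for this failed test. Since $Y'$ is the value of $S$ at the start of the last round, it coincides with the value at the end of the previous round (or $Y'=\emptyset$ if the last round is also the first), so the \textbf{Until} condition had not yet fired there; hence $c(Y')\le (1-\oc_1)K$. Combined with $c(o_j)\le c(o_1)\le \oc_1 K$, this gives $c(Y'+o_j)\le K$. Therefore the failed line~4 check yields
\[
f(Y'+o_j) < 0.5v, \qquad \text{i.e.,}\qquad f(o_j\mid Y') < 0.5v - f(Y').
\]
By submodularity and $Y'\subseteq Y$ we obtain $f(o_j\mid Y)\le f(o_j\mid Y')$.

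Finally I will invoke the lower bound on $f(Y')$ coming from the threshold-based rounds that produced $Y'$. Since \Call{ModifiedSimple}{} uses the same threshold rule as \Call{Simple}{}, Lemma~\ref{lem:fact_knapsack}(3) (with $W=K$), which is exactly Lemma~\ref{lem:d_size}(2), applies at the end of the second-to-last round and gives
\[
f(Y')\ge \left(1 - e^{-c(Y')/K} - O(\varepsilon)\right)v.
\]
Substituting into the previous inequality yields
\[
f(o_j\mid Y) \le \left(e^{-c(Y')/K} - 0.5 + O(\varepsilon)\right)v,
\]
which is the claimed bound (with the $O(\varepsilon)$ slack being absorbed as usual).

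No step is really hard; the only point that requires care is the bookkeeping around $Y'$: confirming that (i) $Y'$ is well defined under the hypothesis (the algorithm did enter a last round, possibly with $Y'=\emptyset$), (ii) $c(Y')\le (1-\oc_1)K$ at the beginning of that round so that $Y'+o_j$ is feasible for the early-termination test, and (iii) Lemma~\ref{lem:fact_knapsack}(3) can legitimately be applied to $Y'$ in spite of the extra lines~4--5 in \Call{ModifiedSimple}{}, which do not affect the analysis because they only trigger an early \textbf{return}, not a modification of $S$.
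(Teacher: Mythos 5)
Your proof is correct and follows essentially the same route as the paper's: submodularity to pass from $Y$ to $Y'$, the failed line-4 test in the last round (valid since $c(Y')\le(1-\oc_1)K$ and $c(o_j)\le\oc_1 K$) to get $f(o_j\mid Y')<0.5v-f(Y')$, and Lemma~\ref{lem:d_size}(2) applied to $Y'$ to bound $f(Y')$ from below. Your extra bookkeeping (the $Y'=\emptyset$ case and the explicit $O(\varepsilon)$ slack, which the paper's statement silently absorbs) only makes the argument slightly more careful than the paper's.
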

\begin{proof} By submodularity, $f(o_j \mid Y) \leq f(o_j \mid Y')$. As $c(Y')<(1-\oc_1)K$ and $f(Y) < 0.5v$, in the last round, Lines~4--5 imply that every item $e$, 
including $o_j$, has $ f(e \mid Y') \leq 0.5v - f(Y') \leq \left(e^{-\frac{c(Y')}{\kn}} -0.5\right)v$, where the last inequality follows by Lemma~\ref{lem:d_size}(2). 
\end{proof}

\begin{lemma}\label{lem:boundF}
If $f(Y)< 0.5v$ and $c_2\leq 0.37$, then it satisfies the following.
\begin{description}
\item[Case 1:] If $(1-\oc_2)K\geq c(Y)\geq (1-\oc_1)K$ then  $f(\opt - o_1 \mid Y) \geq 0.693v - f(Y)$. 
\item[Case 2:] If $c(Y)\geq (1-\oc_2)K$ then $f(\opt - o_1 -o_2 \mid Y) \geq 0.54v - f(Y)$. 
\item[Case 3:] If $c(Y)\geq (1-\oc_3)K$ then $f(\opt - o_1 -o_2 -o_3 \mid Y) \geq 0.567v - f(Y)$. 
\end{description}
\end{lemma}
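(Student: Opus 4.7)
The proof treats the three cases separately, all starting from the standing assumption $f(Y) < 0.5v$ together with the bounds $f(o_1), f(o_2) \leq 0.307v$ supplied by Corollary~\ref{cor:boundf1}(2)--(3). In every case the plan is to marry a submodularity identity for $f(\opt - Z \mid Y)$ with (a) the monotonicity bound $f(\opt - Z) \geq v - \sum_{o \in Z} f(o)$, (b) Lemma~\ref{lem:boundingMarginalo1o2}'s bound $f(o_j\mid Y) \leq f(o_j\mid Y') \leq 0.5v - f(Y')$, (c) the two estimates of Lemma~\ref{lem:d_size}, and (d) the threshold inequality of Lemma~\ref{lem:fact_knapsack}(2) inside \Call{ModifiedSimple}{}.

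Case~1 is essentially pure monotonicity: $f((\opt - o_1)\cup Y) \geq f(\opt - o_1) \geq v - f(o_1) \geq 0.693v$, so $f(\opt - o_1 \mid Y) \geq 0.693v - f(Y)$. The hypotheses on $c(Y)$ and $c_2$ are not needed here.

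For Case~2 the naive monotonicity bound gives only $0.386v - f(Y)$, so I chain submodularity with Lemma~\ref{lem:boundingMarginalo1o2}:
\[
f(\opt - o_1 - o_2 \mid Y) \geq f(\opt \mid Y) - f(o_1\mid Y) - f(o_2\mid Y) \geq (v - f(Y)) - 2(0.5v - f(Y')),
\]
which rearranges to $f(\opt - o_1 - o_2 \mid Y) + f(Y) \geq 2 f(Y')$. By Lemma~\ref{lem:d_size}(2) applied to $Y'$ it suffices to show $c(Y')/K \geq -\ln 0.73 \approx 0.315$. To derive this, I use $f(Y) < 0.5v$ and Lemma~\ref{lem:d_size}(1) to get $(1 - c(T')/K)\, e^{-c(Y')/K} > 0.5$, and the Case~2 hypothesis to get $c(Y')/K + c(T')/K \geq 1-\oc_2$. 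Eliminating $c(T')/K$ gives $0.5\, e^{c(Y')/K} - c(Y')/K < \oc_2 \leq 0.37$, and the convex function $t\mapsto 0.5e^t - t$ (decreasing on $[0,\ln 2]$ with value $0.37$ at $t\approx 0.315$) then forces $c(Y')/K \geq 0.315$ in the admissible range $c(Y')/K \leq 1-\uc_1\leq 0.7$.

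For Case~3 the same scheme applies with $|Z|=3$. Since $c(Y')\leq (1-\oc_1)K$ and $c(o_j)\leq c(o_1)$, Lemma~\ref{lem:boundingMarginalo1o2} is valid for each $o_j$; summing three times gives $f(\opt - o_1 - o_2 - o_3 \mid Y) + f(Y) \geq 3f(Y') - 0.5v$, reducing the target to $c(Y')/K \geq 0.44$. The same combination of $f(Y) < 0.5v$, Lemma~\ref{lem:d_size}(1) and the stronger Case~3 hypothesis $c(Y)\geq (1-\oc_3)K$ yields $0.5e^{c(Y')/K} - c(Y')/K < \oc_3$, and because $\oc_3$ is small (it is bounded above by $\oc_2\leq 0.37$ but can be much smaller) this forces $c(Y')/K$ to lie in a narrow interval around $\ln 2$ once $\oc_3$ is below $\approx 0.336$. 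This is the main obstacle: when $\oc_3$ is close to the upper bound $0.37$, the lower bound on $c(Y')/K$ produced by this argument is only $\approx 0.315$ and fails to deliver $f(Y')\geq 0.356v$. To close the remaining gap I will sharpen the bound on $f(o_3\mid Y)$ itself via the algorithm's threshold: if $o_3\notin Y$ then Lemma~\ref{lem:fact_knapsack}(2) gives $f(o_3\mid Y) \leq \alpha c(o_3) \leq e^{-c(Y')/K}\,\oc_3\, v$ (using $\alpha \leq (v-f(Y'))/K$ and Lemma~\ref{lem:d_size}(2) for $Y'$), while if $o_3\in Y$ the problem collapses to a Case~2 inequality under the stronger Case~3 hypothesis. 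Using $c_2\leq 0.37$ (and hence $\oc_3$ small) to balance these two subcases should yield the claimed $0.567v$.
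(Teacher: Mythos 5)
Your Cases 1 and 2 are correct and, modulo packaging, follow the paper's own route: Case 1 is the same one-line monotonicity bound using $f(o_1)\leq 0.307v$ from Corollary~\ref{cor:boundf1}, and in Case 2 your elimination of $c(T')/K$ from Lemma~\ref{lem:d_size}(1) together with $c(Y')+c(T')\geq (1-\oc_2)K$ is an algebraic rearrangement of the paper's Claim~\ref{clm:tprimenottoobig1} (which bounds $c(T')<0.315K$ and then sets $c(Y')\geq 0.63K-0.315K$), while your bound $f(o_j\mid Y)\leq 0.5v-f(Y')\leq (e^{-c(Y')/K}-0.5)v$ is exactly Lemma~\ref{lem:boundingMarginalo1o2}. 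The razor-thin numerical margin (your crossing point $\approx 0.3152$ versus the needed $-\ln 0.73\approx 0.3147$) and the suppressed $O(\varepsilon)$'s are no worse than in the paper's own computation, so I do not count them against you.

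Case 3, however, has a genuine gap. The fact you are missing is elementary: since $c(o_1)\geq c(o_2)\geq c(o_3)$ and $c(\opt)\leq K$, one always has $c_3\leq 1/3$, hence $\oc_3\leq (1+\varepsilon)/3$; this, and not $c_2\leq 0.37$, is what the paper exploits (it forces $c(Y)\geq (2/3-O(\varepsilon))K$, Claim~\ref{clm:tprimenottoobig2} then gives $c(Y')\geq 0.44K$, whence $f(o_j\mid Y)\leq 0.144v$ and $1-3\cdot 0.144=0.568$). With this observation your own scheme closes Case 3 with room to spare: your inequality $0.5e^{a'}-a'<\oc_3\leq 1/3+O(\varepsilon)$, with $a'=c(Y')/K\leq 1-\oc_1\leq 0.7$, forces $a'\gtrsim 0.45$ on the decreasing branch (or $a'\geq \ln 2$ otherwise), comfortably above the $0.44$ needed for $3f(Y')\geq 1.067v$; so the scenario you worry about ($\oc_3$ near $0.37$) simply cannot arise and no patch is needed. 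The patch you sketch is, moreover, not sound as stated: Lemma~\ref{lem:fact_knapsack}(2) applies only to items rejected by the threshold test, whereas in the last round $o_3$ may be skipped because $c(S_e+o_3)>K$ (the running set can reach $0.7K>K-c(o_3)$ when $c(o_3)$ is close to $K/3$), in which case no bound $f(o_3\mid Y)\leq \alpha c(o_3)$ follows; and the step ``$c_2\leq 0.37$ and hence $\oc_3$ small'' is not a derivation, with the final balancing left as ``should yield''. As written, Case 3 is therefore unproved; inserting the bound $c_3\leq 1/3$ repairs it immediately, after which either your argument or the paper's goes through.
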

\begin{proof}
\textbf{Case 1:} follows immediately, as $f(o_1)\leq 0.307 v$ by Corollary~\ref{cor:boundf1} (2).

\smallskip
\noindent
\textbf{Case 2:}
Since $\overline{c}_2 \leq 0.37$, in this case, we can assume that $0.63\kn \leq c(Y)$. 

\begin{claim} 
If $c(T')\geq 0.315 \kn$, then $f(Y) \geq (0.5- O(\varepsilon))v$. 
\label{clm:tprimenottoobig1}
\end{claim}
\begin{proof} We write $c(Y)/\kn =a$ and $c(T')/\kn = b$. Then Lemma~\ref{lem:d_size}(1) implies that 

$$ f(Y) \geq (1- (1-b) e^{-(a-b)} - O(\varepsilon))v. $$  

We lower-bound the function $h(a,b) = 1- (1-b) e^{b-a}$ as follows. As $\frac{\partial h }{\partial a }, \frac{\partial h }{\partial b } \geq 0$, we 
plug in the lower bound of $a$ and $b$ 
into $h$. By assumption, $b \geq 0.315$; $a= c(Y)/\kn \geq 0.63$. Then 

$$ h(a,b) \geq 1 - 0.685 e^{ -0.315}\geq 0.50.$$ 
The proof follows. 
\end{proof}


By Claim~\ref{clm:tprimenottoobig1}, we may assume that $c(T')< 0.315K$.
This implies that $c(Y')\geq c(Y)-c(T') > 0.315K$.
Hence, by Lemma~\ref{lem:boundingMarginalo1o2}, it holds that $f(o_1 \mid Y)$, $f(o_2 \mid Y) < \left(e^{-0.315}-0.5\right)v\leq 0.2297 v$.
Therefore, $f(\optre \mid Y) \geq 0.54v - f(Y)$ holds as $f(\optre \mid Y) \geq f(\opt\mid Y) - f(o_1\mid Y) - f(o_2\mid Y)$ and $f(\opt \mid Y) \geq v - f(Y)$. 

\smallskip
\noindent
\textbf{Case 3:}
We can prove it in a similar way to Case~2.
Since $\overline{c}_3 \leq 1/3$, in this case, we can assume that $2/3\kn \leq c(Y) \leq 0.7\kn$. 

\begin{claim} 
If $c(T')\geq 0.22 \kn$, then $f(S) \geq (0.5- O(\varepsilon))v$. 
\label{clm:tprimenottoobig2}
\end{claim}
\begin{proof} We write $c(Y)/\kn =a$ and $c(T')/\kn = b$. Then Lemma~\ref{lem:d_size}(1) implies that 

$$ f(Y) \geq (1- (1-b) e^{-(a-b)} - O(\varepsilon))v. $$  

In a similar way to Claim~\ref{clm:tprimenottoobig1}, we lower-bound the function $h(a,b) = 1- (1-b) e^{b-a}$ by setting $b = 0.22$ and $a= c(Y)/\kn = 2/3$. Then 

$$ h(a,b) \geq 1 - 0.78 e^{ -(2/3-0.22)}\geq 0.50.$$ 
Thus the proof follows. 
\end{proof}

By Claim~\ref{clm:tprimenottoobig2}, we see that $c(T') < 0.22K$.
This implies that $c(Y')\geq c(Y)-c(T')\geq 2/3-0.22 > 0.44$.
Hence, by Lemma~\ref{lem:boundingMarginalo1o2}, it holds that $f(o_j \mid Y) < 0.144 v$ for $j=1,2,3$.
Therefore, $f(\opt - o_1-o_2-o_3\mid Y) \geq 0.567v - f(Y)$ holds from submodularity and the fact that $f(\opt \mid Y) \geq v - f(Y)$. 
\end{proof}

\paragraph*{Packing the remaining space.}

Let $Y$ be a set found by \Call{ModifiedSimple}{$\I; v$}.
After taking $Y$, we consider the problem~\eqref{prob:c1c2Small_2nd} to fill in the remaining space.
We approximate $\opt-o_1$,  $\opt-o_1-o_2$, and $\opt-o_1-o_2-o_3$, respectively, depending on the size $c(Y)$ of $Y$.
Recall that $c(Y)<0.7K$.

\paragraph*{Case 1: $(1 - \overline{c}_2)\kn \geq c(Y) \geq  (1-\overline{c}_1)\kn$.}

By Lemma~\ref{lem:boundF}, it holds that
\begin{equation}\label{eq:LastCase_1_bound}
f(\opt-o_1\mid Y)\geq  0.693 v-f(Y).
\end{equation}
Let $v' = 0.693 v-f(Y)$.
Define $g(\cdot) = f(\cdot \mid Y)$.
Consider the problem~\eqref{prob:c1c2Small_2nd}
to approximate $\opt-o_1$.
We set $W' = (1 - \uc_1)K$ and $K' = K-c(Y)$.

If we can find a set $\tilde{S}$ such that $c(\tilde{S})\leq K - c(Y)$ and $g(\tilde{S})\geq \kappa v'$, then $Y\cup \tilde{S}$ is a feasible set to the original instance, and it holds by Lemma~\ref{lem:d_size} and \eqref{eq:LastCase_1_bound} that 
\begin{equation}\label{eq:LastCase_1}
f(Y\cup \tilde{S})\geq \left(1-e^{-y}\right)v + \kappa \left(0.693 - \left(1-e^{-y}\right) \right)v-O(\varepsilon)v,
\end{equation}
where $y=c(Y)/K$.

We shall use Lemma~\ref{lem:largeW} to find such a set $\tilde{S}$.
Since $0.3\leq \uc_1$ and $y\leq 0.7$, the ratio $\eta$ of $W'$ and $K'$ is
\[
\eta = \frac{W'}{K'}= \frac{1-\uc_1}{1 - y}\leq\frac{0.7}{0.3}  \leq 2.5.
\]

\paragraph*{(i) $\eta\in [2, 2.5]$.}
In this case, we see that $\eta \geq 2$ if and only if
\[
y\geq \frac{1+\uc_1}{2} \geq 0.65,
\]
since $\uc_1\geq 0.3$.
It follows from Lemma~\ref{lem:largeW}~(d) that we can find a set $\tilde{S}$ such that $c(\tilde{S})\leq K - c(Y)$ and $g(\tilde{S})\geq 0.178 v'$.
Hence, since $y\geq 0.65$, \eqref{eq:LastCase_1} implies that
\[
f(Y\cup \tilde{S})\geq \left(1-e^{-y}\right)v +  0.178\cdot \left(0.693v -\left(1-e^{-y}\right)v\right)-O(\varepsilon)v\geq (0.51-O(\varepsilon))v.
\]

\paragraph*{(ii) $\eta\in [1.5, 2]$.}
We see that $\eta \geq 1.5$ if and only if
\[
y\geq \frac{0.5+\uc_1}{1.5} = \frac{1+2\uc_1}{3}.
\]
Also, since $c(Y)\geq (1- \oc_1)K$, we have
\[
y\geq \max\left\{\frac{1+2\uc_1}{3}, 1- \oc_1\right\}\geq 0.6 - O(\varepsilon),
\]
where the lower bound is achieved when both the terms are equal.
It follows from Lemma~\ref{lem:largeW}~(c) that we can find a set $\tilde{S}$ such that $c(\tilde{S})\leq K - c(Y)$ and $g(\tilde{S})\geq 0.218 v'$.
Hence, by~\eqref{eq:LastCase_1}, we obtain
\[
f(Y\cup \tilde{S})\geq \left(1-e^{-y}\right)v +  0.218\cdot \left(0.693v -\left(1-e^{-y}\right)v\right)-O(\varepsilon)v\geq (0.50-O(\varepsilon))v,
\]
as $y\geq 0.6 - O(\varepsilon)$.

\paragraph*{(iii) $\eta\in [1.4, 1.5]$.}
It means that
\[
y\geq \frac{0.4+\uc_1}{1.4} = \frac{2+5\uc_1}{7}.
\]
Also, since $c(Y)\geq (1- \oc_1)K$, we have
\[
y\geq \max\left\{\frac{2+5\uc_1}{7}, 1- \oc_1\right\}\geq \frac{7}{12} - O(\varepsilon).
\]
It follows from Lemma~\ref{lem:largeW}~(b) that we can find a set $\tilde{S}$ such that $c(\tilde{S})\leq K - c(Y)$ and $g(\tilde{S})\geq 0.283 v'$.
Hence, by~\eqref{eq:LastCase_1}, we obtain
\[
f(Y\cup \tilde{S})\geq \left(1-e^{-y}\right)v +  0.283\cdot \left(0.693v -\left(1-e^{-y}\right)v\right)-O(\varepsilon)v\geq (0.51-O(\varepsilon))v,
\]
as $y\geq 7/12 - O(\varepsilon)$.

\paragraph*{(iv) $\eta\in [1, 1.4]$.}
It follows from Lemma~\ref{lem:largeW}~(a) that we can find a set $\tilde{S}$ such that $c(\tilde{S})\leq K - c(Y)$ and $g(\tilde{S})\geq 0.315 v'$.
Hence, by~\eqref{eq:LastCase_1}, we obtain
\[
f(Y\cup \tilde{S})\geq \left(1-e^{-y}\right)v +  0.315\cdot \left(0.693v -\left(1-e^{-y}\right)v\right)-O(\varepsilon)v.
\]
This is at least $(0.5-O(\varepsilon))v$ if $y\geq 0.53$.
Thus we may suppose that $c(Y)< 0.53K$.
Since $c(Y)\geq (1-\oc_1)K$, we see $\oc_1\geq 1-0.53 = 0.47$.
Moreover, since $2.4(1-\oc_1-\oc_2) > (1-\oc_1)$, we have $\oc_2\leq 0.31$.
Hence we have that
\[
\frac{1 - c_2}{1-\uc_1}\geq \frac{1 - \oc_1}{1-\uc_1} \frac{1-\oc_2}{1-\oc_1}\geq (\deltagap)\frac{0.69}{0.5}\geq 1.38(\deltagap),
\]
as $\oc_1 \leq \cmax$.
Therefore, by Corollary~\ref{cor:Small_bound_c2}, we can find an $(0.5-O(\varepsilon))$-approximation.

\paragraph*{(v) $\eta\in [0, 1]$.}
It follows from Lemmas~\ref{lem:046small} and \ref{lem:046large} that we can find a set $\tilde{S}$ such that $c(\tilde{S})\leq K - c(Y)$ and $g(\tilde{S})\geq 0.46 v'$.
By~\eqref{eq:LastCase_1}, we have
\[
f(Y\cup \tilde{S})\geq \left(1-e^{-y}\right)v +  0.46\cdot \left(0.693v -\left(1-e^{-y}\right)v\right) -O(\varepsilon)v\geq (0.53-O(\varepsilon))v,
\]
since $y\geq 0.5$.

\medskip

Therefore, in each case, the algorithm in Lemma~\ref{lem:largeW} yields a $(0.5-O(\varepsilon))$-approximation.
The space required is $O(K\varepsilon^{-3}\log K)$ and the running time is $O(n\varepsilon^{-4}\log K)$.
Thus Lemma~\ref{lem:LastCase} holds for Case 1.

\paragraph*{Case 2: $c(Y) > (1-\overline{c}_2)\kn$.} 

We may suppose that $c(Y)\leq 0.7K$.
Since $c(Y)\geq (1-\overline{c}_2)\kn$, we have $0.3\leq \oc_2$.
Also $c(Y)\geq (1-\overline{c}_2)\kn\geq 0.63K$ holds since $\oc_2\leq 0.37$.

Define $g(\cdot) = f(\cdot \mid Y)$, and consider the problem~\eqref{prob:c1c2Small_2nd}
to approximate $\opt-o_1-o_2$.
By Lemma~\ref{lem:boundF}, it holds that
\[
g(\opt-o_1-o_2)\geq  0.54 v-f(Y).
\]
Let $v' = 0.54 v-f(Y)$.
In a way similar to Case 1, if we can find a set $\tilde{S}$ such that $c(\tilde{S})\leq K -c(Y)$ and $g(\tilde{S})\geq \kappa v'$, then $Y\cup \tilde{S}$ is a feasible set to the original instance, and it holds by Lemma~\ref{lem:d_size} that 
\begin{equation}\label{eq:LastCase_2}
f(Y\cup \tilde{S})\geq \left(1-e^{-y}\right)v + \kappa \left(0.54 - \left(1-e^{-y}\right) \right)v-O(\varepsilon)v.
\end{equation}

We denote $W' = ( 1- \uc_1 - \uc_2)K$, $K' = K-c(Y)$, and $y = c(Y)/K$.
Since $y\leq 0.7$ and $\uc_1+\uc_2\geq (1-\varepsilon)(\oc_1+\oc_2)\geq 0.6(1-\varepsilon)$, it holds that
\[
\eta = \frac{W'}{K'}\leq \frac{1-\uc_1-\uc_2}{1-y} \leq \frac{4}{3}+2\varepsilon \leq 1.5,
\]
where the last inequality follows because we may suppose that $\varepsilon \leq 1/12$.

\paragraph*{(i) $\eta >  1$.}
In this case, it holds that $y\geq \uc_1+\uc_2$.
Since $y\geq 1 - \oc_2$, we have 
\[
y\geq \max\{\uc_1+\uc_2, 1-\overline{c}_2\}\geq \frac{2}{3}-O(\varepsilon).
\]
By Lemma~\ref{lem:largeW}, we can find a set $\tilde{S}$ such that $c(\tilde{S})\leq K - c(Y)$ and $g(\tilde{S})\geq 0.315 v'$.
Hence, by~\eqref{eq:LastCase_2}, we obtain
\[
f(Y\cup \tilde{S})\geq \left(1-e^{-y}\right)v +  0.315\cdot\left(0.54 v-\left(1-e^{-y}\right)v\right) -O(\varepsilon)v \geq (0.50-O(\varepsilon))v
\]
when $y\geq 2/3 - O(\varepsilon)$.

\paragraph*{(ii) $\eta \leq 1$.}
It follows from Lemmas~\ref{lem:046small} and \ref{lem:046large} that we can find a set $\tilde{S}$ such that $c(\tilde{S})\leq K - c(Y)$ and $g(\tilde{S})\geq 0.46 v'$.
By \eqref{eq:LastCase_2}, we have
\[
f(Y\cup \tilde{S})\geq \left(1-e^{-y}\right)v +  0.46\cdot\left(0.54 v-\left(1-e^{-y}\right)v\right) -O(\varepsilon)v\geq (0.51-O(\varepsilon))v,
\]
since $y\geq 0.63$.

\medskip

Therefore, in each case, the algorithm in Lemma~\ref{lem:largeW} yields a $(0.5-O(\varepsilon))$-approximation.
The space required is $O(K\varepsilon^{-3}\log K)$ and the running time is $O(n\varepsilon^{-4}\log K)$.
Thus Lemma~\ref{lem:LastCase} holds for Case 2.

\paragraph*{Case 3: $c(Y) > (1-\overline{c}_3)\kn$.} 

In this case, we may assume that $\oc_3\geq 0.3$ since $c(Y)\leq 0.7K$, and hence $\oc_1+\oc_2+\oc_3\geq 0.9$.

Define $g(\cdot) = f(\cdot \mid Y)$, and consider the problem~\eqref{prob:c1c2Small_2nd}
to approximate $\opt-o_1-o_2-o_3$.
By Lemma~\ref{lem:boundF}, it holds that
\[
g(\opt-o_1-o_2-o_3)\geq  0.567 v-f(Y).
\]
Let $v' = 0.567 v-f(Y)$.
We set $W' = (1-\uc_1-\uc_2-\uc_3)K$ and $K' = K-c(Y)$.
Then, since $\uc_1+\uc_2+\uc_3\geq 0.9(1-\varepsilon)$, we have $W'\leq (0.1 + 0.9\varepsilon)K$.
In addition, since $c(Y)\leq 0.7K$, we see $K'\geq 0.3K$.
Since $W'\leq K'$, the algorithm in Section~\ref{sec:046} is applicable, and we can find a set $\tilde{S}$ such that $c(\tilde{S})\leq K - c(Y)$ and $g(\tilde{S})\geq 0.46 v'$.
Since $y=c(Y)/K\geq 2/3$, we obtain by Lemma~\ref{lem:boundF}
\[
f(Y\cup \tilde{S})\geq \left(1-e^{-y}\right) v +  0.46\cdot \left(0.567 v-\left(1-e^{-y}\right)v \right) -O(\varepsilon)v\geq (0.52-O(\varepsilon))v.
\]

Therefore, since the algorithm in Section~\ref{sec:046} runs in $O(K\varepsilon^{-3}\log K)$ space and $O(n\varepsilon^{-4}\log K)$ time, provided the approximated optimal value, Lemma~\ref{lem:LastCase} holds for Case 3.

\subsection{Proof of Lemma~\ref{lem:largeW}}\label{sec:ProofLargeW}

In this subsection, we prove Lemma~\ref{lem:largeW}.
Recall that $W' = \eta K'$ for some $\eta > 1$ and that $c(e)\leq K'$ for any $e\in X$.
Note that \Call{Simple}{} would work even if $\eta \geq 1$, and, by Corollary~\ref{cor:simpleratio}, \Call{Simple}{} can find a set $S$ such that
\begin{equation}\label{eq:ProofLargeW}
f(S)\geq \left(1-e^{-\frac{1-c_1}{\eta}}-O(\varepsilon)\right)v.
\end{equation}
Moreover, when $\eta \leq 1$, we can obtain a $(0.46-O(\varepsilon))$-approximate solution by \Call{LargeFirst}{} in Section~\ref{sec:046}.
This algorithm runs in $O(K'\varepsilon^{-3}\log K')$ space and $O(n\varepsilon^{-4}\log K')$ time using $O(\varepsilon^{-1})$ passes, provided the approximated optimal value $v$.

\paragraph*{(a) $\eta\in [1, 1.4]$.}
If there exists an item $e$ such that $f(e)\geq 0.315 v$, then taking a singleton with maximum return admits a $0.315$-approximation.
Thus we may assume that $f(e)\leq 0.315 v$ for any item $e\in E$.
If $c_1\leq \eta -1$, then the set $S$ in \eqref{eq:ProofLargeW} satisfies that
\[
f(S)\geq \left(1-e^{-\frac{1-c_1}{\eta}}-O(\varepsilon)\right)v\geq \left(1-e^{-\frac{3}{7}}-O(\varepsilon)\right)v\geq (0.348-O(\varepsilon)) v.
\]
Otherwise, we consider approximating $\opt-o_1$.
Since $c(\opt - o_1)\leq K' - (\eta -1)K'\leq \eta K' = W'$, we can use a $(0.46-O(\varepsilon))$-approximation algorithm in Section~\ref{sec:046}.
Since $f(\opt - o_1)\geq v - f(o_1)\geq 0.685 v$, we can find a set $S$ such that
\[
f(S)\geq 0.685 (0.46-O(\varepsilon))v \geq (0.315-O(\varepsilon))v.
\]
Thus the statement holds.

\paragraph*{(b) $\eta\in [1.4, 1.5]$.}
The proof is similar to (a).
We may assume that $f(e)\leq 0.283 v$ for any item $e\in E$.
If $c_1\leq \eta -1$, then the set $S$ in \eqref{eq:ProofLargeW} satisfies that
\[
f(S)\geq \left(1-e^{-\frac{1-c_1}{\eta}}-O(\varepsilon)\right)v\geq \left(1-e^{-\frac{1}{3}}-O(\varepsilon)\right)v\geq (0.283-O(\varepsilon)) v.
\]
Otherwise, apply a $(0.46-O(\varepsilon))$-approximation algorithm to approximate $\opt - o_1$.
Since $f(\opt - o_1)\geq v - f(o_1)\geq 0.72 v$, the ratio of the output $S$ is
\[
f(S)\geq 0.72 (0.46-O(\varepsilon))v \geq (0.331-O(\varepsilon))v.
\]
Thus the statement holds.


\paragraph*{(c) $\eta\in [1.5, 2]$.}
We will use the above argument in (a) and (b) recursively.
We may assume that $f(e)< 0.22 v$ for any $e\in E$.
If $c_1 < 0.5$, then then the set $S$ in \eqref{eq:ProofLargeW} satisfies that
\[
f(S)\geq \left(1-e^{-\frac{1-c_1}{\eta}}-O(\varepsilon)\right)v\geq \left(1-e^{-\frac{0.5}{2}}-O(\varepsilon)\right)v\geq (0.22-O(\varepsilon))v.
\]
So consider the case when $c_1\geq 0.5$.
Consider approximating $\opt-o_1$.
Since $c(\opt - o_1)\leq 2K'- 0.5K' \leq 1.5 K'$ and $f(\opt - o_1)\geq 0.78 v$, the algorithm in (b) can find a set $S$ such that
\[
f(S)\geq 0.78 (0.28 - O(\varepsilon)) v \geq (0.218 - O(\varepsilon))v.
\]
Thus the statement holds.

\paragraph*{(d) $\eta\in [2, 2.5]$.}
We may assume that $f(e)< 0.18 v$ for any item $e\in E$.
If $c_1 < 0.5$, then then the set $S$ in \eqref{eq:ProofLargeW} satisfies that
\[
f(S)\geq \left(1-e^{-\frac{1-c_1}{\eta}}-O(\varepsilon)\right)v\geq \left(1-e^{-\frac{0.5}{2.5}}-O(\varepsilon)\right)v\geq (0.18-O(\varepsilon))v.
\]
So consider the case when $c_1\geq 0.5$.
Consider approximating $\opt-o_1$.
Since $c(\opt - o_1)\leq 2.5K'-0.5K' \leq 2.0 W'$ and $f(\opt - o_1)\geq 0.82 v$, the algorithm in (c) can find a set $S$ such that
\[
f(S)\geq 0.82\cdot (0.218 - O(\varepsilon)) v \geq (0.178 - O(\varepsilon))v.
\]
Thus the statement holds.


\clearpage
\appendix
\section{Proof of Lemma~\ref{lem:Assumption1}}

We discuss how to obtain a $(0.5 - O(\varepsilon))$-approximation when $c_1 + c_2$ is almost 1.

\begin{claim}\label{clm:appendix1}
Suppose that $f(o_1+o_2)\geq v'$.
We can find a set $S$ using two passes and $O(\varepsilon^{-1}K)$ space such that $|S|=2$ and 
\[
f(S)\geq \left(\frac{2}{3}-\varepsilon\right)v'.
\]
\end{claim}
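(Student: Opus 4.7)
The plan is to run a two-step greedy algorithm in exactly two streaming passes. In the first pass I scan the stream and record $a_1 := \argmax_{e \in E} f(e)$, keeping only the current champion. In the second pass, with $a_1$ fixed, I scan again and record $a_2 := \argmax_{e \in E \setminus \{a_1\}} f(\{a_1,e\})$. The output is $S := \{a_1, a_2\}$, which satisfies $|S|=2$. Each pass needs only $O(1)$ memory (the current best element together with its $f$-value), so the space is trivially within the $O(\varepsilon^{-1}K)$ budget.

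For the analysis, I first use submodularity to get $f(o_1) + f(o_2) \geq f(o_1+o_2) \geq v'$, so $f(a_1) \geq \max\{f(o_1), f(o_2)\} \geq v'/2$. Since $a_2$ maximizes $f(\{a_1, \cdot\})$ over the stream, the output satisfies $f(S) \geq f(\{a_1, o_i\})$ for both $i \in \{1,2\}$. Applying submodularity to the pair $\{a_1, o_1\}$, $\{a_1, o_2\}$ (whose union is $\{a_1, o_1, o_2\}$ and intersection is $\{a_1\}$), together with monotonicity, yields
\[
f(\{a_1, o_1\}) + f(\{a_1, o_2\}) \geq f(\{a_1, o_1, o_2\}) + f(a_1) \geq v' + f(a_1),
\]
so $f(S) \geq (v' + f(a_1))/2$. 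Combining with the trivial bound $f(S) \geq f(a_1)$, I obtain
\[
f(S) \geq \max\bigl\{f(a_1),\; (v'+f(a_1))/2\bigr\} \geq 3v'/4,
\]
where the minimum over $f(a_1) \geq v'/2$ is attained at $f(a_1) = v'/2$. Since $3v'/4 \geq (2/3-\varepsilon)v'$ for any $\varepsilon \geq 0$, the claim follows.

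This is essentially the textbook Nemhauser--Wolsey guarantee for cardinality-$2$ greedy; the only thing to verify is that the two greedy steps really translate into two streaming passes with constant memory, which they do because each step is a single streaming $\max$. I do not foresee any substantive obstacle: the $(2/3-\varepsilon)$ target leaves a comfortable margin over the tight $3/4$ bound from greedy, which is why no $\varepsilon$-loss ever needs to enter the argument.
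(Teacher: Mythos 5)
There is a genuine gap: your set $S=\{a_1,a_2\}$ is not required to be feasible, i.e.\ you never enforce $c(a_1)+c(a_2)\leq K$, and feasibility is the entire point of Claim~\ref{clm:appendix1}. The claim is invoked inside the proof of Lemma~\ref{lem:Assumption1} to produce a \emph{feasible} $(0.5-O(\varepsilon))$-approximate solution in the regime where $c(o_1)+c(o_2)$ is close to $K$; a pair of value $(2/3-\varepsilon)v'$ that violates the knapsack constraint is useless there. Without the size constraint the statement would indeed be the textbook cardinality-two greedy bound (your $3/4$ argument is fine as far as it goes), but then the claim would not need two passes with $O(\varepsilon^{-1}K)$ space at all --- the unusual space bound is itself a signal that the intended algorithm must hedge over item sizes. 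Moreover, the naive repair fails: if in the second pass you only consider items $e$ with $c(a_1)+c(e)\leq K$, the guarantee collapses, because the unconstrained maximizer $a_1$ may have size so large (e.g.\ $c(a_1)=K$) that nothing fits beside it, leaving you only $f(a_1)$, which can be as small as roughly $v'/2<(2/3-\varepsilon)v'$; and $a_1$ need not be size-compatible with either $o_1$ or $o_2$, so you cannot compare $f(S)$ with $f(\{a_1,o_i\})$ anymore.

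The paper's proof is built precisely around this difficulty. For every threshold $t=1,\dots,K/2$ it defines the red items $E_t=\{e\mid t\leq c(e)\leq K-t\}$, noting that $o_1\in E_{c(o_2)}$ since $c(o_1)+c(o_2)\leq K$. In the first pass it runs the single-pass replacement routine of Theorem~\ref{thm:redItems} on each $E_t$ simultaneously, storing candidate sets $X_t$ of size $O(\varepsilon^{-1})$ (hence $O(\varepsilon^{-1}K)$ space in total), with the guarantee that for $t\leq c(o_2)$ some $e^\ast\in X_t$ satisfies $f(o_2+e^\ast)\geq(2/3-O(\varepsilon))v'$ and, crucially, $c(e^\ast)\leq K-t\leq K-c(o_2)$. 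In the second pass, for each arriving $e$ it checks all $e'\in X_{c(e)}$ with $c(e+e')\leq K$; taking $e=o_2$ shows a feasible pair of value $(2/3-O(\varepsilon))v'$ exists and is found. Note also that the $2/3$ in the claim comes from this replacement theorem under the feasibility restriction, not from slack in a $3/4$ greedy bound, so your concluding remark that the target ``leaves a comfortable margin'' rests on the same misunderstanding. To fix your proposal you would have to incorporate the size-class guessing and the replacement argument, i.e.\ essentially reproduce the paper's proof.
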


We begin by reviewing 
the algorithm\footnote{This theorem is essentially a rephrasing of Theorem~\ref{thm:APPROX}.} in~\cite{APPROX17}. 

\begin{theorem} 
Let $E_{\rm R} \subseteq E$ be a subset of the ground set (and we call $E_{\rm R}$ \emph{red items}). 
Let $X\subseteq E$ such that $v \leq f(X) \leq (1+ \varepsilon)v$.
Assume that  there exists $x\in X\cap E_{\rm R}$ such that $\underline{\tau} v \leq f(x) \leq \overline{\tau}v$.
Then we can find a set $Y \subseteq E_{\rm R}$ of red items, in one pass and $O(n)$ time, with $|Y|  = O(\log_{1+\varepsilon} \frac{\overline{\tau}}{ \underline{\tau}})$ such that some item $e^*$ in $Y$ satisfies $f(X - x + e^*) \geq (2/3- O(\varepsilon)) v$.
\label{thm:redItems}
\end{theorem}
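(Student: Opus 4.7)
The plan is to derive Theorem~\ref{thm:redItems} from Theorem~\ref{thm:APPROX} by guessing $f(x)$ on a geometric grid covering $[\utau,\otau]$. The crucial observation is that the piecewise function $\Gamma$ of Theorem~\ref{thm:APPROX} satisfies $\Gamma(t) \geq 2/3$ for every $t \in [0,1]$: indeed, $\Gamma \equiv 2/3$ on $[0.5,1]$, while on $[0,0.5]$ the function is monotonically decreasing and equals exactly $2/3$ at $t = 0.5$ (e.g., $5/6 - 0.5/3 = 2/3$). Hence, once we correctly guess the $(1+\varepsilon)$-multiplicative bucket containing $f(x)$, a single invocation of Theorem~\ref{thm:APPROX} already yields the desired $(2/3 - O(\varepsilon))v$ bound.

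Concretely, I would set $\tau_j := \utau(1+\varepsilon)^j$ for $j = 0,1,\dots,J$ with $J := \lceil \log_{1+\varepsilon}(\otau/\utau) \rceil$. Any $f(x) \in [\utau v, \otau v]$ then satisfies $\tau_{j^\ast}v/(1+\varepsilon) \leq f(x) \leq \tau_{j^\ast}v$ for some index $j^\ast$. For each $j$ I run in parallel a single-pass instance of \textsc{PickNiceItem} with threshold parameter $\tau_j$, trivial size bounds $(\underline{c},\overline{c}) := (0,1)$, and with the stream restricted to the red items in $E_{\rm R}$ (non-red items are simply ignored by every instance). Let $Y_j$ denote its output, so $|Y_j| = O(1)$ by Theorem~\ref{thm:APPROX}. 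I then return $Y := \bigcup_j Y_j$, of cardinality $|Y| = O(J) = O(\log_{1+\varepsilon}(\otau/\utau))$, collected in a single pass over the stream with $O(n)$ total oracle queries (each arriving element is inspected once and compared against the running bests of the $O(J)$ parallel instances).

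For correctness, fix the index $j^\ast$ above. Theorem~\ref{thm:APPROX} applied with $\tau := \tau_{j^\ast}$ guarantees some $e^\ast \in Y_{j^\ast} \subseteq Y$ satisfying
\[
f(X - x + e^\ast) \;\geq\; \Gamma(f(x))\,v - O(\varepsilon)v \;\geq\; \left(\tfrac{2}{3} - O(\varepsilon)\right) v,
\]
where the second inequality uses $\Gamma(f(x)) \geq 2/3$ from the observation above.

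The principal obstacle is verifying that Theorem~\ref{thm:APPROX} continues to apply once we (a) dissolve its size constraint to $(\underline{c},\overline{c})=(0,1)$ and (b) restrict the ground set to $E_{\rm R}$. For (a), the $\Gamma$-guarantee is a purely analytic lower bound on $f(X - x_1 + e^\ast)$ and does not rely on any size feasibility property, so trivializing the size filter only enlarges the candidate pool and cannot weaken the bound. For (b), running \textsc{PickNiceItem} on the red substream is the same as running it on ground set $E_{\rm R}$, and the hypothesis $x \in X \cap E_{\rm R}$ guarantees that the required witness element still appears in the stream, so the hypothesis of Theorem~\ref{thm:APPROX} is met. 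Both modifications are benign, and the guarantee transfers verbatim.
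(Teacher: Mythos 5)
Your reduction is sound and matches what the paper intends: the paper gives no proof of Theorem~\ref{thm:redItems} at all, only a footnote declaring it ``essentially a rephrasing'' of Theorem~\ref{thm:APPROX}, and your write-up makes that rephrasing explicit via geometric guessing of $f(x)$ over $[\utau v,\otau v]$ together with the (correct) observation that $\Gamma(t)\geq 2/3$ for every $t$, since $\Gamma$ is non-increasing and the piece $\tfrac{5}{6}-\tfrac{t}{3}$ meets the constant piece at $\Gamma(0.5)=2/3$. Two small points deserve care. First, substituting the red-item filter $E_{\rm R}$ for the size window $[\uc K,\oc K]$ is not a literal black-box invocation of Theorem~\ref{thm:APPROX}; it is legitimate only because \textsc{PickNiceItem} uses the window purely as a membership filter on candidates and on the witness $x$ --- and indeed, in the paper's one actual application (Claim~\ref{clm:appendix1}) the red set $E_t=\{e: t\leq c(e)\leq K-t\}$ is itself a size window, so there Theorem~\ref{thm:APPROX} could be invoked verbatim with $(\uc,\oc)=(t/K,(K-t)/K)$. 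Second, ``compared against the running bests of the $O(J)$ parallel instances'' as written costs $O(nJ)$ work rather than the claimed $O(n)$; to fix this, note that each arriving item needs $f(e)$ evaluated once and belongs to exactly one $(1+\varepsilon)$-bucket (consistent with Theorem~\ref{thm:APPROX}'s guarantee that every item returned by instance $j$ has $f$-value in bucket $j$), so only that single instance is updated per item.
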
 

\begin{proof}[Proof of Claim~\ref{clm:appendix1}]
For each $t=1,2,\dots, K/2$, define $E_t= \{e\in E \mid t \leq c(e) \leq K-t\}$ as the red items.
The critical thing to observe is that, if $t\leq c(o_2)$, 
we see $o_1 \in E_{c(o_2)}$.

The above observation suggests the following implementation. 
In the first pass, for each set $E_t$, apply Theorem~\ref{thm:redItems} to collect a set $X_t \subseteq E_t$ 
(apparently we can set $\overline{\tau}= 2/3$ and $\underline{\tau}=1/3$).
Since $|X_t|=O(\log_{1+\varepsilon}2)=O(\varepsilon^{-1})$, it takes $O(\varepsilon^{-1} K)$ space and $O(n)$ time in total.
Then it follows from Theorem~\ref{thm:redItems} that, for each $t$ with $t\leq c(o_2)$, there exists $e^\ast\in X_t$ such that $f(o_2+e^\ast)\geq (2/3- O(\varepsilon)) v'$ and $c(e^\ast)\leq K - c(o_2)$.
In the second pass, for each item $e$ in $E$, check whether there exists $e'$ in $X_{c(e)}$ such that $c(e+e')\leq K$ and $f(e + e') \geq (2/3-O(\varepsilon))v'$.
It follows that there exists at least one pair of $e$ and $e'$ satisfying the condition.
The second pass also takes $O(\varepsilon^{-1} K)$ space as we keep $X_t$'s.
Since $|X_t|=O(\varepsilon^{-1})$, the second phase takes $O(\varepsilon^{-1}n)$ time.
\end{proof}

Suppose that $v\leq f(\opt) \leq (1+\varepsilon) v$. If $f(o_1+o_2)\geq 0.75v$, then we are done using Claim~\ref{clm:appendix1}.
So assume otherwise, meaning that $f(\optre) \geq 0.25v$. Notice that we can also assume that 
$f(\opt - o_1)\geq 0.5v$. 
Now consider two possibilities. 

\begin{claim}
If $c_1 \geq 1- \sqrt{\varepsilon}$, then we can find a set $S$ in $O(\varepsilon^{-1})$ passes and $O(K)$ space such that $c(S)\leq K$ and $f(S)\geq (0.5-O(\varepsilon))v$.
\end{claim}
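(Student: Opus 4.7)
The plan is to combine a max-singleton check with a suitably parameterized call to \Call{Simple}{}. First, I would compute $e^{\ast}=\argmax_{e\in E}f(e)$ in a single pass using $O(1)$ extra space. If $f(e^{\ast})\geq 0.5\,v$, output $\{e^{\ast}\}$ and we are done, since this already satisfies $f(\{e^{\ast}\})\geq (0.5-O(\varepsilon))v$. Otherwise $f(o_1)\leq f(e^{\ast})<0.5\,v$, and subadditivity of a monotone submodular function (a direct consequence of $f(\emptyset)\geq 0$ and diminishing returns) gives
\[
f(\opt-o_1)\;\geq\;f(\opt)-f(o_1)\;>\;v-0.5\,v\;=\;0.5\,v.
\]

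In this second branch I would invoke $\Call{Simple}{\I;\,0.5\,v,\,\sqrt{\varepsilon}\,K}$ and return its output $S$. To apply Corollary~\ref{cor:simpleratio} with $X=\opt-o_1$, I would verify the three preconditions: (i) the case hypothesis $c_1\geq 1-\sqrt{\varepsilon}$ forces $c(X)\leq(1-c_1)K\leq\sqrt{\varepsilon}\,K=W$; (ii) $0.5\,v\leq f(X)\leq v\leq 2\cdot 0.5\,v$, so $f(X)$ is $O(1)$-sandwiched around the chosen estimate $0.5\,v$; (iii) $c(o_j)\leq K$ for every $j$ by the standing assumption on the instance. The corollary then produces a set $S$ with $K-c(o_2)<c(S)\leq K$ and
\[
f(S)\;\geq\;\Bigl(1-e^{-(K-c(o_2))/(\sqrt{\varepsilon}\,K)}-O(\varepsilon)\Bigr)\cdot 0.5\,v.
\]

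Finally, since $c(o_2)\leq c(\opt-o_1)\leq\sqrt{\varepsilon}\,K$, the exponent satisfies $(K-c(o_2))/(\sqrt{\varepsilon}\,K)\geq 1/\sqrt{\varepsilon}-1$, which tends to infinity as $\varepsilon\to 0$; in particular, $e^{-(1/\sqrt{\varepsilon}-1)}=O(\varepsilon)$ for all sufficiently small $\varepsilon$. Hence $f(S)\geq(0.5-O(\varepsilon))v$, and the $O(\varepsilon^{-1})$-pass and $O(K)$-space bounds come directly from Corollary~\ref{cor:simpleratio}. The only genuinely delicate point is the calibration of $W$: choosing $W$ too large (e.g., $W=K$) would give only the weak ratio $0.5\,(1-e^{-1})\approx 0.315$ via Lemma~\ref{lem:simpleratio}, whereas choosing $W<(1-c_1)K$ would violate the precondition $W\geq c(\opt-o_1)$ of Corollary~\ref{cor:simpleratio}. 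The hypothesis $c_1\geq 1-\sqrt{\varepsilon}$ is precisely what makes $W=\sqrt{\varepsilon}\,K$ simultaneously an upper bound on $c(\opt-o_1)$ and small enough to force the exponent to $\Omega(\varepsilon^{-1/2})$, which is what drives the error term down to $O(\varepsilon)$.
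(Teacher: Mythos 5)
Your proposal is correct and follows essentially the same route as the paper: apply Corollary~\ref{cor:simpleratio} to $X=\opt-o_1$ with estimate $0.5v$ and $W=\sqrt{\varepsilon}K$, using $c(\opt-o_1)\leq\sqrt{\varepsilon}K$ to drive the exponential error term below $O(\varepsilon)$. The only difference is your explicit max-singleton pre-pass to justify $f(\opt-o_1)\geq 0.5v$, which the paper instead assumes from the surrounding discussion preceding the claim; this is a harmless self-containment step, not a different argument.
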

\begin{proof}
Since $c_1 \geq 1- \sqrt{\varepsilon}$, we have $c(\opt - o_1) \leq \sqrt{\varepsilon} \kn$. 
Consider the problem~\eqref{eq:problem} to approximate $\opt - o_1$.
Then the largest item in $\opt - o_1$ is $c(o_2)$ which is at most $\sqrt{\varepsilon}K$.
By Corollary~\ref{cor:simpleratio}, \Call{Simple}{$\I;0.5v, \sqrt{\varepsilon}K$} can obtain a set $S$ satisfying that
\[
f(S)\geq  0.5 \left(1-e^{-\frac{1 - \sqrt{\varepsilon} }{\sqrt{\varepsilon}}}  - O(\varepsilon)\right)v\geq (0.5 - O(\varepsilon))v,
\]
where the last inequality follows because $e^{-\frac{1 - \sqrt{\varepsilon} }{\sqrt{\varepsilon}}} \leq \varepsilon$ when $\varepsilon \leq 1$.
\end{proof}

\begin{claim}
If $c_1  < 1- \sqrt{\varepsilon}$, then we can find a set $S$ in $O(\varepsilon^{-1})$ passes and $O(K)$ space such that $c(S)\leq K$ and $f(S)\geq (0.5-O(\varepsilon))v$.
\end{claim}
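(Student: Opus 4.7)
My plan is to combine \Call{Simple}{} applied to two different subproblems together with a geometric guess for $c_1$. From the earlier arguments in this appendix we may assume $f(e)<0.5v$ for every singleton $e$ and $f(\opt-o_1)\geq 0.5v$, in addition to the hypothesis $c_1<1-\sqrt{\varepsilon}$ and the standing assumption $c_1+c_2\leq 1-\varepsilon$.

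First, if $c_1\leq 1-\ln 2\approx 0.307$, then Lemma~\ref{lem:simpleratio} applied to the original instance with $W=K$ already yields $S$ with $f(S)\geq(1-e^{-(1-c_1)}-O(\varepsilon))v\geq(0.5-O(\varepsilon))v$, and we are done. So I may further assume $c_1\in(1-\ln 2,\,1-\sqrt{\varepsilon})$. This is an interval of bounded ratio, so I would guess $\underline{c}_1,\overline{c}_1$ with $\underline{c}_1\leq c_1\leq\overline{c}_1\leq(1+\varepsilon)\underline{c}_1$ by a geometric series using only $O(\varepsilon^{-1})$ guesses and hence $O(K\varepsilon^{-1})$ space in total.

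For each guess I would invoke Corollary~\ref{cor:simpleratio} by running \Call{Simple}{$\I;v/2,(1-\underline{c}_1)K$} with the original capacity $K$, aiming to approximate $\opt-o_1$. The preconditions hold because $v/2\leq f(\opt-o_1)$ and $W:=(1-\underline{c}_1)K\geq (1-c_1)K\geq c(\opt-o_1)$. The corollary produces $S$ with
\[
f(S)\geq \tfrac{v}{2}\Bigl(1-e^{-(1-c_2)/(1-\underline{c}_1)}-O(\varepsilon)\Bigr).
\]
Because $c_1+c_2\leq 1-\varepsilon$ gives $1-c_2\geq c_1+\varepsilon$, and because $\underline{c}_1\approx c_1$ up to a factor $(1+\varepsilon)$, the exponent satisfies $(1-c_2)/(1-\underline{c}_1)\geq(c_1+\varepsilon)/(1-c_1)\cdot(1-O(\sqrt{\varepsilon}))$, which in the tail regime $c_1\to 1-\sqrt{\varepsilon}$ is of order $1/\sqrt{\varepsilon}$. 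In that tail the parenthesized factor is $1-O(\varepsilon)$, so the guarantee already reaches $(0.5-O(\varepsilon))v$, paralleling the argument for Subcase~2a.

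The main obstacle is the intermediate regime, where $c_1$ is bounded away from both $1-\ln 2$ and $1-\sqrt{\varepsilon}$: there $(1-c_2)/(1-\underline{c}_1)$ is only of order~one and the single bound above falls short of $0.5v$. To close this gap, I would output the best of (i) the candidate $S$ above, (ii) \Call{Simple}{$\I;v,K$} on the original instance, (iii) the best singleton, and (iv) combinations $\{e\}\cup S'$ where $e$ is a proxy for $o_1$ obtained via the one-pass procedure of Theorem~\ref{thm:APPROX} and $S'$ is produced by \Call{Simple}{} with the function $f(\cdot\mid e)$ and budget $K-c(e)$. A case analysis on $c_1$ and $\tau=f(o_1)/v$, in the style of Lemma~\ref{lem:046large}, should then show that for every choice of parameters in the remaining window at least one of these four candidates attains $(0.5-O(\varepsilon))v$; verifying that this case split covers the full intermediate range without leaving a gap is the delicate step.
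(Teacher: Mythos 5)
There is a genuine gap, and it stems from misreading the context in which this claim sits. The claim is part of the proof of Lemma~\ref{lem:Assumption1}, i.e., the regime where $c_1+c_2$ is almost $1$ (the appendix explicitly treats this case so that the main body may afterwards assume $c_1+c_2\leq 1-\varepsilon$). You instead take $c_1+c_2\leq 1-\varepsilon$ as a standing assumption, which is exactly the case this lemma is \emph{not} about, and as a consequence you never use the one structural fact that makes the claim provable with $O(K)$ space: here $c(\opt-o_1-o_2)\leq \varepsilon K$ and every item of $\opt-o_1-o_2$ has size at most $\varepsilon K$, while $f(\opt-o_1-o_2)\geq 0.25v$ (from the earlier reduction $f(o_1+o_2)<0.75v$). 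The paper's argument exploits this density twice: first run \Call{Simple}{} with budget $\sqrt{\varepsilon}K$ to get $Y$ with $f(Y)\geq(0.25-O(\varepsilon))v$ (the exponent in Corollary~\ref{cor:simpleratio} is of order $1/\sqrt{\varepsilon}$, so the $1-e^{-\cdot}$ factor is $1-O(\varepsilon)$); since $c(Y)\leq\sqrt{\varepsilon}K<K-c(o_1)$ by the hypothesis $c_1<1-\sqrt{\varepsilon}$, either some single item added to $Y$ already reaches $0.5v$, or $f(Y+o_\ell)<0.5v$ for $\ell=1,2$ forces $f(\opt-o_1-o_2\mid Y)\geq f(Y)\geq(0.25-O(\varepsilon))v$; then a second run of \Call{Simple}{} on $g=f(\cdot\mid Y)$ with budget $K-c(Y)\geq(1-\sqrt{\varepsilon})K$ harvests another $(0.25-O(\varepsilon))v$, and $Y\cup S$ gives $(0.5-O(\varepsilon))v$.

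Your plan does not reach this conclusion. The endpoint cases you handle ($c_1\leq 1-\ln 2$ and $c_1$ near $1-\sqrt{\varepsilon}$) are fine, but you concede that in the intermediate regime the single \Call{Simple}{} bound falls short and propose a best-of-four combination ``in the style of Lemma~\ref{lem:046large}'' with the case analysis left unverified. That is precisely where the proof cannot be completed as sketched: without the almost-full-capacity structure, the combination of \Call{Simple}{}, best singleton, and \Call{LargeFirst}{}-type candidates is exactly the machinery of Section~\ref{sec:046}, and it only guarantees $0.46$--$0.49$ in general (which is why the paper needs all of Section~\ref{sec:0.5}, with far more than $O(K)$ space, to reach $0.5$). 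Also, the inequality $1-c_2\geq c_1+\varepsilon$ you invoke comes from the assumption that is unavailable here, and your geometric guessing of $\uc_1$ costs $O(K\varepsilon^{-1})$ space rather than the claimed $O(K)$. The missing idea, in short, is the two-stage packing of the tiny, dense set $\opt-o_1-o_2$ made possible by $c_1+c_2\approx 1$ together with $c_1<1-\sqrt{\varepsilon}$.
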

\begin{proof}
Consider the problem:
\begin{equation*}
  \text{maximize\ \  }f(S) \quad \text{subject to \ } c(S)\leq \sqrt{\varepsilon} K,\quad S\subseteq E,\\
\end{equation*}
to approximate $\opt - o_1 - o_1$.
Let $\I'$ be the corresponding instance.
Since $f(\opt - o_1 - o_2)\geq 0.25 v$ and $c(\opt - o_1 - o_2)\leq \varepsilon K$, Corollary~\ref{cor:simpleratio} implies that \Call{Simple}{$\I'; 0.25v, \varepsilon K$} can obtain a set $Y$ satisfying that
\[
f(Y)\geq  0.25 \left(1-e^{-\frac{\sqrt{\varepsilon} - \varepsilon}{\varepsilon}}  - O(\varepsilon)\right)v\geq (0.25 - O(\varepsilon))v,
\]
since the largest item in $\opt - o_1 -o_2$ has size at most $\varepsilon$.
After taking the set $Y$, we still have space for packing either $o_1$ or $o_2$, since $c(Y)\leq \sqrt{\varepsilon} K <K-c(o_1)$.

Define $g := f(\cdot \mid Y)$.
If some element $e$ satisfies $c(Y) + c(e) \leq \kn$ and $f(Y+e) \geq  0.5v$, then we are done.
Thus we may assume that no such element exists, implying that $f(Y+o_\ell)< 0.5v$ for $\ell =1,2$.
Hence it holds that 
\[
g(\opt -o_1)\geq g(\opt)-g(o_1)\geq \left( f(\opt)-f(Y) \right) - \left(f(Y+o_1)-f(Y)\right)\geq 0.5 v,
\]
This implies that
\[
g(\optre) \geq g(\opt - o_1) - g(o_2) \geq 0.5 v - (f(Y+o_1)-f(Y)) \geq f(Y)\geq (0.25 - O(\varepsilon))v.
\]

Consider the problem:
\begin{equation*}
  \text{maximize\ \  }g(S) \quad \text{subject to \ } c(S)\leq K - c(Y),\quad S\subseteq E,\\
\end{equation*}
to approximate $\opt - o_1 - o_1$.
Denote by $\I''$ the corresponding instance.
Since $K-c(Y)\geq (1-\sqrt{\varepsilon})K$ and $g(\optre)\geq (0.25 - O(\varepsilon))v$, 
Corollary~\ref{cor:simpleratio} implies that \Call{Simple}{$\I''; (0.25 - O(\varepsilon))v, \varepsilon K$} can obtain a set $S$ satisfying that
\[
f(S)\geq  (0.25- O(\varepsilon)) \left(1-e^{-\frac{1-\sqrt{\varepsilon} - \varepsilon}{\varepsilon}}  - O(\varepsilon)\right)v\geq
 (0.25 - O(\varepsilon))v.
\]
Therefore, $Y\cup S$ satisfies that $c(Y\cup S)\leq K$ and 
\[
f(Y\cup S) = f(Y) + g(S)\geq (0.5 - O(\varepsilon))v.
\]
\end{proof}

For a given $v$, the above can be done in $O(\varepsilon^{-1}\kn)$ space using $O(\varepsilon^{-1})$ passes. 
The total running time is $O(n \varepsilon^{-1})$.
This completes the proof of Lemma~\ref{lem:Assumption1}. 

\end{document}